\title{Comparison Graphs: a Unified Method for Uniformity Testing}
\author
{Uri Meir}
{Tel Aviv University, Israel}
{}
{}
{}
\authorrunning{U. Meir} 
\keywords{Distribution Testing; Uniformity Testing; Distributed Algorithms; Streaming Algorithms; Comparison Graphs} 
\renewcommand{\phi}{\varphi}
\newcommand{\set}[1]{\left\{#1\right\}}
\renewcommand{\st}{\medspace | \medspace}
\newcommand{\eps}{\epsilon}
\newcommand{\norm}[1]{\left\| #1 \right\|}
\newcommand{\lnorm}[2]{\left\| #2 \right\|_{#1}}
\DeclareMathOperator{\poly}{poly}
\DeclareMathOperator*{\Var}{Var}
\newcommand{\ind}[1]{\mathds{1}_{#1}}
\newcommand{\EE}{\mathbb{E}}
\newcommand{\card}[1]{\left| #1 \right|}
\renewcommand{\mathbf}{\bm}
\newcommand{\alg}[1]{\ensuremath{\mathcal{#1}}}
 \newtheorem{conjecture}[theorem]{Conjecture}
\newtheorem{observation}{Observation}
\newenvironment{lemma-repeat}[1]{\begin{trivlist}
		\item[\hspace{\labelsep}{\bf\noindent Lemma \ref{#1} }]\em }%
	{\end{trivlist}}
\newenvironment{theorem-repeat}[1]{\begin{trivlist}
		\item[\hspace{\labelsep}{\bf\noindent Theorem \ref{#1} }]\em }%
	{\end{trivlist}}
\begin{document}
\maketitle

\begin{abstract}
    Distribution testing can be described as follows: $q$ samples are being drawn from some unknown distribution $P$ over a known
    domain $[n]$. After the sampling process, a decision must be made about whether $P$ holds some property, or is far from it.
    The most studied problem in the field is arguably \emph{uniformity testing}, where one needs to distinguish the case that $P$ is uniform over $[n]$ from the case that $P$ is $\eps$-far from being uniform (in $\ell_1$). The \emph{sample complexity} of a property is the amount of necessary and sufficient samples to test for this property, and it is known to be $\Theta\left(\sqrt{n}/\eps^2\right)$ when testing for uniformity. This problem was recently considered in various restricted models that pose, for example, communication or memory constraints. In more than one occasion, the known optimal solution boils down to counting collisions among the drawn samples (each two samples that have the same value add one to the count). This idea dates back to the first uniformity tester, and was coined the name ``collision-based tester''.

    In this paper, we introduce the notion of \emph{comparison graphs} and use it to formally define a generalized collision-based tester. Roughly speaking, the edges of the graph indicate the tester which pairs of samples should be compared (that is, the original tester is induced by a clique, where all pairs are being compared).
    We prove a structural theorem that gives a sufficient condition for a comparison graph to induce a good uniformity tester.
    As an application, we develop a generic method to test uniformity, and devise nearly-optimal uniformity testers under various computational constraints. We improve and simplify a few known results, and introduce a new model in which the method also produces an efficient tester.

    The idea behind our method is to translate the computational constraints of a certain model to ones on the comparison graph, which paves the way to finding a good graph: a set of pairs that can be compared in this model, and induces a uniformity tester.
    We believe that in future consideration of uniformity testing in new models, our method can be used to obtain efficient testers with minimal effort.
\end{abstract}

\section{Introduction}
\label{sec:intro}
The field of property testing was initiated by~\cite{BLR93,RS96,GGR98} and concerns with fast probabilistic algorithms that use query access to some large structure (such as graphs, functions, distribution, etc.) in order to determines whether a specific instance belong to a subclass of possible instances (e.g., connected graphs or monotone functions) or in some sense far from it.
Specifically for distributions, as formulated in~\cite{BFRSW00}, we are given random samples from some unknown distribution, and we wish to decide with high probability (over the random samples) whether it has some property, or it is far from any distribution that does (typically we use $\ell_1$ as a distance measure).
Properties of distributions were excessively studied through the years (see~\cite{Goldreich17,Can15} for excellent surveys). Until recently, the vast majority of results were limited to the classic setting, where a \emph{single}
processor is given an oracle access, and performs the testing procedure. The measure of complexity is based solely on the number of samples, as typically the running time is polynomial in this number.

However, distribution testing can be very useful in different frameworks as well.
For example, suppose we have a sensor network taking some measurements that need to be combined in order to make a decision about the subject of these measurements, be it volcanic activity, seismic movements, or any form of radiation.

Another framework where testing is useful, is under constrained memory. One might try testing an object so big, that the number of samples needed is very large. In that scenario, even if one might endure a lengthly sampling process, storing all past samples at one given moment can be too costly. For example, imagine a large telescope collecting data on infrared radiation in a pursuit to discover new planets. These types of questions can be translated to a \emph{streaming model}, where samples come as a stream, and one wish to store only a small amount of data while reliably test for a property of the underlying distribution.

When considering the relatively similar motivations for these cases, one might even wonder about a combination of the two models, where multiple sensors spread out in some area collect samples, each of which has bounded memory. They will then need to process all available data within their memory constraints, such that by the end of each time period - they are able to report their individual findings concisely to some data center that aggregates all information in order to make an important decision.

All of these questions are inherently multi-dimensional, in the sense that many incomparable resources are in play: the number of players, the number of samples, the memory space of each sensor, and even the amount of bits communicated.
Here, we focus our efforts, for the most part, on minimizing the \emph{sample complexity} (or amount of samples per player, when multiple players are involved), with other resources given as parameters.

This specification is well-motivated by the case where we have no shortage of data we can sample from, and we wish to understand how long a sampling process should take, as a parameter of the number of sensors we use and their computational strength.
Throughout the text, we focus on the task of uniformity testing, a key problem in the field of distribution testing.

\subparagraph*{Uniformity testing.} The most studied family of problems in distribution testing is arguably \emph{identity testing}, where we want to test whether the input distribution $P$ is equal to some fixed distribution $p$, or $\eps$-far
from it (in $\ell_1$), where $\eps$ is the proximity parameter of the problem.
In the heart of these problems stands the problem of \emph{uniformity testing}, where $p=U_{\Omega}$.
One evidence for the importance of uniformity testing was shown in~\cite{DK16} and made more robust in~\cite{G16}: it is actually complete for identity testing, in the sense that testing identity to any fixed distribution $p$ can be reduced to testing uniformity instead.
On the other hand, uniformity testing is a specific case of other problems such as closeness testing and independence testing,
so showing lower bounds for uniformity testing would imply lower bounds for these problems.
The classic version of testing uniformity was settled for the case $\eps = \Omega\left(n^{-1/4}\right)$ in~\cite{Paninski08}, showing that $\Theta\left( \sqrt{n}/\eps^2 \right)$ samples are in fact sufficient and necessary. \cite{VV17} later showed this holds for all values of $\eps$.

\subparagraph*{Collision-based testers.} The problem of uniformity testing was implicitly introduced in the paper~\cite{GR00}, as a way to test the expansion of a graph: when simulating multiple short random walks, and observing the distribution of the endpoint, one can connect a uniform distribution over these endpoint to good expansion of the graph.
To solve uniformity, the \emph{collision-based} tester was introduced, where one simply counts the number of pairs of samples that have the same value. This tester was shown to have sample complexity of $\Theta\left(\sqrt{n} \cdot \poly(1/\eps) \right)$, which is turned out to be sub-optimal in terms of $\eps$.
However, several years after the question was settled, it was shown in~\cite{DGPP16} that the original collision-based tester also achieves optimal sample complexity, using a finer analysis.

The idea behind the collision-based tester is rather straightforward: when comparing two samples from a distribution, the chance of both having the same value (also referred as \emph{collision probability}) relates to the $\ell_2$ norm of the distribution. It is a well-known fact that over a fixed set $\Omega$, the uniform distribution has the minimal $\ell_2$ norm. It is also rather easy to show that any distribution that is somewhat far from uniform (in statistical distance), has a significantly larger $\ell_2$ norm. This means that each comparison of two samples is an unbiased estimator of the collision probability (having the right expectation), but with very high variance. One would need to average over many such comparisons in order to reduce the variance.

\subparagraph*{Other methods to test for uniformity}
Over the years, numerous methods to test for uniformity have been proposed. Some of them had different goals in mind, such as testing with very high confidence, or in a multiparty model where each player gets a single sample (sometimes even wishing to keep it private). These methods include counting unique element~\cite{Paninski08}, modified $\chi^2$ test~\cite{VV17}, using the empirical distance to uniformity~\cite{DGPP17}, randomly hashing samples to a smaller domain~\cite{ACT20B} and more.
In both~\cite{DGKR19,FMO18}, testers that aim to overcome different constraints relied strongly on collision counting\footnote{These testers do not count collisions per se, but they add other steps to the algorithm. The results also use a more involved analysis}. Considering it is also optimal in the classic setting, this makes collision-based testing a prime candidate for a more generic method to test uniformity, and hopefully adjust itself to different models easily.

\subparagraph*{Comparison graphs.}
The original version of the collision-based tester takes a set of samples, and use comparisons between \emph{all pairs of samples}. This is well-suited for the classic model, where one processor with no concrete limitations sees all the samples, and is able to perform all possible comparisons.

However, in more constrained models, this simple task is inherently impossible. For example, a memory-constrained tester cannot store all previous samples in order to compare them with new ones.
In the simultaneous model, where each processor holds its own set of samples, a lot of communication might be needed to compare samples that are held by different processors.\footnote{For a specific pair of samples, this would require solving equality. When observing the two \emph{sets} of samples held by the two processors, one can imagine a disjointness-type problem, which intuitively should be hard to solve accurately.}
In distributed models, such as CONGEST and LOCAL (see~\cite{FMO18}), the problem is defined where each player in a network holds a single sample from a distribution (replacing one sample by a constant amount produces similar behaviour). In these models, it is much cheaper for player to compare their samples with those of a neighboring player, than it is to make such a comparison with players that are far away (on the network topology).

To this end, we introduce the notion of \emph{comparison graphs}. A comparison graph is linked to a collision-based tester (or algorithm) as follows: the vertices of the graph are the samples given as input, and the edges are pairs of samples that are being compared.
As stated above: in the classic model this graph is typically the complete clique (all pairs of samples are compared). Under constrained models, however, very specific edges (comparisons) are allowed, whereas others are not. For example, if the sample $s_1$ is given to one player, and the sample $s_2$ is given to another player in the simultaneous model, no algorithm can presume to compare the two samples.

Equipped with the notion of comparison graphs, one can define a \emph{collision-based} tester as a couple $(G,\tau)$, where $G=(V,E)$ is the comparison graph that defines which comparisons are being made, and $\tau$ is a threshold parameter. The algorithm is defined as follows: it counts the amount of collisions observed, $Z$, and compares it to a threshold $T$ that depends on $\tau$ and the amount of comparisons made (which is $\card{E}$).

\subparagraph*{Reliable collisions-based testers}
In Section~\ref{subsec:thm_pf}, we prove a structural theorem concerning with which sets of comparisons are able to inspire a reliable test for uniformity based solely on counting collisions.
This is done by observing the comparison graph $G$. We formulate sufficient conditions in terms of the graph $G$, that guarantee it induces a good tester (when paired with the right threshold parameter $\tau$).
It turns out that two properties of a comparison graph $G$ are key: the first is the number of edges, which represents the amount of comparisons being made; the second one, somewhat surprisingly, is the number of $2$-paths in the graph $G$, which encapsulates the amount of dependencies between different comparisons being made.

Few of our testers rely on the same type of graph, that pops up multiple times, for different reasons. To this end, we formulate Lemma~\ref{lem:disj_cliques}, that specifies the required parameters for a comparison graph of this type to induce a good tester.

\subsection{Examples of comparison graphs}
\label{subsec:examples}
It is interesting that the number of samples (the measure we usually wish to minimize) does \emph{not} appear as a condition on our comparison graph directly. However, as we see later in the text, it does play a role indirectly, as simple inequalities connect the three graph quantities (see Section~\ref{subsec:graph_prop}).

Our structural theorem basically shows that any comparison graph with enough edges, but not-too-many $2$-paths induces a good uniformity tester. To better understand the meaning of this, we fix the amount of edges, $\card{E}$, and review a short list of examples for potential comparison graphs. We are interested in the interplay between the amount of vertices (samples), edges (comparisons made) and $2$-paths (dependencies created).

\subparagraph*{The clique graph.}
The standard tester is actually the full clique, comparing each possible pair of samples. In this dense graph we only need $\sqrt{\card{E}}$ vertices in order to have $\card{E}$ edges. However, many $2$-paths (and dependencies) are created along the way as well, $\Theta \left( \card{E}^{3/2} \right)$. For this specific case, it is already known the two affects can be balanced to obtain optimal (asymptotic) sample complexity.

\subparagraph*{Disjoint cliques.}
Another interesting graph (used in some sense in \cite{FMO18}) is actually a union of disjoint cliques. This graph turns out to be quite useful. For once, it makes perfect sense in a simultaneous model, where each player process her own samples, sending a short summary to the referee. Surprisingly, it arises in other models as well.

\subparagraph*{A perfect matching.}
This graph relates to taking a fresh pair of samples each time we wish to make a new comparison, which leaves us with a set of completely independent collision indicators. Indeed, in this graph there are no $2$-paths at all.
Not only this tester minimizes the dependencies -- it actually overdoes it. Doing so, it pays a price in sample complexity: the number of vertices we have is $2\card{E}$, much larger than the clique, for instance.

\subparagraph*{The star graph.}
With a fixed number of edges, this graph is actually the way to \emph{maximize} the amount of $2$-paths and dependencies -- which makes it a very poor comparison graph.
This makes perfect sense, as the tester described is equivalent to drawing one element from $P$, and comparing it to many other samples, basically assessing the probability of this element. This test can be shown to perform poorly against specific distributions. e.g., if $P$ has probability $1/n$ for \emph{most} elements. In this case we are very likely to draw such an element first, and from here on every comparison has a $1/n$ chance to show a collision. This would make $P$ indistinguishable from the uniform distribution, even though it might be very far from it in practice.

\subparagraph*{The full bipartite graph.}
Another graph that could be considered is the full bipartite graph, $G = (V_1 \sqcup V_2, V_1\times V_2)$.
Here again we have a free parameter (the size $\card{V_1}$, which determines $\card{V_2} = \card{E} / \card{V_1}$).
It ranges from a star-graph (for $\card{V_1} = 1$) to a balanced graph (for $\card{V_1} = \card{V_2} = \sqrt{E}$), where the last one functions asymptotically similarly to the clique. It appears that optimal results for these graphs can be obtained through our framework only for the balanced case.\footnote{By optimal here, we mean having ``just enough'' comparisons, but no more. Once fixing $\card{E}$, a more imbalanced graph admits more dependencies. Whenever $V_1 = \omega(\sqrt{\card{E}})$, too many dependencies are created and the theorem cannot be applied. This statement can be formalized in a similar fashion to the results in Section~\ref{sec:lower}. }
To some extent, such a tester was used in~\cite{DGKR19} to test uniformity in the streaming model, storing the first batch of samples ($V_1$) then comparing the rest of the stream ($V_2$) to this batch. Their result extends to a large range of imbalanced graphs even when the number of comparison is small. However, their tester in fact does a little more than counting collisions, as it first examines the entire sample set $V_1$ and in some cases decides to abort. This additional step is also integrated strongly in the analysis, which leads us to believe the tester would not achieve its goal without it. In Section~\ref{subsec:res_streaming_processor}, it is shown that one can test uniformity in the streaming model via collisions with no additional steps, using a whole different comparison graph.

\medskip
The examples raise another inherent question: is it always better to add edges in a graph with a fixed number of vertices (that is, after making a set of comparisons $E$, can it hurt to add more comparisons to the calculation?)
Intuitively, adding edges seems to only improve the performance of the algorithm. Proving a statement of the sort could better our understanding and point towards optimal comparison graphs under certain constraints.
In Appendix~\ref{sec:counter_example} we discuss the matter and explain why such a statement is not true, at least if one goes through our structural theorem to prove correctness of a collision-based tester.

\subsection{Models and Results}
Our main result is a method that produces well-performing uniformity testers in various models. In this paper we show a list of uniformity testers in different models, specified below. We also show limitations of our method for most of these models, which point towards a conclusion that no better comparison graphs could have been chosen (up to constant factors in the sample complexity of the induced tester). These limitation rely on a conjecture that in some sense no matter the shape of the comparison graph, enough comparisons always must be made.

We emphasize that for any model in which a lower bound is known (for any method, not necessarily collision-based testing), the testers produced by our method are optimal, up to $\poly(1/\eps)$ factors. As far as we know, no testers in the literature are tight with the current lower bounds for these specific cases. Thus, it could be the case that collision-based testing achieves optimal results (even in terms of $\eps$) for all the models we consider. A more thorough discussion is given in Section~\ref{subsec:discussion}.

Equipped with a structural theorem, proven in Section~\ref{subsec:thm_pf}, we consider various models and devise a uniformity tester in each one. The key idea here is to translate the constraints of each model into the comparisons we are able to perform, or differently put: a structural limitation on the comparison graph.
Doing so will guide us how to choose a ``good'' comparison graph for this specific model.
Having a structure in mind, two formalities are left: (i) Prove that calculation of $Z, T$ (the number of collisions, and the threshold value) can be done in the model; (ii) Calculate our desired complexity measure (which changes from model to model), and optimize the parameters of the chosen graph (e.g., if the graph is a clique, determine the size of the clique).

\subparagraph*{Standard processors.}
In Section~\ref{subsec:res_standard_processors}, we deal with the classic and the simultaneous model.
First we use the classic model as a warm-up. Since this is done in ~\cite{DGPP16}, we add to the mix a small insight: our framework (which allows the tester to choose threshold values other than $\tau = 1/2$) actually provides slightly better constants when placing the threshold much lower (at roughly $\tau = 1/9$).

For this model, it was shown by~\cite{DGPP17} that testing with high precision can be done faster than it would have using standard amplification. However, they use estimation of the empirical distance from uniform over the sample set. It is unclear whether collision-based testing is fit for this task. We do not pursue this direction here.\footnote{One reason is that our work focuses on the regime that uses new and different comparison graphs, other than the clique graph. This direction would probably involve analysis that is specific for the clique graph, where all samples can be compared with one another.}

We then move to the simultaneous case where multiple players each send a short message to the referee, based on their own samples. The referee then needs to output a decision about the underlying distribution. In this model we want to find a good exchange for the number of players, the number of samples each player gets, and the length of the messages. For example, if the messages can be arbitrarily long, each player can send her entire sample set and the problem becomes trivial.
For this reason, the two papers to first consider (independently) testing in this model, had a very different focus. In a preliminary version of~\cite{ACT20A,ACT20B}, only the case of a \emph{single sample} per player was considered, and their algorithms indeed rely on different and interesting strategies, but not collision counting -- as this strategy is irrelevant for this regime.
In~\cite{FMO18} a different approach was taken, where all messages were fixed to a \emph{single bit}, but each player gets multiple samples (and in fact, their algorithm does rely on collisions in some sense, but it does not count them accurately, and does not fall under the umbrella of our definition for collision-based testers).

For our use, as oppose to both these view, we allow both parameters to be larger. We allow multiple samples per player, and show that using a short message (not a single bit, but not much longer), one can devise an efficient tester.

\medskip

We also consider the asymmetric cost variant of this model, which naturally arise when reducing to it from the LOCAL model, a standard model in the field of distributed computing. To motivate this variant, we think of the sampling process as a bottleneck, but now each player has her own sampling rate (some players might draw samples much faster than others).
If before we tried to minimize the amount of samples per player, now we wish to minimize the overall sampling time instead, and still sample enough data to successfully test for uniformity. More about the model and the reduction from the LOCAL model can be found in~\cite{FMO18}.

We remark that many works in the simultaneous model consider communication trade-offs, when assigning only a \emph{single} sample for each party. Our method does not currently extend to this framework, although one might consider integrating it with other methods. For example, one method (e.g., in~\cite{ACT20B}) uses random hash to a smaller domain. One might consider collisions on this domain instead of the original one. these meta-collisions can be counted within the communication constraints.

Some works in this regime (single sample) focus on privacy aspects of testing (e.g.,~\cite{ACFT19,AJM20}). This line of research should be irrelevant for our method (and even the extension mentioned above), as any detection of a collision (even on a smaller domain) would immediately give away non-trivial information about the sample.

\subparagraph*{Memory-constrained processors.}
In Section~\ref{subsec:res_streaming_processor}, we deal with a different type of processors: \emph{memory-constrained} processors. When observing a single processor of this type, we end up with the streaming model (as described in~\cite{DGKR19}).
At least one scenario which motivates the simultaneous model is seemingly very coherent with such constrained processors. Thinking of a network of sensors, or remote devices, gathering samples -- it is quite comprehensible that these processors are not only limited by their ability to communicate with the data center, but also by their ability to store the entire data observed between consecutive reports (in this scenario, the data center is the referee performing the test periodically to detect anomalies).
For this reason, we also consider the case of a simultaneous model, where each processor has a small memory budget. In this new model, we easily devise again an efficient uniformity tester.

\subparagraph*{Testing in an interactive model.}
Lastly, in Section~\ref{subsec:res_congest}, we use the structural theorem to show how on certain graphs one can solve uniformity in the CONGEST model faster than what was previously known to be possible.
Specifically, if the communication network has $k$ players and diameter $D$, and each players start with one sample, the best known algorithm runs in $O(D+ n/(k\eps^4))$ rounds~\cite{FMO18}.
The improvement we suggest is an algorithm that takes $O(D)$ rounds, and works in specific networks that have a certain topological characteristics.
Moreover, we show a simple detection procedure of $O(D)$ rounds, that can be used to recognize such a good topology. This means that \emph{any} network can use the detection procedure first, and then either proceed as in~\cite{FMO18}, or switch to the faster ($O(D)$) algorithm whenever it is guaranteed to perform well.

\subsection{Related Work}
The task of uniformity testing, as well as the collision-based tester for it, were introduced in~\cite{BFRSW00} (and implicitly in~\cite{GR00}). Later on, upper and lower bounds on the sample complexity of the problem were given by \cite{Paninski08} (for most values of $\eps$), showing that the optimal sample complexity is in fact $s = \Theta\left(\sqrt{n}/\eps^2\right)$, where $n$ is the world size, and $\eps$ is a proximity parameter. A preliminary version of~\cite{VV17} giving a tester that achieves this complexity for any value of $\eps$.
The last two papers used two different testers: the first relied on the number of \emph{distinct} elements in the sample set (this test is somewhat dual to counting collisions), and the second on a modified $\chi^2$ tester. It was then shown by \cite{DGPP16} that the collision-based tester does in fact achieve optimal sample complexity too.

In the past few years there has been a growing interest in distribution testing under various computational models, including collaborative testing in multiparty model, the streaming model, privacy aspects of testing, and others (\cite{ACT20A,ACT20B,FMO18,AMN18,ACFT19,MMO19,ACLST20,ACT20,ABIS19,NMP20,GKR20,AJM20} and more). We mention in more details the works concerning uniformity testers in models we pursue.

\medskip

In a preliminary version of~\cite{ACT20A,ACT20B} and~\cite{FMO18} each, independently, the task of uniformity testing was considered in a simultaneous communication model. In this model, all players receive samples from the same global distribution $P$, and all players report to a referee based on their own samples. The referee in turn uses the reports to output (with high probability) whether $P$ holds some property.

In both lines of work, the focus was uniformity testing, but using two different perspectives: the former zeroes in on one sample per player, where the trade-off in question is between the number of bits each player is allowed in his report, and the number of players needed to the testing process. We think of the number of bits as too small to describe the sampled element fully.
In this setting, a full description of two samples is never available to a single player (nor the referee, whenever no player can describe fully the element it samples). We do note that the tester given there relies on a looser notion of collisions (Taking a coarser division of $[n]$ into subsets).

In the later, the focus is different: one now fixes instead the communication to \emph{one} bit per player. Now, the trade-off is between the number of players and the number of samples each one of them takes. The upper bound devised there discuss each player taking the right amount of samples, and notifying the referee $0$ if no collision occurred, and $1$ otherwise. In some sense, the referee ends up counting collisions (notice the count is trimmed, as one player might see more than $1$ collision, but is only able to report 0 or 1). This tester is then used as a black-box to solve uniformity testing in the classic distributed models (CONGEST, and LOCAL), in a setting where each player initially draws one sample.

One other paper to specifically discuss uniformity testing is~\cite{DGKR19}, in which a streaming version of the problem is defined, as well as another distributed version, in a blackboard model, where all players are privy to the messages sent by others. As opposed to the simultaneous models mentioned above -- here several rounds of communication are allowed. As it turns out, the streaming algorithm, as well one of two algorithms suggested for the distributed version, boil down yet again to counting collisions under the limitations of the model.

In addition to these, quite a few works had the focus of showing impossibility results both in the classic and the entire variety of models. For out interest, we mention~\cite{DK16,Paninski08} for the classic version, as well as~\cite{MMO19} for the simultaneous model (with multiple samples per machine), and ~\cite{DGKR19,ACLST20} for the streaming model.

\section{Preliminaries}
Throughout this text, we discuss uniformity testing using collision-based testers.

We let $[n]:=\set{1,2,\dots,n}$, and use $\Delta([n])$ to denote the set of distributions over the set $[n]$. As we only care about the support size (rather than the values), it is enough to consider $P\in \Delta([n])$ as possible input distributions. For a distribution $P$ over $[n]$, we write $P_i$ for the probability of the $i^{th}$ element.

An $(n,\eps)$-uniformity tester is an algorithm $\alg{A}$ that given oracle access to some unknown distribution $P\in\Delta([n])$, takes $q = q(n,\eps)$ samples from $P$ and satisfy the following:
\begin{itemize}
    \item If the input distribution is $P = U_n$, the uniform distribution over $[n]$, then $\alg{A}$ outputs YES with probability at least $3/4$.

    \item If $\norm{P - U_n} \geq \eps$, which means the distribution $P$ is $\eps$-far from uniform), then $\alg{A}$ outputs NO with probability at least $3/4$.
\end{itemize}
The distance used here is $L_1$. Meaning, for two distribution $P , Q$, the distance is $\norm{P-Q} = \sum_{i=1}^{n} \card{P_i - Q_i}$.

To formalize our notion of a \emph{collision-based} tester, we take a fresh point of view of the sampling process.
The key object in our analysis is the \emph{comparison graph}, which is simply an undirected graph $G = (V,E)$, where we think about $V$ as a set of placeholders for samples and $E$ as the pairs of samples which are chosen to be compared with one another.

\begin{definition}[Sampling process, collision indicators]
Given a comparison graph $G = (V,E)$ and an input distribution $P\in \Delta([n])$, the sampling procedure $S_P$ is described as a random labeling of the vertices according to $P$.
We denote by $S_P:V \to [n]$, the process for which $\forall i. S_P(v_i) \sim P$, independently from one another. We end up with $S_P(v_1),\dots,S_P(v_{_{\card{V}}})$ which is a set of $\card{V}$ i.i.d samples from $P$.

Moreover, for every edge $e = (u,v)$ in $E$ we define a unique indicator, we call the collision indicator and denoted by $\ind{e}^{P} := \ind{S_P(u) = S_P(v)}$.

When $P$ is clear from context, we simply write $S(u)$ for the sample associated with vertex $u$, and $\ind{e}$ for the collision indicator of the edge $e$.
\end{definition}

We are now ready to give a formal definition for a collision-based tester, which relies on a set of comparisons (not necessarily between all pairs of samples), and compares $Z$ - the amount of collisions, with some threshold value $T$.

\begin{definition}[Collision-based tester]
\label{def:cb_tester}
  Fix $n, \eps$. For any comparison graph $G = (V,E)$ and real number $0\leq \tau \leq 1$, we define the algorithm $\alg{A} = (G,\tau)$ as follows: upon receiving as input $\card{V}$ i.i.d samples from $P$ (given by $S_P(v_1),\dots,S_P(v_{_{\card{V}}})$), it computes the following:
  \begin{align*}
    Z &:= \sum_{e\in E} \ind{e}\ , &
    T &:= \card{E}\cdot \left(\frac{1+\tau\eps^2}{n}\right)\ ,
  \end{align*}
  and outputs YES if $Z < T$, and NO otherwise.
\end{definition}

The restriction $\tau\in[0,1]$ will help us deal with technicalities, but we note that it is rather intuitive. Indeed, as we will see later, only for these values the expectation of $Z$ is lower than $T$ for the good input (uniform distribution), and higher than $T$ for \emph{all} bad inputs.

Throughout, we will focus on properties of the graph and of our input distribution. We denote by $\card{E_G}$, $\card{V_G}$ the number of edges and vertices in $G$, and by $c(G)$ the number of times a 2-path appears as a subgraph in $G$. We count each 2-path twice, for its 2 automorphisms, and so we need to count ``directed'' $2$-paths (so $e_1,e_2$ and $e_2,e_1$ are both counted). Formally, we can write $c(G) =\card{\set{(u,v,w)\in \binom{V}{3} \st (u,v),(v,w) \in E}}$.

For the distribution $P$  over $[n]$, with $P_i$ for the probability of the $i^{th}$ element, we denote the collision probability $\mu_P = \sum_{i=1}^{n} P_i^2$, and the three-way collision probability $\gamma_P = \sum_{i=1}^{n} P_i^3$. For brevity, whenever $G$ or $P$ are clear from context, we simply write $\card{V}$, $\card{E}$, $\mu$, $\gamma$.

\subsection{Models of Computation}
In all our results we are concerned with distribution testing (and specifically uniformity testing), and we deal with various models.
Therefore, in the following lines we specify in which way samples are taken in each model, and in what way the answer of the algorithm needs to be declared (where a good tester is the one that outputs YES (resp. NO) with high probability whenever the samples are taken from a YES (resp. NO) distribution).

\subparagraph*{The centralized model.}
The centralized model is the classic model. In this model one processor receives all samples $s_1,\dots,s_q$ (in comparison graph notations, we think of $s_i = S(v_i)$, and $q = \card{V}$), and is tasked with outputting a proper answer according to the underlying input distribution.
The complexity measure we wish to minimize in this model is $q$, the number of samples.

\subparagraph*{The simultaneous model.}
The second model we consider is the simultaneous model, where $k$ players (processors) each draw individual samples unseen by all other players. Each player sends a short message to the referee. The referee then aggregates the messages and outputs the answer. Formally, each player is tasked with $V_i$ where the whole set of vertices in $G$ is $V = \sqcup_{i=1}^{k} V_i$. The samples of processor $i$ are then $\set{S(v)}_{v\in V_i}$. Each processor can send a message $a_i$ which is a function of its samples, and a referee receives all messages $a_1,\dots,a_k$ and outputs the answer.
The simultaneous first appeared in the context of testing independently in~\cite{FMO18} and preliminary version of~\cite{ACT20A,ACT20B}, where in the first $\card{a_i} = 1$ meaning each player is allowed to send one bit, and in the latter $\card{V_i} = 1$ meaning each player gets exactly one sample. We take the same point of view as in~\cite{FMO18}, but we remove the restriction of $1$ bit and allow a longer (but still short) message instead.

The number of players $k$ is given as a parameter, and our goal is to minimize the number of samples \emph{per player}, where all players get the same amount of samples: $q/k$ (we think of it as sort of parallelization of the sampling process).
We also consider the asymmetric-cost variant, where each player has an individual cost for each sample it draws (we think of this cost as the time it takes to draw each sample), and we wish to minimize the cost (or time) of the entire sampling process.

We also discuss the case of memory-constrained processors, also referred to as the streaming model, where this distribution testing was recently considered in~\cite{DGKR19}.

\subparagraph*{Memory-constrained processor.}
In the memory-constrained model each processor receives its samples as a stream, and once a sample is dealt with it is gone forever (this is the one-pass variant of the streaming model). A processor can only use a limited amount at each given moment, denote by $m$ (and measured by memory bits). We think of $m' = \lfloor m/(2\log n )\rfloor$ as the number of samples we can store with half the memory (we leave the other half for other calculations). The complexity measure of this model is the number of samples needed to complete the testing task.

Next we consider a new model that poses both constraints. It is a simultaneous model where each processor is memory-constrained, receiving its samples as a stream, and using its $m$ bits of memory it needs to come up with a message $a_i$ to send to the referee once all samples are seen. The referee then receives all messages and outputs an answer. We stick to the case where all processors are of the same type and therefore have the same constraints of $m$ bits.

\subparagraph*{The CONGEST Model.}
A standard model in the field of distributed computing is the CONGEST, where we have a communication network with $k$ players, and the networks runs a protocol that halts once a task is done. The complexity measure for this model is the number of communication round, and the focus is on congestions, as each communication edge can only transfer a small amount of bits at each round. In~\cite{FMO18} the task of testing in this model was considered and formulated in the following way: We have $k$ players over some communication graph, and each player is given one sample from an unknown distribution $P\in\Delta([n])$.
The players wish to run a communication protocol that ends when some player knows (with high probability) whether $P$ has some property, or is $\eps$-far from it.
We assume here that $k =\Omega(n/\eps^2)$ (as otherwise, the task becomes impossible), and we wish to minimize the amount of communication rounds, where at each round only $\Theta\left( \log n + \log k \right)$ bits can be communicated over each communication edge.

\section{Results}
\label{sec:results}
In this section we go over numerous applications of our method.
For each model we go over the same phases: we start with intuition as to which comparison graph $G$ is fit to this model, and we go on to show how one can simulate a collision-based algorithm $\alg{A} = (G,\tau)$. By simulating the algorithm we mean that by the end of the calculation, some processor will be able to compute both the number of collisions ($Z$) and the fitting threshold ($T$), so it is able to output the answer.
The last part is optimizing parameters, where first order parameters are those of $G$, and in some application we also give focus to second order parameters (choosing $\tau$).

The strength of the method comes from the following structural theorem that gives sufficient conditions for a comparison graph inducing a good uniformity tester:
\begin{theorem}\label{thm:structural}
    Fix a domain size $n$ and a proximity parameter $\eps$. If the following hold for an algorithm $\alg{A} := (G,\tau)$:
    \begin{enumerate}
        \item $\card{E} \geq \frac{4n}{\tau^2 \cdot \eps^4}$,

        \item $\card{E} \geq \frac{16n}{(1-\tau)^2 \cdot \eps^4}$, and

        \item $\frac{c(G)}{\card{E}^2} \leq \frac{(1-\tau)^2\eps^2}{16\sqrt{n}}$,
    \end{enumerate}

then $\alg{A}$ is an $\eps$-uniformity tester.

\end{theorem}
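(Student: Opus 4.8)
The plan is the standard second-moment argument on the collision count $Z = \sum_{e\in E}\ind{e}$, familiar from the analysis of the clique-based tester, but carried out while tracking the two graph parameters $\card{E}$ and $c(G)$.

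First I would compute the mean. Since the vertex labels are i.i.d.\ from $P$, each collision indicator has $\EE[\ind{e}] = \mu$ (the collision probability $\mu = \sum_i P_i^2$), so $\EE[Z] = \card{E}\,\mu$. For $P = U_n$ this is $\card{E}/n$, while for $\norm{P - U_n}\ge\eps$ the identity $\mu = 1/n + \norm{P-U_n}_2^2$ together with Cauchy--Schwarz ($\norm{P-U_n}_2 \ge \norm{P-U_n}/\sqrt n \ge \eps/\sqrt n$) gives $\mu \ge (1+\eps^2)/n$. Hence the threshold $T = \card{E}(1+\tau\eps^2)/n$ is sandwiched --- $\EE[Z] < T$ in the YES case and $\EE[Z] > T$ in every NO case, which is exactly what the restriction $\tau\in[0,1]$ is there for --- with gaps $T - \EE[Z] = \card{E}\tau\eps^2/n$ in the YES case and, writing $\delta := \mu - 1/n \ge \eps^2/n$, the bound $\EE[Z] - T = \card{E}(\delta - \tau\eps^2/n) \ge \card{E}(1-\tau)\delta$ in the NO case.

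Next I would bound the variance. Expanding $\Var[Z] = \sum_e\Var[\ind{e}] + \sum_{e\neq e'}\Cov[\ind{e},\ind{e'}]$, two edges with no common vertex use disjoint samples and are independent, while two distinct edges $(u,v)$ and $(v,w)$ sharing a vertex contribute $\EE[\ind{(u,v)}\ind{(v,w)}] - \mu^2 = \gamma - \mu^2$, where $\gamma = \sum_i P_i^3$; summing the $\card{E}$ diagonal terms ($\le\mu$ each) and the $c(G)$ ordered $2$-paths gives $\Var[Z] \le \card{E}\,\mu + c(G)(\gamma - \mu^2)$. The point for the YES case is that $\gamma = \mu^2 = 1/n^2$ when $P = U_n$, so the covariance term vanishes, $\Var[Z]\le\card{E}/n$, and Chebyshev yields $\Pr[Z\ge T] \le \Var[Z]/(T-\EE[Z])^2 \le n/(\card{E}\tau^2\eps^4) \le 1/4$ by the first hypothesis. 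For the NO case I would first establish $\gamma - \mu^2 \le \delta/n + \delta^{3/2}$: writing $P_i = 1/n + x_i$ with $\sum_i x_i = 0$, $\sum_i x_i^2 = \delta$, one has $\gamma - \mu^2 = \delta(1/n-\delta) + \sum_i x_i^3$ and $\sum_i x_i^3 \le (\max_i\card{x_i})\sum_i x_i^2 \le \delta^{3/2}$. It then suffices, splitting the variance budget, to have both $\card{E}\,\mu \le \frac18(\EE[Z]-T)^2$ and $c(G)(\gamma-\mu^2) \le \frac18(\EE[Z]-T)^2$; using $\EE[Z]-T \ge \card{E}(1-\tau)\delta$ and the crude bound $\mu\le(2/\eps^2)\,\delta$ (valid since $\delta\ge\eps^2/n$ and $\eps\le1$), the first reduces to $\card{E}\ge 16n/((1-\tau)^2\eps^4)$, the second hypothesis; and plugging $\gamma-\mu^2 \le \delta/n + \delta^{3/2}$ into the second and minimizing over $\delta\ge\eps^2/n$ (the $\delta^{3/2}$ term is binding) reduces it to $c(G)/\card{E}^2 \le (1-\tau)^2\eps/(16\sqrt n)$, which the third hypothesis implies with room to spare; Chebyshev then gives $\Pr[Z<T]\le1/4$.

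The step I expect to be the main obstacle is this NO-case variance bound, because it must hold uniformly across the whole spectrum of $\eps$-far distributions: from ``barely far but spread out'' ($\delta\approx\eps^2/n$, where both $\Var[Z]$ and $\EE[Z]-T$ are tiny and $\gamma\approx\mu^2$) to highly concentrated ones ($\delta$ as large as $\Theta(n)$, where the variance is large but the gap is large as well). The naive bound $\gamma-\mu^2 \le \gamma \le \mu^{3/2}$ is too wasteful in the first regime --- off by roughly a factor $1/\eps^3$ --- so the sharper estimate $\gamma - \mu^2 = O(\delta^{3/2} + \delta/n)$, which in effect controls the third moment of the deviation $P - U_n$ by its second moment, is exactly what makes the third hypothesis (with its $\eps^2$ and $\sqrt n$) sufficient. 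All of the graph-theoretic content is concentrated in that single covariance term; no property of $G$ beyond $\card{E}$ and $c(G)$ is used anywhere.
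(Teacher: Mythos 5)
Your proposal is correct and follows essentially the same route as the paper: exact mean $\card{E}\mu$, the variance decomposition $\card{E}(\mu-\mu^2)+c(G)(\gamma-\mu^2)$ via ordered $2$-paths, Chebyshev against the gap to $T$, and the key third-moment estimate $\gamma-\mu^2\le \delta/n+\delta^{3/2}$ for the far case. The only (immaterial) difference is bookkeeping: you split the variance budget between the $\card{E}$-term and the $c(G)$-term uniformly over all $\eps$-far $P$ (with the clean bounds $\EE[Z]-T\ge\card{E}(1-\tau)\delta$ and $\mu\le 2\delta/\eps^2$ replacing the paper's monotonicity-in-$\alpha$ calculation), whereas the paper partitions the far distributions into two classes according to which variance term dominates and handles each class with one hypothesis.
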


The proof, for the most part, is a generalization of the one used in~\cite{DGPP16} to show the original collision-based tester is optimal (in our notations, the original tester over $q$ samples is simply the algorithm $\alg{A} = (K_q,1/2)$).
While generalizing the proof we leave not one, but \emph{three} separate conditions on a general comparison graph, that together guarantee it induces an $(n,\eps)$-uniformity tester. This supplies a better, multi-dimensional understanding of how well a collision-based algorithm is guaranteed to perform.
We note that if one is willing to ignore constants, one could simply fix $\tau = 1/2$ and merge the first two conditions into one. However, interestingly for our method other values of $\tau$ (usually smaller) guarantee slightly better constants. We leave all $3$ conditions separate to maintain maximum flexibility when proving application of the theorem.

A specific comparison graph that is key to our algorithms is the one of \emph{disjoint cliques}. Indeed, our strategy will be ``perform any comparison you can'' which sometimes simply means we have several bulks of samples, where in each bulk all pairs can be compared. To this end, we give a more specific version of Theorem~\ref{thm:structural}, for graphs that have this structure:

\begin{lemma}\label{lem:disj_cliques}
    Fix $n, \eps$. Fix a comparison graph $G = \bigsqcup_{i=1}^{\ell} G_i$, where each $G_i$ is isomorphic to $K_{q}$, for some $q\geq 3$.
    If the following hold for an algorithm $\alg{A} := (G,\tau)$:
     \begin{enumerate}
            \item $q\sqrt{\ell} \geq \frac{\sqrt{12}\sqrt{n}}{\tau \cdot \eps^2}$

            \item $q\sqrt{\ell} \geq \frac{\sqrt{48}\sqrt{n}}{(1-\tau) \cdot \eps^2}$

            \item $q\ell \geq \frac{24\sqrt{n}}{(1-\tau)^2 \eps^2}$
     \end{enumerate}

    then $\alg{A}$ is an $\eps$-uniformity tester that uses $\card{V}$ samples.

\end{lemma}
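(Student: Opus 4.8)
The plan is to obtain Lemma~\ref{lem:disj_cliques} as a corollary of Theorem~\ref{thm:structural}: I would compute the three graph quantities $\card{E_G}$, $\card{V_G}$, $c(G)$ for the disjoint-clique graph and then check that hypotheses (1)--(3) of the lemma imply hypotheses (1)--(3) of the theorem.

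First I would record the parameters. Writing $G=\bigsqcup_{i=1}^{\ell}G_i$ with each $G_i\cong K_q$, we immediately get $\card{V_G}=\ell q$ (so the tester uses $\ell q$ samples) and $\card{E_G}=\ell\binom q2=\tfrac{\ell q(q-1)}{2}$. For $c(G)$ I would use that the number of ``directed'' $2$-paths equals $\sum_{v}\deg(v)\bigl(\deg(v)-1\bigr)$: choose the center of the path, then an ordered pair of distinct neighbours of it. Every vertex of a $K_q$ has degree $q-1$ and there are $\ell q$ vertices, so $c(G)=\ell q(q-1)(q-2)$, which I would rewrite as $c(G)=2(q-2)\,\card{E_G}$, since this form is what feeds condition (3).

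Next I would verify the three implications, using throughout the elementary bound $\binom q2=\tfrac{q(q-1)}{2}\ge\tfrac{q^2}{3}$, valid for every $q\ge3$ (equivalently $q-1\ge\tfrac23 q$), so that $\card{E_G}\ge\tfrac{\ell q^2}{3}$. For condition (1): hypothesis (1) of the lemma, namely $q^2\ell\ge\tfrac{12n}{\tau^2\eps^4}$, gives $\card{E_G}\ge\tfrac{\ell q^2}{3}\ge\tfrac{4n}{\tau^2\eps^4}$, which is condition (1) of the theorem; condition (2) is the same computation with $16$, $(1-\tau)^2$ in place of $4$, $\tau^2$. For condition (3) I would use the identity above: $\tfrac{c(G)}{\card{E_G}^2}=\tfrac{2(q-2)}{\card{E_G}}=\tfrac{4(q-2)}{\ell q(q-1)}\le\tfrac{4}{\ell q}$ (using $q-2\le q-1$, or alternatively $\card{E_G}\ge\ell q^2/3$ for a slightly different constant), and hypothesis (3) of the lemma --- a lower bound on $q\ell$ --- then forces this to be at most $\tfrac{(1-\tau)^2\eps^2}{16\sqrt n}$, matching condition (3) of the theorem. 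Invoking Theorem~\ref{thm:structural} then yields that $\alg A=(G,\tau)$ is an $\eps$-uniformity tester on $\card{V_G}=\ell q$ samples.

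There is no new idea here: the lemma is a specialization of the structural theorem, so the only things to be careful about are (i) counting $c(G)$ correctly --- it is the number of ordered (``directed'') $2$-paths with three distinct vertices, summed within each clique, and the degree-sum formula is the cleanest route to it --- and (ii) tracking the numerical constants through the two elementary inequalities $\binom q2\ge q^2/3$ and $q-2\le q-1$ so that the lemma's single-parameter hypotheses demonstrably imply the theorem's. I expect the constant-bookkeeping in condition (3) to be the one genuinely fiddly step, since it is the only place where the relation $c(G)=2(q-2)\,\card{E_G}$ must be combined with a lower bound on $\card{E_G}$ rather than used as an equality.
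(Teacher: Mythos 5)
Your overall route is the same as the paper's: compute or bound $\card{E}$ and $c(G)$ for the disjoint-clique graph and feed the lemma's three hypotheses into the three conditions of Theorem~\ref{thm:structural}. Conditions (1) and (2) you verify exactly as the paper does, via $\card{E}\ge \ell q^2/3$ for $q\ge 3$. The divergence is in the count of $c(G)$, and it matters. Your count $c(G)=\sum_v \deg(v)(\deg(v)-1)=\ell q(q-1)(q-2)$ is the one faithful to the paper's definition of $c(G)$ (directed $2$-paths) and to what the proof of Lemma~\ref{lem:var} actually requires; the paper's own proof of this lemma instead writes $c(G)=\ell\binom{q}{3}\le \ell q^3/6$, i.e.\ it counts vertex triples rather than directed $2$-paths, which is a factor $6$ smaller. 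With that smaller count the paper gets $\card{E}^2/c(G)\ge 2\ell q/3$, and this is precisely where the constant $24$ in hypothesis (3) comes from.

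The gap in your write-up sits exactly at the step you flagged as fiddly: with your (correct) count, $c(G)/\card{E}^2 = 4(q-2)/(\ell q(q-1)) \le 4/(\ell q)$, and hypothesis (3), $q\ell\ge 24\sqrt{n}/((1-\tau)^2\eps^2)$, only yields $c(G)/\card{E}^2 \le (1-\tau)^2\eps^2/(6\sqrt{n})$, which does not meet the theorem's requirement $c(G)/\card{E}^2\le(1-\tau)^2\eps^2/(16\sqrt{n})$; your bound would need $q\ell \ge 64\sqrt{n}/((1-\tau)^2\eps^2)$, and since $4(q-2)/(\ell q(q-1))$ tends to $4/(\ell q)$ for large $q$, no sharper elementary estimate rescues the constant $24$. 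So the assertion that hypothesis (3) ``forces'' the theorem's condition (3) is false as stated, and your verification of condition (3) does not close. This is purely a constant-factor issue, not a conceptual one: replacing $24$ by $64$ (and propagating the change through Corollary~\ref{cor:disj_cliques} and its downstream uses) repairs your argument and changes nothing asymptotically. What your more careful count actually exposes is an inconsistency between the paper's definition of $c(G)$ and the count used in its own proof of Lemma~\ref{lem:disj_cliques}; had you adopted the paper's (triangle) count, the stated constants would go through, but then the variance bound of Lemma~\ref{lem:var} would be applied with a $c(G)$ that is six times too small.
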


Here again we leave the $3$-conditions version in order to be able to adjust the threshold parameter for optimizations. However, here all $3$ conditions are quite similar, pointing to the following simple corollary:
\begin{corollary}\label{cor:disj_cliques}
   Fix $n, \eps$. Fix a comparison graph $G = \bigsqcup_{i=1}^{\ell} G_i$, where each $G_i$ is isomorphic to $K_{q}$, for some $q\geq 3$. For each constant $\tau$, the algorithm $\alg{A} := (G,\tau)$ is an $\eps$-uniformity tester, if it holds that
   \[
        q\sqrt{\ell} \geq 35\sqrt{n} / \eps^2
   \]
\end{corollary}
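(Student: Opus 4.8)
The plan is to obtain the corollary straight from Lemma~\ref{lem:disj_cliques}: fix one convenient constant value of $\tau$ and check that its three conditions are all implied by the single inequality $q\sqrt{\ell}\geq 35\sqrt{n}/\eps^2$ (together with $q\geq 3$, hence $\ell\geq 1$). I would take, say, $\tau = 1/8$. Conditions~1 and~2 of the lemma each read $q\sqrt{\ell}\geq c\cdot\sqrt{n}/\eps^2$, with $c = \sqrt{12}/\tau = 8\sqrt{12}\approx 27.7$ for the first and $c = \sqrt{48}/(1-\tau) = \tfrac{8}{7}\sqrt{48}\approx 7.9$ for the second; both constants are below $35$, so the hypothesis gives conditions~1 and~2 immediately.

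For condition~3 I would use the elementary bound $q\ell = q\sqrt{\ell}\cdot\sqrt{\ell}\geq q\sqrt{\ell}$, which holds because $\ell$ is a positive integer and so $\sqrt{\ell}\geq 1$. With the hypothesis this gives $q\ell\geq 35\sqrt{n}/\eps^2\geq \tfrac{24}{(1-\tau)^2}\cdot\sqrt{n}/\eps^2 = \tfrac{1536}{49}\sqrt{n}/\eps^2\approx 31.3\sqrt{n}/\eps^2$, i.e.\ condition~3. All three conditions of Lemma~\ref{lem:disj_cliques} then hold for $\alg{A} = (G,\tau)$, so the lemma shows $\alg{A}$ is an $\eps$-uniformity tester, using $\card{V} = q\ell$ samples.

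The constant $35$ itself comes from optimizing over $\tau$: condition~1 grows like $1/\tau$ as $\tau\to 0$ and conditions~2 and~3 grow like $1/(1-\tau)$ and $1/(1-\tau)^2$ as $\tau\to 1$, so $\max\bigl(\sqrt{12}/\tau,\ \sqrt{48}/(1-\tau),\ 24/(1-\tau)^2\bigr)$ is minimized at the $\tau$ balancing $\sqrt{12}/\tau$ against $24/(1-\tau)^2$ — numerically near $\tau\approx 0.11$, where the maximum is a little above $30$ — and $35$ is a clean round-up that leaves room for any constant $\tau$ in a fixed sub-interval of $(0,1)$ (roughly $[0.10,0.17]$), which is the sense in which the bound holds ``for each constant $\tau$''. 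There is no real obstacle here: this is a routine constant-tracking deduction from Lemma~\ref{lem:disj_cliques}; the only spots deserving a second look are the relaxation $q\ell\geq q\sqrt{\ell}$ in condition~3 (harmless since condition~3 is not the binding one for our choice of $\tau$, so no tighter estimate is needed) and being explicit about which range of constants $\tau$ the bound is actually being claimed for.
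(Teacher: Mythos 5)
Your proposal is correct and follows essentially the same route as the paper's own proof: relax condition~3 of Lemma~\ref{lem:disj_cliques} via $q\ell \geq q\sqrt{\ell}$ (since $\ell \geq 1$), fix a small constant threshold (you take $\tau = 1/8$, the paper takes $\tau = 1/9$), and check that all three resulting constants fall below $35$; your extra remarks on optimizing $\tau$ match the paper's parenthetical comment. The only nitpick is cosmetic: for $\tau=1/8$ the relaxed third condition (constant $\approx 31.3$) is in fact the largest of the three, so calling it ``not the binding one'' is slightly off, but this does not affect the verification since $31.3 \leq 35$.
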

\begin{proof}
  As $\ell \geq 1$, we have $q\ell \geq q\sqrt{\ell}$, and so the third condition can be relaxed in that manner. By fixing a constant value $\tau = 1/9$, all $3$ conditions of the lemma boils to the same asymptotic term with constant less than $35$ (and a slightly better guarantee can be made by optimizing over $\tau$).
\end{proof}

We are now ready to devise testers in various models.

\subsection{Standard Processors}
\label{subsec:res_standard_processors}

\subsubsection{Centralized Model}
As a warm up, we re-prove the original collision-based tester works, in term of the comparison graph, and using \ref{lem:disj_cliques}. To add a small twist, we show that under our analysis, better sample complexity is guaranteed when using a biased threshold (meaning, taking $\tau \neq 1/2$).
Let us denote $q:= \card{V}$, which is the number of samples drawn, and our complexity measure for the model. The following is rather straightforward:

\begin{corollary}\label{cor:centralized}
    One can test uniformity using $q = \Theta\left(\frac{\sqrt{n}}{\eps^2}\right)$.
\end{corollary}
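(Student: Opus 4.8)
The plan is to instantiate Corollary~\ref{cor:disj_cliques} with $\ell = 1$, so that the comparison graph $G$ is a single clique $K_q$. This is the cleanest route: the standard collision-based tester compares all pairs of samples, which is exactly the clique, and Corollary~\ref{cor:disj_cliques} already packages the three conditions of Lemma~\ref{lem:disj_cliques} into one inequality after fixing $\tau = 1/9$. With $\ell = 1$, the bound $q\sqrt{\ell} \geq 35\sqrt{n}/\eps^2$ becomes simply $q \geq 35\sqrt{n}/\eps^2$, so any $q = \Theta(\sqrt{n}/\eps^2)$ with a large enough constant makes $\alg{A} = (K_q, 1/9)$ an $\eps$-uniformity tester. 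Since a clique on $q$ vertices has exactly $q$ samples as vertices, the sample complexity is $q = \Theta(\sqrt{n}/\eps^2)$, giving the upper bound.

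For completeness I would also note the matching lower bound: $\Omega(\sqrt{n}/\eps^2)$ samples are necessary for uniformity testing by~\cite{Paninski08,VV17}, which the excerpt cites; this is what justifies writing $\Theta$ rather than $O$ in the statement. The centralized model places no restriction on which pairs can be compared, so the clique is a legal comparison graph here and the simulation of computing $Z = \sum_{e \in E} \ind{e}$ and the threshold $T = \binom{q}{2}(1+\tau\eps^2)/n$ is trivial for a single processor that sees all samples — no model-specific simulation argument is needed.

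The main (and only) obstacle is essentially bookkeeping: one must check that the clique $K_q$ literally satisfies the hypotheses of Lemma~\ref{lem:disj_cliques}/Corollary~\ref{cor:disj_cliques}, namely that $q \geq 3$ (harmless, as $\sqrt{n}/\eps^2 \to \infty$) and that the single inequality $q \geq 35\sqrt{n}/\eps^2$ can be met while keeping $q = \Theta(\sqrt{n}/\eps^2)$. There is also the minor subtlety that the corollary requires $\tau$ to be a constant, which $1/9$ is, so nothing is lost. If one prefers to avoid the black-box corollary, the alternative is to verify the three conditions of Theorem~\ref{thm:structural} directly for $G = K_q$ using $\card{E} = \binom{q}{2} = \Theta(q^2)$ and $c(K_q) = \Theta(q^3) = \Theta(\card{E}^{3/2})$; condition~3 then reads $\Theta(\card{E}^{3/2})/\card{E}^2 = \Theta(\card{E}^{-1/2}) \leq (1-\tau)^2\eps^2/(16\sqrt{n})$, i.e. $\card{E} = \Omega(n/\eps^4)$, i.e. $q = \Omega(\sqrt{n}/\eps^2)$, matching conditions~1 and~2 up to constants — so the three conditions collapse to a single requirement $q = \Omega(\sqrt{n}/\eps^2)$, as expected.

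I would additionally remark, matching the ``small twist'' promised in the surrounding text, that optimizing over $\tau$ in Lemma~\ref{lem:disj_cliques} (rather than fixing $\tau = 1/2$ as in~\cite{DGPP16}) yields a better leading constant, with the optimum near $\tau = 1/9$; this does not change the asymptotics but is worth stating since it is the only new content of this warm-up relative to prior work.
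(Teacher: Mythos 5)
Your proposal is correct and follows essentially the same route as the paper: the paper also instantiates the disjoint-cliques result with $\ell = 1$ (a single clique), checks the three conditions of Lemma~\ref{lem:disj_cliques} with $\tau = 1/2$ (constant $100$), and then remarks that $\tau = 1/9$ improves the constant to about $35$ --- which is exactly what your use of Corollary~\ref{cor:disj_cliques} bakes in from the start. Your added notes (the matching lower bound justifying the $\Theta$, the trivial simulation of $Z$ and $T$, and the direct verification via Theorem~\ref{thm:structural}) are fine but do not change the argument.
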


\begin{proof}
  We simply use the lemma for $\ell = 1$ (one big clique), and then sufficient conditions on $q$ are:
  \begin{enumerate}
    \item $q \geq \frac{\sqrt{12}\sqrt{n}}{\tau \cdot \eps^2}$

    \item $q \geq \frac{\sqrt{48}\sqrt{n}}{(1-\tau) \cdot \eps^2}$

    \item $q \geq \frac{24\sqrt{n}}{(1-\tau)^2 \eps^2}$
  \end{enumerate}
All of which are fulfilled by choosing e.g., $q = \frac{100\sqrt{n}}{\eps^2}$ with $\tau = 1/2$. We note that the tester can easily compute $Z, T$ and therefore execute the collision-based algorithm $(G,\tau)$, for any value of $\tau$.

An added perk here, is that one can optimize $\tau$ over the three conditions to reduce sample complexity by a constant factor. Even though the guaranteed constant is somewhat of an artifact of the proof, it is still interesting to see that $\tau = 1/9$ would reduce the constant from $100$ to roughly $35$ (while simple optimization over $\tau$ would reduce it even a bit more, for some irrational threshold value).
\end{proof}

\subsubsection{Simultaneous Model}
In this section we discuss the simultaneous model, where $k$ players are each given oracle access to the distribution $P$. After taking samples, each player is allowed to send a short message to a referee, who then needs to output the right classification for $P$ (with high probability).

We give our focus to the variant posed in~\cite{FMO18}: what is the number of \emph{samples per player} needed to test uniformity?
The ``single collision'' algorithm devised for that question only requires \emph{one bit} from each player, and indeed it was shown to be optimal in~\cite{MMO19}.
However, this algorithm is somewhat delicate: first, the range of the parameter $k$ is limited (it cannot be too high or too low, with regards to $n, \eps$); second, if the number of players $k$ is not accurately known to all players, the algorithm breaks. This is because unlike most results in the classic setting, the analysis here actually requires each player to use a specific amount of samples, but not more or less. Because the players can only communicate a single bit, everything else must be set in advance given the problem's parameters.

To that end, we relax the model, and allow each player to send a small number of bits. These would allow each player to send the number of collisions she saw (rather than \emph{whether} a collision occurred). Our algorithm works for any parameter $k$, and can adjust to the case where each player is not exposed to the exact value of $k$, but rather to an approximation of it.

The model at hand imposes very concrete limitations on our comparison graph: one cannot compare a sample from one process to a sample of the another process. This means having $k$ players solving the problem in a parallel way, is equivalent to having a comparison graph whose number of connected components is \emph{at least} $k$. Followed by the intuition of ``compare every pair you have'', the graph we use is the disjoint cliques graph, with $k$ disjoint cliques. Our measure of complexity is the number of samples each player used, $q'$, which is translated to be the size of each one of our cliques.

\begin{corollary}\label{cor:simultaneous}
    In the simultaneous model, one can test uniformity using $q' = \Theta\left(\frac{\sqrt{n}}{\sqrt{k}\eps^2}\right)$ samples per player, where each player is allowed to send $\Theta\left( \log \left(1/\eps\right) \right)$ bits to the referee.
\end{corollary}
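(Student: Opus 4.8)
The plan is to apply Corollary~\ref{cor:disj_cliques} with $\ell = k$ disjoint cliques, each of size $q' = \card{V_i}$. The model's constraint — no processor can see another's samples — forces the comparison graph to have at least $k$ connected components, and the ``compare everything you can'' heuristic then says each player should internally form a clique on her own sample set. So the induced comparison graph is $G = \bigsqcup_{i=1}^{k} K_{q'}$, and Corollary~\ref{cor:disj_cliques} tells us that $\alg{A} = (G, \tau)$ (say with $\tau = 1/9$) is an $\eps$-uniformity tester as soon as $q'\sqrt{k} \geq 35\sqrt{n}/\eps^2$, i.e. $q' = \Theta\bigl(\sqrt{n}/(\sqrt{k}\,\eps^2)\bigr)$. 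This immediately gives the claimed sample complexity per player, and matches the known lower bound of~\cite{MMO19}.

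The remaining work is to verify that the tester can actually be \emph{simulated} in the model within the stated $\Theta(\log(1/\eps))$-bit message budget. Each player $i$ locally computes $Z_i := \sum_{e \in E_{G_i}} \ind{e}$, the number of collisions among her own $q'$ samples; this is legal since all these comparisons are internal. She sends $Z_i$ to the referee. The referee computes $Z = \sum_{i=1}^{k} Z_i$ and the threshold $T = \card{E}\cdot(1 + \tau\eps^2)/n$ with $\card{E} = k\binom{q'}{2}$, which are the quantities in Definition~\ref{def:cb_tester}, and outputs YES iff $Z < T$. Thus correctness is inherited verbatim from Theorem~\ref{thm:structural} via Corollary~\ref{cor:disj_cliques}; nothing about the distributed split of samples changes the distribution of $Z$, since $S_P$ labels all $\card{V} = kq'$ vertices i.i.d.\ regardless of who holds which vertex.

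The one genuine point to argue is the message length: a priori $Z_i$ can be as large as $\binom{q'}{2} = \Theta(n/(k\eps^4))$, which needs $\Theta(\log n)$ bits, not $\Theta(\log(1/\eps))$. The fix is that the referee only needs $Z$ to the precision of the decision threshold. Since $\EE[Z]$ on the uniform input is about $\card{E}/n$ and the gap between the YES and NO expectations is $\Theta(\card{E}\eps^2/n)$, it suffices for each player to report $Z_i$ truncated/rounded so that the accumulated rounding error across all $k$ players is a small fraction of the gap; a per-player additive error of $O(\card{E}\eps^2/(nk))$ suffices, and since $\card{E}/(nk) = \binom{q'}{2}/n = \Theta(1/(k^2\eps^4)) \cdot$—more carefully, $Z_i$ concentrated around $\binom{q'}{2}/n$ with fluctuations of order the gap, so reporting $Z_i$ relative to its expectation in units of the gap takes only $\Theta(\log(1/\eps))$ bits (each player sends which of $\Theta(1/\eps^2)$-or-so buckets, of width $\Theta(\text{gap}/k)$, her $Z_i$ falls into, capped at both ends). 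I expect this bucketing/precision accounting — showing that $O(\log(1/\eps))$ bits per player preserve enough information to reconstruct $Z$ to within, say, half the decision gap, and then re-running the Chebyshev argument of Theorem~\ref{thm:structural} with this extra additive slack absorbed into the constants — to be the main (though routine) obstacle. The robustness-to-unknown-$k$ claim then follows because the conditions of Corollary~\ref{cor:disj_cliques} are inequalities with room to spare, so a constant-factor approximation of $k$ still satisfies them and only shifts the threshold $T$ by a constant factor, which the same slack absorbs.
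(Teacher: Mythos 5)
Your choice of comparison graph, the application of Corollary~\ref{cor:disj_cliques} with $\ell=k$, and the simulation scheme (each player computes $Z_i$, the referee sums to get $Z=\sum_i Z_i$ and compares with $T$) are exactly the paper's argument, and they correctly give the $q'=\Theta\left(\sqrt{n}/(\sqrt{k}\eps^2)\right)$ sample bound. The gap is in the one genuinely model-specific point you flag yourself: the $\Theta\left(\log(1/\eps)\right)$ communication budget. Your proposed fix --- rounding/bucketing $Z_i$ to precision $\Theta(\mathrm{gap}/k)$ and re-running the Chebyshev argument with extra additive slack --- is left unproven, and as sketched it does not clearly work: under the uniform distribution the fluctuations of $Z_i$ are of order $\sqrt{\card{E_i}/n}=\Theta\left(1/(\sqrt{k}\eps^2)\right)$, which is about $\sqrt{k}$ of your bucket widths, so covering the typical range with $\Theta(1/\eps^2)$ buckets fails in general (one would need $\Theta(\log k)$ bits, or a careful argument that the truncation bias at the caps stays a small fraction of the gap); moreover, any lossy encoding forces you to reprove the guarantee of Theorem~\ref{thm:structural}/Corollary~\ref{cor:disj_cliques}, whose constants are fixed, rather than inherit it verbatim as you claim.

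The paper avoids all of this with an exact and much simpler observation: with the minimal choice of $q'$ one has $\card{E}=\Theta(n/\eps^4)$, hence $T=\card{E}\cdot\frac{1+\tau\eps^2}{n}=\Theta(1/\eps^4)$, and it suffices for each player to send $Z_i$ \emph{exactly} whenever $Z_i\le T$, using $\log T=\Theta(\log(1/\eps))$ bits, and otherwise a special overflow symbol --- since $Z\ge Z_i>T$ then forces rejection regardless of the other reports. This is a lossless simulation of $(G,\tau)$, so no additional analysis is needed. (Your closing aside about unknown $k$ is also too quick: a constant-factor error in the referee's value of $\card{E}$ shifts $T$ by a constant factor, while the YES/NO separation in $\EE[Z]$ is only a $(1+\Theta(\eps^2))$ multiplicative factor, so it is not absorbed by slack; the paper handles the oblivious case by having each player additionally communicate her exact sample count, rounded up to a power of two, at a $\log\log$ overhead.)
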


\begin{proof}
    We again turn to the structured graph of Lemma~\ref{lem:disj_cliques}, where each player uses an independent sample set of size $q'$ and compares all the pairs. We have $k$ players in total doing so.

    Our first goal is to show how to simulate a collision-based tester $(G,\tau)$ in this model. This will be possible whenever $G$ is made of at least $k$ vertex-disjoint connected components $G = \bigsqcup_{i\in[k]} G_i$.

    The first simple observation is that each player can operate on a designated part of the graph $G_i = (V_i, E_i)$. Meaning, the $i^{th}$ player is able to compute $Z_i:= \sum_{e\in E_i} \ind{e}$ by using her partial sample set, $V_i$.

    We next note that $Z = \sum_i Z_i$, and so if every player communicates $Z_i$ to the referee ``well enough'' -- she can go on to compute $Z$. Computing $T$ is easy: the algorithm is known to all, and specifically the values $\tau, \card{E}$ are known to the referee. The last part is comparing the values and output the decision $\ind{Z < T}$.

    Our next step is to quantify the communication and sampling cost of said algorithm. We aim at a communication cost of $\log\left( 1/\eps \right)$ bits per player.
    We would like each player to send the value $Z_i$. However, this value might be too high, especially for the case of distributions with very high collision probability (which are also very far from being uniform). To this end, we present the following simple observation:
    \begin{observation}
      To simulate a collision based algorithm $(G, \tau)$ in the simultaneous case, it suffices for each player to send $\log\left( T \right)$ bits. Indeed, whenever $Z_i > T$, we know that $Z > T$ as well, and a special string can signal the referee to reject.
    \end{observation}

    We now calculate the sample complexity, measured in samples \emph{per player}. We use Lemma \ref{lem:disj_cliques} with $q'$ and $\ell = k$, to know that it is enough if we satisfy:
    \begin{enumerate}
        \item $q' \geq \frac{\sqrt{12}\sqrt{n}}{\tau \cdot \sqrt{k} \eps^2}$

        \item $q' \geq \frac{\sqrt{48}\sqrt{n}}{(1-\tau) \cdot \sqrt{k}\eps^2}$

        \item $q' \geq \frac{24\sqrt{n}}{(1-\tau)^2 k \eps^2}$
    \end{enumerate}

    Which is enough to show the asymptotic sample complexity we desire.\footnote{Assuming large enough $k$, the third condition pose asymptotically weaker requirement, and so optimizing $\tau$ on the first two would yield the optimal guarantee for $\tau = 1/3$, getting us to $q' = \left\lceil \frac{\sqrt{108}\sqrt{n}}{\sqrt{k} \eps^2} \right\rceil$ samples per player.}

    Since the third condition is looser than the first two, we actually choose the number of samples per player, $q'$,  such that $\card{E} = \Theta \left( n/\eps^4 \right)$. Since $\tau\eps^2 \leq 1$, this means that our threshold is not too large
    \[
        T = \Theta\left( \frac{n}{\eps^4} \right) \cdot \left(\frac{1+\tau\eps^2}{n}\right) = \Theta\left( 1/\eps^4 \right),
    \]
    and so, using the observation, the simulation of the algorithm only requires each player to send $\log\left(\Theta(1/\eps^4)\right)$ bits to describe the number of collisions she saw.
\end{proof}

\begin{remark}
    As apparent from~\cite{ACT20A,ACT20B,MMO19}, when messages are $r$ bits long, the correct value to look for in the sample complexity is $2^r$ (and sometimes $\sqrt{2^r}$). For this reason, when discussing communication, we explicitly write expressions such as $\log\left(\Theta(1/\eps^4)\right)$ instead of a more general $\Theta\left(\log(1/\eps)\right)$. This would prove useful when comparing our results to known lower bounds, as is done in Section~\ref{subsec:discussion}.
\end{remark}

\begin{remark}
    For the oblivious case where the players only know an approximation for the number of players, $k$, each player would also need to send the referee the amount of comparisons she has made.
    We note that sending the amount of samples is enough, but still doing so naively raises the communication cost from $\log\left(\Theta(1/\eps^4)\right)$ to $\log\left(\Theta(n/\eps^4)\right)$.

    To avoid this, we suggest an alternative: each player rounds up her number of samples to the nearest power of $2$, and sends the power instead, so the referee can derive the exact number of samples (and therefore comparisons). For the number of collisions we could not have used the same trick, as it must be sent \emph{accurately}. To conclude, in the oblivious case, we can still perform uniformity testing with communication $\log\left(\Theta(1/\eps^4)\right) + \log\log\left(\Theta(n/\eps^4)\right)$ per player, while increasing the amount of samples (of each player) by a factor of at most $2$ (rounding to the nearest power).
\end{remark}

\subsubsection{Asymmetric-Cost Model}
This variant is best described and motivated as follows: we think of the sampling process as time consuming, and each of the $k$ players now has her own sampling rate, meaning that some players are able to draw samples faster than others. We describe the sampling rate vector $R = (R_1, \dots, R_k)$, and we let each player draw her own number of samples $(s_1,\dots, s_k)$. The complexity measure is the time dedicated for the sampling process, denoted $t$. Within $t$ time, player $i$ collects exactly $q_i:= \lfloor R_i \cdot t \rfloor$ samples.

Executing the tester works just as before. If the referee knows $(R_1,\dots,R_k)$, and the time $t$, she can calculate the threshold correctly, and as before - $Z = \sum_i Z_i$, where $Z_i$ is communicated by player $i$. The big difference here is the analysing the complexity measure, $t$. Note that the algorithm now assigns more responsibility to players who can obtain more samples within the same period of time.

To this end, we tweak the disjoint cliques graph in a way that each clique has a different size, relating to the sampling rate of said player. This would lead us to the following corollary:

\begin{corollary}\label{cor:simultaneous_asymmetric}
    In the simultaneous model, one can test uniformity in time $t = \Theta\left(\frac{\sqrt{n}}{\eps^2\lnorm{2}{R}}\right)$, where $\lnorm{2}{R} = \sqrt{R_1^2 + \dots + R_k^2}$, and each player is allowed to send $\Theta\left( \log \left(1/\eps\right) \right)$ bits to the referee.
\end{corollary}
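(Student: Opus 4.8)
The plan is to mimic the proof of Corollary~\ref{cor:simultaneous}, but with cliques of unequal sizes tuned to the sampling rates. Concretely, within time $t$ player $i$ draws $q_i := \lfloor R_i t \rfloor$ samples, so the natural comparison graph is $G = \bigsqcup_{i=1}^{k} K_{q_i}$: each player compares all pairs among her own samples, and no cross-player comparisons are needed. Simulation is immediate exactly as before — player $i$ computes $Z_i := \sum_{e \in E_i} \ind{e}$ on her own sample set, sends it to the referee, and the referee forms $Z = \sum_i Z_i$ and compares with $T = \card{E}\cdot\frac{1+\tau\eps^2}{n}$, which she can compute from the publicly known $R$, $t$, $\tau$, $n$, $\eps$. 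The message-length argument is identical: by the Observation in the proof of Corollary~\ref{cor:simultaneous}, each player only needs $\log(T)$ bits (truncating and signalling ``reject'' whenever $Z_i > T$), and once we show $\card{E} = \Theta(n/\eps^4)$ we get $T = \Theta(1/\eps^4)$, hence $\Theta(\log(1/\eps))$ bits per player.

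The substantive part is verifying the three conditions of Theorem~\ref{thm:structural} and reading off the right value of $t$. Here $\card{E} = \sum_i \binom{q_i}{2}$ and $c(G) = \sum_i c(K_{q_i}) = \sum_i q_i(q_i-1)(q_i-2)$ (roughly $\sum_i q_i^3$). With $q_i \approx R_i t$, we have $\card{E} \approx \tfrac12 t^2 \sum_i R_i^2 = \tfrac12 t^2 \lnorm{2}{R}^2$ and $c(G) \approx t^3 \sum_i R_i^3$. I'd like to phrase the argument so that the dominant quantity is $\card{E} \approx \tfrac12 t^2\lnorm{2}{R}^2$: choosing $t = \Theta\!\left(\frac{\sqrt{n}}{\eps^2 \lnorm{2}{R}}\right)$ makes $\card{E} = \Theta(n/\eps^4)$, so conditions~1 and~2 of Theorem~\ref{thm:structural} (which only ask $\card{E} = \Omega(n/\eps^4)$ with $\tau$-dependent constants) hold for a suitable constant. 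Condition~3 requires $c(G)/\card{E}^2 \le \frac{(1-\tau)^2\eps^2}{16\sqrt n}$; since $c(G)/\card{E}^2 \approx \frac{t^3\sum R_i^3}{(t^2\lnorm2R^2/2)^2} = \frac{4\sum R_i^3}{t\,\lnorm2R^4}$, and $\sum R_i^3 \le (\max_i R_i)\lnorm2R^2 \le \lnorm2R^3$, this ratio is at most $\frac{4}{t\,\lnorm2R}= \Theta\!\left(\frac{\eps^2}{\sqrt n}\right)$, which is exactly of the required order — so condition~3 is no worse than conditions~1--2, up to constants. Alternatively, one could invoke Lemma~\ref{lem:disj_cliques} or Corollary~\ref{cor:disj_cliques} after grouping the cliques, but since the cliques now have different sizes it is cleaner to go back to Theorem~\ref{thm:structural} directly.

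A couple of technical points will need care. First, the floors: $q_i = \lfloor R_i t\rfloor$ can be as small as $0$ or $1$ or $2$ for slow players, and $K_0, K_1, K_2$ contribute nothing (or almost nothing) to $\card{E}$, so one should note that players with $R_i t < 3$ are simply inactive and the asymptotics are governed by the active ones; as long as $t\lnorm2R$ is large this loses only constant factors. Second, to get a clean matching lower-order behaviour one wants $\sum_i R_i^3$ to not dominate, which is why bounding $\sum R_i^3 \le \lnorm2R^3$ (a consequence of $\lnorm3R \le \lnorm2R$) is the key inequality. The main obstacle, such as it is, is bookkeeping the three inequalities with the floors and the $\tau$-dependent constants simultaneously; there is no conceptual difficulty beyond Theorem~\ref{thm:structural}. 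Finally, one should remark (as in Corollary~\ref{cor:simultaneous}) that the referee needs to know $R$ and $t$ to compute $T$, and that an oblivious variant — where players only know approximations — can be handled by having each player also transmit (a rounded version of) her sample count, at an additive $\log\log$ cost, exactly as in the preceding remark.
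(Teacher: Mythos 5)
Your proposal is correct and follows essentially the same route as the paper: the disjoint-cliques graph with clique sizes $q_i \approx R_i t$, the bounds $\card{E} = \Theta(t^2\lnorm{2}{R}^2)$ and $c(G) = O(t^3\lnorm{2}{R}^3)$ via $\lnorm{3}{R}\le\lnorm{2}{R}$, plugged into Theorem~\ref{thm:structural}, with the same $\log(T)$ communication observation. The only difference is cosmetic (you flag the flooring of $R_i t$, which the paper silently ignores), so there is nothing substantive to add.
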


\begin{proof}
    Our graph is now $G = \bigsqcup_{i\in[k]} G_i$, where $G_i$ is a clique on $q_i = R_i \cdot t$ vertices.
    We use the same observation as before, that no player needs to send more than $\log\left( T\right)$, where $T$ is the threshold of the algorithm $(G, \tau)$.

    In terms of the whole graph $G$, we have: $\card{V} = \sum_i q_i$. We have the bound
    \[
        \card{E} = \sum_i \frac{q_i^2-q_i}{2} \geq \frac{\sum_i (q_i ^2)}{3} = \frac{t^2 \cdot \lnorm{2}{R}^2}{3},
    \]
     as well as the bound
     \[
        c(G) = \sum_i \binom{q_i}{3} \leq \frac{\sum_i q_i^3}{6} = \frac{t^3 \cdot \lnorm{3}{R}^3}{6} \leq \frac{t^3 \cdot \lnorm{2}{R}^3}{6}.
     \]
     Combining the two we have
     \[
        \frac{\card{E}^2}{c(G)} \geq \frac{t^4 \lnorm{2}{R}^4 / 9}{t^3 \lnorm{2}{R}^3 / 6} = \frac{2 t \lnorm{2}{R}}{3},
     \]
     and so sufficient conditions for Theorem \ref{thm:structural} are:
    \begin{enumerate}
        \item $t \geq \frac{\sqrt{12}\sqrt{n}}{\tau \cdot \lnorm{2}{R} \eps^2}$

        \item $t \geq \frac{\sqrt{48}\sqrt{n}}{(1-\tau) \cdot \lnorm{2}{R}\eps^2}$

        \item $t \geq \frac{24\sqrt{n}}{(1-\tau)^2 \lnorm{2}{R} \eps^2}$
    \end{enumerate}
    Which is enough to show the asymptotic number we desire. Again, the algorithm uses $\card{E} = \Theta \left( n/\eps^4 \right)$ comparisons, and satisfy $T = O\left( 1/\eps^4 \right)$, which means each player needs to send at most $\log\left(\Theta(1/\eps^4)\right)$ bits to the referee.

\end{proof}
    This algorithm is actually a generalization of the previous two algorithms: (1) if the rate vector is $R = (1, \dots, 1)$, then $t$ is the number of samples per player, and $\lnorm{2}{R} = \sqrt{k}$; (2) if the rate vector is $R = (1, 0, \dots, 0)$, then only one player is actually taking samples, and indeed $t \cdot \lnorm{2}{R} = t$ is the number of samples.


\subsection{Memory-Constrained Processors}
\label{subsec:res_streaming_processor}
\subsubsection{Centralized Model with Memory Constraints}
In the streaming version of the problem, introduced in~\cite{DGKR19}, we get the samples from distribution $P$ in a stream, and can only store $m$ bits of memory. We wish to stream as few samples as possible and still output with high probability whether $P$ is uniform or $\eps$-far.
For our purposes, it will be useful to keep in mind the parameter of $m' = \left\lfloor m/2(\log(n)) \right\rfloor$, the number of \emph{samples} that can be stored with half of the memory space (the other half will be used to perform required calculations).

The restriction this model imposes is that when a certain sample is being processed, it can only be compared with the $O(m')$ samples that are currently stored. This intuition can be formalized to show that any comparison graph used in the streaming model must have a bounded \emph{average} degree (the maximal degree can be arbitrarily high, as one sample can be stored aside and compared with all the rest). Indeed, we show this later on (see Claim~\ref{claim:streaming_avg_deg}, in Section~\ref{subsec:lower_streaming_processors}) when dealing with lower bounds for our method. However, for now we merely use this intuition to choose a fitting comparison graph for the model.

In~\cite{DGKR19} the bipartite tester was used, with one side much smaller than the other. This graph goes along with the intuition, as it has a small average degree. The bipartite tester was shown to achieve sample complexity of $\Theta\left( n / (m'\eps^4)\right)$, which is shown to be optimal for a specific range of values for the parameter $m$. It was left as an open question whether this sample complexity can be attained for other values of $m$. For the upper bound, we answer this question in the positive. We use not only a different analysis, but rather a whole different comparison graph, one that also has a low average degree.
For simplicity, we again turn to the disjoint cliques graph, having dealt with similar analysis in the previous section (for the simultaneous model). However, we point out the fact that another graph leads to slightly better guarantees (by a small constant factor), with more cumbersome analysis.\footnote{The better comparison graph has the vertex set $V = \set{1,\dots,q}$, and the slightly larger edge set $E = \set{(i,j) \st \card{i-j} \leq m'-1}$. The tester it induces stores $m'$ samples at any given moment. For any new element, it replaces it with the oldest one in store ($m'$ samples ago) and compares it to the other $m'-1$ samples. Before this element is being removed, it is compared with $m'-1$ new samples that followed it. So other than the first and last $m'$ samples, each is being compared with $2m'-2$ other samples.}

\begin{corollary}\label{cor:streaming}
    In the streaming model with memory $m$ , one can test uniformity using
    \[
        q = \Theta\left(\max\set{\frac{n\cdot\log(n)}{m\eps^4}, \sqrt{n}/\eps^2} \right)
    \]
    samples, as long as we have $m = \Omega \left( \log (1/\eps)\right)$.
\end{corollary}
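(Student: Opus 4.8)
The plan is to realise the disjoint-cliques graph of Corollary~\ref{cor:disj_cliques} with cliques small enough to be held in memory, and to feed the stream to the induced tester one clique at a time. Concretely, I would take $G=\bigsqcup_{i=1}^{\ell}K_s$ with clique size $s=\min\bigl(m',\lceil 35\sqrt{n}/\eps^2\rceil\bigr)$ and $\ell=\bigl\lceil (35)^2 n/(\eps^4 s^2)\bigr\rceil$, so that $s\sqrt{\ell}\ge 35\sqrt n/\eps^2$ and Corollary~\ref{cor:disj_cliques} guarantees that $\alg A=(G,\tau)$ is an $\eps$-uniformity tester for any fixed constant $\tau$ (e.g.\ $\tau=1/9$). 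To run $\alg A$ on a stream within $m$ bits, the processor reads the $q=\ell s$ samples in consecutive blocks of $s$; while inside a block it keeps the (at most $s$) samples of that block seen so far, compares each newly arrived sample against all of them, and adds each match to a running count $Z$; when the block ends it discards those samples and continues. After the last sample it knows $\card E=\ell\binom{s}{2}$, $n$, $\eps$, $\tau$, forms $T=\card E(1+\tau\eps^2)/n$, and outputs $\ind{Z<T}$, which is correct by the corollary.

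It then remains to bound the two resources. For the number of samples, $q=\ell s=(s^2\ell)/s$, and the choice of $\ell$ makes $s^2\ell=\Theta(n/\eps^4)$ regardless of $s$ (the ceiling adds at most $s^2\le (35)^2 n/\eps^4$), so $q=\Theta\bigl(n/(\eps^4 s)\bigr)$; substituting $s=\min(m',\Theta(\sqrt n/\eps^2))$ gives $q=\Theta(\sqrt n/\eps^2)$ in the large-memory case $\ell=1$ and $q=\Theta(n/(\eps^4 m'))$ otherwise, which, using $m'=\Theta(m/\log n)$, combine to the claimed $q=\Theta\bigl(\max\{\,n\log n/(m\eps^4),\ \sqrt n/\eps^2\,\}\bigr)$. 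For the memory, at any moment at most $s\le m'$ samples are stored, taking $\le m'\log n\le m/2$ bits; the only other state is a constant number of $O(\log n)$-bit indices into the current block and the count $Z$. Since $\card E=\Theta(n/\eps^4)$ in all cases and $\tau\eps^2\le 1$, one has $T=\Theta(1/\eps^4)$, so — just as in the simultaneous model, where $Z\ge T$ already forces rejection — I would cap $Z$ at $T$ and store it in $O(\log(1/\eps))$ bits, which is where the hypothesis $m=\Omega(\log(1/\eps))$ enters.

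The analysis of the graph is handed to us by Corollary~\ref{cor:disj_cliques}, so the only real work is the low-memory bookkeeping at the boundary: confirming that the count and the constantly many $O(\log n)$-bit registers fit together into the $m/2$ computation bits (which is exactly why the baseline $m=\Omega(\log n)$ implicit in the model and the extra $m=\Omega(\log(1/\eps))$ are needed), and handling the degenerate regime $m'<3$, where Corollary~\ref{cor:disj_cliques} does not apply: there $m=\Theta(\log n)$, and I would instead use a perfect matching, whose $c(G)=0$ makes the third condition of Theorem~\ref{thm:structural} vacuous while the first two force $\card E=q/2=\Theta(n/\eps^4)$, still matching the stated bound in that regime. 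The sliding-window graph of the footnote would improve the constant but I would not pursue it here.
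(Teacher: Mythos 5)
Your proposal is correct and follows essentially the same route as the paper: batch the stream into cliques of size $\min(m',\Theta(\sqrt n/\eps^2))$, apply Corollary~\ref{cor:disj_cliques}, and cap the collision counter at $T=\Theta(1/\eps^4)$ (rejecting once $Z\ge T$) to get the $m=\Omega(\log(1/\eps))$ memory requirement, yielding $q=\Theta\left(\max\set{n\log n/(m\eps^4),\sqrt n/\eps^2}\right)$. The only cosmetic differences are that you unify the large- and small-memory cases via one parameter choice instead of the paper's explicit case split, and you spell out a perfect-matching fallback for $m'<3$, which the paper only mentions in passing.
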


\begin{proof}
    Whenever we have enough memory $m = \Omega\left(\sqrt{n}\log(n) / \eps^2 \right)$, the dominating factor in the statement is the second one, and indeed one can store the whole $\sqrt{n}/\eps^2$ samples (using $\log(n)$ space per sample), and run the centralized algorithm.
    Whenever $m = o\left(\sqrt{n}\log(n) / \eps^2 \right)$, the memory limitation prevents us from running the centralized algorithm. For this case we choose a different comparison graph, inducing a tester that uses more samples: $\Theta\left(n\log(n)/(m\eps^4)\right)$.

    We start by describing how to execute the collision-based tester within the limitations of the model: we use half of the memory ($m/2$ bits) to maintain a global collision counter, and the other half would allow us to store $m'$ samples at each given time, so we are able to compare all $m'$ samples with one another. For each batch we store all $m'$ samples, compare them among themselves and add to global counter, and then delete them and clear the space for a new batch. We note that the global counter must suffice to compute $Z$ (for the right comparison graph $G$), and $T$ can be easily computed beforehand (knowing $m, \tau$, and $s$ -- the number of samples to be used). This is indeed enough to execute the collision-based algorithm $(G,\tau)$.

    As with the previous model, we again use a simple observation regarding the model:
     \begin{observation}
      In order to execute a collision-based algorithm $(G, \tau)$ in the streaming model, it suffices to allocate $\log\left( T \right)$ bits of memory for the global collision counter. This is true as once the global collision counter reaches the value $T$, the tester can immediately terminate and reject.
    \end{observation}

    We now need to choose the correct size of each batches. We use the disjoint cliques graph, this time with parameter $m'$ for the size of each clique.
    The total number of samples (and our complexity measure) is $q = \card{V}$, and we write for the number of cliques $\ell = q / m'$ (for simplicity we assume $\ell$ is an integer). Using Corollary~\ref{cor:disj_cliques}, with the values of $m', \ell$, our condition is:
    \[
        m' \sqrt{q/m'} \geq \frac{35\sqrt{n}}{\eps^2},
    \]
    and choosing $q, m'$ to make it an equality, after simplifying we get that
    \[
        q = \frac{3000n \log(n)}{m \eps^4}
    \]
    are enough samples.

    Note that the tester is defined only when $1 \leq \ell = q / m' = 2\log(n)\sqrt{3000n}/\left( m \eps^2\right)$, which is equivalent to $m \leq 2\log(n)\sqrt{3000n}/\eps^2$. This complies with the fact that larger $m$ already allows us to use the centralized tester.

    Taking the minimal required number of samples gives us a comparison graph with $\card{E} = \Theta \left( n/\eps^4\right)$, which in turn dictates a threshold $T = \Theta\left( 1/\eps^4 \right)$.
    Using the observation, we must have $m/2 \geq \log\left( T \right)$, which translates to $m = \Omega \left( \log (1/\eps)\right)$.
    We also require that $m' \geq 3$ (a prerequisite of Corollary~\ref{cor:disj_cliques}, even though we can also make it work for $m' = 2$), overall implying correctness whenever $m \geq \Theta\left(\max\set{\log (1/\eps), \log n}\right)$.
\end{proof}

The corollary shows a tester for a wide range of $m$ (extending the range achieved in~\cite{DGKR19}). We are still left with the restriction of $m = \Omega\left(\max\set{\log (1/\eps), \log n}\right)$, which essentially comes from our need to count collisions and to be able to store at least $2$ full samples at the same time.

We leave open an interesting question posed also in~\cite{DGKR19}: is it possible to test for uniformity in the scarce regime, and if so -- what is the sample complexity required to do so? (The scarce regime is essentially $m = o\left(\log(n)\right)$. Though, to comply with our result - e.g., for absurdly small values of $\eps$ - it can be redefined to $m = o\left(\max\set{\log (1/\eps), \log n}\right)$ )

\subsubsection{Simultaneous Model with Memory Constraints}
The main difficulty in this section seems to be handling the multiple parameters: on top of $n, \eps$ that define the testing problem, we now deal with $k$ processors only able to communicate a short message to the referee. Each of which can store at most $m$ bits of memory (or $m'$ samples, as before) in order to process its inputs and compute its message.

\begin{corollary}\label{cor:simultaneous streams}
    In the memory-constrained simultaneous model, whenever $m = \Omega\left(\log(1/\eps)\right)$ bits of memory are allowed for each player, and $\log\left(1/\eps^4\right)$ bits of communication are allowed for each player, uniformity can be tested using
    \[
        \Theta\left(\frac{n}{k \eps^4} \cdot \max \set{\frac{\log(n)}{m},\frac{\sqrt{k}\eps^2}{\sqrt{n}}}\right)
    \]
    samples per player.
\end{corollary}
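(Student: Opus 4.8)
The plan is to combine the two constructions we already used: the disjoint-cliques graph from the simultaneous model (Corollary~\ref{cor:simultaneous}) and the batching idea from the streaming model (Corollary~\ref{cor:streaming}). We have $k$ players, each of whom receives a stream of samples and may store only $m' = \lfloor m/(2\log n)\rfloor$ of them at a time. So each player internally behaves exactly like the centralized streaming tester: she partitions her own sample set into batches of size $m'$, and within each batch compares all pairs, maintaining a local collision counter $Z_i$. In comparison-graph language, the global graph is $G = \bigsqcup_{i\in[k]} G^{(i)}$, where each $G^{(i)}$ is itself a disjoint union of cliques $K_{m'}$. Since the disjoint union of disjoint unions of cliques is again a disjoint union of cliques, $G$ is a disjoint-cliques graph $\bigsqcup_{j=1}^{\ell} K_{m'}$ with $\ell = k q' / m'$, where $q'$ is the number of samples per player. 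This lets us apply Corollary~\ref{cor:disj_cliques} directly.

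First I would verify the model can simulate the tester $(G,\tau)$. Each player runs the streaming routine locally (using half her memory for the local counter, half for the current batch), exactly as in the proof of Corollary~\ref{cor:streaming}; this is where the $m = \Omega(\log(1/\eps))$ requirement enters, since by the streaming observation the local counter never needs to exceed $\log T$ bits, and $T = \Theta(1/\eps^4)$ once $\card{E} = \Theta(n/\eps^4)$. At the end of her stream each player sends $Z_i$ (or the reject signal if $Z_i \ge T$) to the referee, costing $\log(\Theta(1/\eps^4))$ bits as in the simultaneous-model proof; the referee forms $Z = \sum_i Z_i$, computes $T$ from the publicly known parameters, and outputs $\ind{Z < T}$. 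Then I would plug $q = m'$ and $\ell = k q'/m'$ into Corollary~\ref{cor:disj_cliques}: the condition $q\sqrt{\ell} \ge 35\sqrt{n}/\eps^2$ becomes $m' \sqrt{k q'/m'} = \sqrt{m' k q'} \ge 35\sqrt{n}/\eps^2$, i.e. $q' \ge \Theta\!\left(\frac{n}{k m' \eps^4}\right) = \Theta\!\left(\frac{n\log n}{k m \eps^4}\right)$. This holds as long as $\ell \ge 1$, i.e. $q' \ge m'/k$; when $m'$ is large enough that this fails, each player simply does not need the full memory and reverts to the pure simultaneous tester, giving $q' = \Theta(\sqrt{n}/(\sqrt{k}\eps^2))$ samples per player. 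Taking the maximum of the two regimes yields exactly the claimed bound $\Theta\!\left(\frac{n}{k\eps^4}\max\{\log(n)/m,\ \sqrt{k}\eps^2/\sqrt{n}\}\right)$, and one checks as before that the minimal choice gives $\card{E} = \Theta(n/\eps^4)$, hence $T = \Theta(1/\eps^4)$, consistent with the stated memory and communication budgets.

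I do not expect a serious obstacle here: the main point is the structural observation that ``disjoint unions of cliques compose,'' so that a per-player streaming decomposition and a cross-player disjoint-component decomposition coexist in one disjoint-cliques graph, and Corollary~\ref{cor:disj_cliques} then applies off the shelf. The only mildly delicate bookkeeping is the crossover between the two regimes and checking that, in the memory-constrained regime, the induced threshold $T = \Theta(1/\eps^4)$ still fits within the $m = \Omega(\log(1/\eps))$ local memory and the $\log(1/\eps^4)$-bit messages simultaneously — but this is identical to the checks already carried out in the proofs of Corollaries~\ref{cor:simultaneous} and~\ref{cor:streaming}, so it amounts to combining two constraints we have already individually satisfied.
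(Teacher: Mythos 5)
Your proposal is correct and follows essentially the same route as the paper's proof: the same global comparison graph (a disjoint union of $K_{m'}$ cliques across players and batches), the same simulation argument (half the memory for the batch, half for the capped counter, each player reporting $Z_i$ or a reject signal within $\log(\Theta(1/\eps^4))$ bits), the same application of Corollary~\ref{cor:disj_cliques}, and the same two-regime split depending on whether $m'$ exceeds $\Theta\bigl(\sqrt{n/(k\eps^4)}\bigr)$. No gaps to report.
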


    Note that in the regime of very small memory constraint, the problem suddenly "parallelize" perfectly: imagine taking $k \leq \Theta\left((n/\eps^4) \cdot (\log^2(n) / m^2)\right)$ machines with memory $m$ per machine. In this case, each machine needs only a $1/k$ fraction of the amount of samples required for a single machine with memory $m$.
    This is true as the real barrier here is the overall storage, which indeed grows linearly in the number of machines used.

\begin{proof}
    We first note that if $m' \geq c\sqrt{n/(k\eps^4)}$, for the constant $c$ of the prove in Corollary~\ref{cor:simultaneous}, then each player can store all the samples she needs, and therefore $\Theta\left(\sqrt{n/(k\eps^4)}\right)$ samples per player are enough.

    Next, we show the required adjustments whenever $m' < c\sqrt{n/(k\eps^4)}$.\footnote{Whenever $m' = \Theta(\sqrt{n/(k\eps^4)})$, these adjustments are required, but the asymptotic sample complexity is similar to the case of having enough memory. In this case, the two terms in the maximum function have the same asymptotic value.}

    Under this regime, the number of samples we are able to store limits us. As we have $k$ simultaneous players, our graph can be written as $G = \bigsqcup_i G_i$, where $G_i$ is the graph of the player $i$. Only now, each $G_i$ should be adjusted to the memory constraint, and this can be done as before, by dividing each $G_i$ to disjoint cliques, meaning $G_i = \bigsqcup_j G_{ij}$, where each $G_{ij}$ is isomorphic to $K_{m'}$.

    The execution of $(G,\tau)$ then can be done, relying on previous observations: each players uses half her space $m/2$ to calculate the value $Z_i$ (or a special signal for the case $Z_i > T$). She then sends these bits to the referee, using $\log(T)$ bits of communication. This can be done as long as  $m = \Omega\left(\log(T)\right)$.

    To calculate the sample complexity (per player), we use $q=\card{V}$ for the total amount of samples and note that our entire graph is a set of $\ell = q/m'$ disjoint cliques of size $m'$ each. Using Corollary~\ref{cor:disj_cliques}, we know the algorithm successfully tests for uniformity whenever
    \[
        m' \sqrt{q/m'} = \Omega\left( \sqrt{n}/\eps^2 \right) .
    \]
    And after simplifying, and choosing the minimal $q$, we get a tester with
    \[
        q = \Theta\left( \frac{n \log(n)}{m \eps^4} \right) .
    \]
    We do note, however, that $q$ is the total amount of samples, not our complexity measure. But using symmetry, it means that each player uses
    \[
        \Theta\left( \frac{n \log(n)}{m k \eps^4} \right)
    \]
    samples.

    We finish by noting again that in our comparison graph we have $\card{E} = \Theta\left( n/\eps^4\right)$, which means $T = O\left( 1/\eps \right)$, and so the communication and memory constraints are as in previous sections: we must have $m = \Omega \left( \log (1/\eps)\right)$ bit of memory per player, and each player sends only $O\left( \log (1/\eps)\right)$ bits to the referee.
\end{proof}

\subsection{The CONGEST Model}
\label{subsec:res_congest}
\subsubsection{Problem Definition}
In the CONGEST model, players can communicate over private peer-to-peer channels (communication edges), and the process is divided to rounds.
Each round allows $O\left(\log n+\log k\right)$ bits of communication on each channel (intuitively, allowing to send a constant amount of input tokens, or node identifiers).
The complexity measure is the number of communication rounds needed to solve a problem, as the algorithm tries to avoid congestions (the case where a large amount of information is destined to go through a specific communication edge).

The output of model is YES if no player raised a flag, and NO is at least one player did so. This can be defined similarly for the testing problem. However, using extra $D$ rounds (where $D$ is the diameter of the communication graph) - the task is equivalent to having one player output the answer. As we do not aim for less than $O(D)$ rounds, we relax and present the following CONGEST version of uniformity testing (which is similar to the one in~\cite{FMO18}):

We have a communication graph with $k$ processors, each holds exactly one sample from the distribution $P$. We only consider the case of having enough samples throughout the whole graph (meaning $k = \Omega(\sqrt{n}/\eps^2)$, as otherwise the task is impossible). An $\eps$-uniformity tester is a communication procedure that ends where one player outputs a single bit $b$, such that:
\begin{enumerate}
  \item If $P = U_n$, then $b = 1$ with probability at least $2/3$.

  \item If $P$ is $\eps$-far from uniform, then $b = 0$ with probability at least $2/3$.
  \end{enumerate}
The complexity of the algorithm is measured by the number of communication rounds needed to channel enough information to one node.

\subsubsection{An Improved Algorithm}
In~\cite{FMO18}, a tester for this task was given with round complexity of $O(D + n/(k\eps^4)$ rounds. We revisit this algorithm and show that certain communication graphs are good for testing. In these graphs, the players can: (1)detect the graph is good within $O(D)$ rounds; (2)shave the factor of $O(n/(k\eps^4))$ rounds, by using a different testing routine, hence solve the testing problem within $O(D)$ rounds. Hence, without hurting the round complexity on a general graph, we get better round complexity whenever the graph is ``good'' (without having prior knowledge that this is the case).

We next give a sketch of the algorithm, and a proof of its correctness.
The idea behind this algorithm is gathering sets of samples in a small amount of ``important'' nodes that are well-connected. These nodes would then simulate virtual nodes, each having a decent-sized set of samples (one node might simulate more than one virtual node). We then let each virtual node run a testing procedure of the simultaneous case, and collect the answers to aggregate them at one need, acting as the referee of the simultaneous model.

The details are as follows:
\begin{enumerate}
  \item The players identify the player with the largest identifier, node $r$, and construct a BFS tree $T$ rooted at $r$. This can be done using $O(D)$ rounds.
  \item The players cunningly pipeline samples up the tree to collect them in bundles of $s$ samples: first they count the amount of samples in each subtree, and then they only pipeline the remainders, each player $i$ keeping $b_i\cdot s$ samples for himself, where $b_i$ is an integer -- the amount of bundles. The pipelining takes $O(D+s)$ rounds.
  \item Each player with $b_i > 0$ simulates for each of his $b_i$ bundles a virtual node running on this sample set. Each such simulation is done independently. This part takes no communication at all.
  \item The players propagate the answers of simulations up the tree (as the answers are cheap in communication), and once all of these have reached the root $r$, it simulates the referees and outputs the answer. This takes another $O(D)$ rounds.
\end{enumerate}

We note that optimizing over the bundle size $s$ gives us $s = n/(k\eps^4)$, which can potentially be much larger than $D$, in which case the bottleneck of the algorithm is pipelining the samples up the tree.

We now turn to think of the communication graph as a potential comparison graph. Indeed, each node starts with a single sample, and comparing two samples of neighboring nodes takes a single round. Thus, it is a local process, which is very cheap in round complexity, and use communication over all channels. To this end, we suggest the following two procedures:
\subparagraph*{Detection procedure.}
In order to detect that our communication graph is a good comparison graph, one needs to have the values $\card{E}, c(G)$. We use $d_i$ for the degree of node $i$. We then have:
\begin{align*}
  \card{E} &= (1/2) \sum_i d_i &
  c(G) &= \sum_i \binom{d_i}{2}
\end{align*}
These sizes are also bounded by $k^2$ and $k^3$ (where $k = \card{V}$ is both the number of nodes, and the number of samples). This means that we can use messages of $O\left(\log(k)\right)$ bits to perform the summation up the tree, and using $O(D)$ rounds of communication, the root $r$ knows both these values, as well as $n, \eps$, and it can determine whether the communication graph works for some threshold value $\tau_0$. well as a comparison graph. It takes another $O(D)$ to propagates the answer to all nodes.

\subparagraph*{Alternative testing.}
In the case of a positive detection, we use a \emph{single} round for each player to send its sample to every neighbor with a larger identifier.\footnote{This is not crucial. If we do not have order on the vertices, they can each send all of their neighbors, and each comparison will be done exactly twice}
Next, each player locally counts the number of collisions it had with its neighbors $Z_i$ (no communication needed), and we note that the total amount of collisions is $Z = \sum_i Z_i$.
As $Z$ is bounded by the number of comparisons (edges), we know that all partial sums of values $Z_i$ can be expressed using $O\left(\log(k)\right)$ bits. Thus, we can use an extra $O(D)$ rounds to sum the number of collisions up the tree.
As the root can also calculate $T$, we actually get a simulation of the collision tester $\alg{A} = (G,\tau_0)$, where $G$ is the communication graph.

To sum up, both the detection, and the alternative testing can be done in $O(D)$ rounds, which means that for any network that has a good topology (depending on $n, \eps$ and the choice of $\tau$) -- the players can solve uniformity using $O(D)$ rounds.

\subsubsection{Discussion}
The improvement above shows how to connect an interactive model to the framework of collision-testers in a non-trivial way.
However, in this implementation only extremely local comparisons were made: ones that takes $1$ round to perform.

One can generalize this idea using comparison of higher constant order, $t$ (namely, all samples that are of distance $t$ from one another). This would mean taking the graph $G^t$ as our comparison graph. This graph has more edges, and is therefore should be a better fit.\footnote{As it turns out, not every comparison necessarily helps, but usually adding comparisons does help}
Indeed, one should modify the detection process: we will need to calculate $\card{E(G^t)}, C(G^t)$ up the tree, and each node would have to make sure its $t$-environment is no too large, as to not create a local congestion (as such congestions can occur even in a constant-diameter neighborhood). If the detection process passes, the $t$-local comparisons can be made for any constant parameter $t$. We expect larger value of $t$ to work for a broader family of graphs (for example, whenever $G$ is merely connected, the graph $G^k$ is the $k$-sized clique, which is known to be a good tester for large enough $k$)

We conclude with this: one can use $O(D)$ rounds to try detecting the local testability of the graph $G^t$, for any constant $t$. Depending on further assumptions, one can conduct a smart search of a locality parameter $t$ (trying different values along the way), especially whenever $n/(k\eps^4)$ is much larger than the diameter $D$, and we wish to avoid the global pipelining procedure.

\subsection{Testing Identity to a Fixed Distribution}
For the classic model, it was shown in~\cite{DK16,G16} that uniformity testing is complete with respect to testing identity to a fixed distribution, $D$.
Specifically, the black-box reduction in~\cite{G16} uses a random filter $F_D$ which we can apply on our sample set to obtain a new one. This is done in a way that every sample from $D$ becomes a uniform sample (on a slightly larger domain, $[m]$), and similarly a sample from any distribution that is $\eps$-far from $D$, becomes a sample from some a distribution that is $\eps'$-far from uniform (on $[m]$).
The trick is the existence of such a filter with $m = \Theta(n)$ and $\eps' = \Theta(\eps)$, which roughly maintains the sample complexity.

We note that the same reduction applies for all of the results above. Indeed, in all of them every node (or sample) in $\alg{A} = (G,\tau)$ is known at first to a single processor (the one who draws the sample). Thus, all processors can run the same global filter $F_D$ that only depends on $D$ (we assume here they are know the goal of the protocol, and the distribution $D$ we are testing for).

Formally, each node $v$ is originally associated with a sample $S_P(v)$, and a processor that holds this sample. Running it through the filter, we get a fresh sample $F_D(S_D(v))$ on which we run the rest of testing procedure. Overall, the nodes of our graph $G$ are now associated with $\card{V}$ samples that are either from uniform distribution (on $[m]$) or on an $\eps'$-far distribution, and from here we can use a collision-based uniformity testing.



\section{Proof of the Structural Theorem}
We now prove our structural theorem, as well as the more specific lemma for the disjoint cliques graph.

\subsection{Proof of Theorem~\ref{thm:structural}}
\label{subsec:thm_pf}
The outline for the proof follows an outline similar to the one of \cite{DGPP16}. Our main challenge is to fit the proof for \emph{any} comparison graph, rather than a specific one -- that of a full clique -- which describes the classic centralized uniformity tester.

By doing so, we get not one, but \emph{three} separate conditions on the \emph{comparison graph} that together guarantee that it can be used to plan an $(n,\eps)$-uniformity tester. This supplies a better, multi-dimensional understanding of how well a collision-based algorithm is guaranteed to perform.
We note that if one is willing to ignore constants, one could simply fix $\tau = 1/2$ and combine the first two conditions into one. \emph{However}, a side goals of ours is to optimize the analysis for both known results and our new ones, and for that goal we leave $3$ different conditions and optimize over $\tau$ to show that putting the threshold right in the model does not give the best complexity guarantee. We are then left with the following formulation:

\begin{theorem-repeat}{thm:structural}
    Fix a domain size $n$ and a proximity parameter $\eps$. If the following hold for an algorithm $\alg{A} := (G,\tau)$:
    \begin{enumerate}
        \item $\card{E} \geq \frac{4n}{\tau^2 \cdot \eps^4}$,

        \item $\card{E} \geq \frac{16n}{(1-\tau)^2 \cdot \eps^4}$, and

        \item $\frac{c(G)}{\card{E}^2} \leq \frac{(1-\tau)^2\eps^2}{16\sqrt{n}}$,
    \end{enumerate}

    then $\alg{A}$ is an $\eps$-uniformity tester.

\end{theorem-repeat}

To prove our theorem, we first need to understand how the tester works. The goal of a collision-based tester is to estimate $\mu$ well enough to distinguish between the YES and NO cases. The reason such testers work is because for the uniform distribution $U_n$, we have $\mu_{_{U_n}} = \frac{1}{n}$, and for any $\eps$-far distribution $P$, one can easily show that $\mu_P \geq \frac{1+\eps^2}{n}$.
This means that $\EE[Z]$ is noticeably different when fed samples from a YES case and when samples are drawn from a NO case. Therefore, our first proposition is that:

\begin{proposition}\label{prop:expect}
    Fix an algorithm $\alg{A} = (G,\tau)$, and an input distribution $P\in\Delta([n])$. We have
    \[
        \EE[Z] = \card{E}\cdot\mu .
    \]
\end{proposition}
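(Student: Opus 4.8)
The plan is to compute $\EE[Z]$ directly from its definition by linearity of expectation, reducing the problem to computing the expectation of a single collision indicator $\ind{e}$ for an edge $e = (u,v)$. Since $Z = \sum_{e \in E} \ind{e}$, linearity gives $\EE[Z] = \sum_{e \in E} \EE[\ind{e}]$, so the whole claim hinges on showing that $\EE[\ind{e}] = \mu$ for every edge $e$, independent of which edge it is.

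For a fixed edge $e = (u,v)$, the indicator is $\ind{e} = \ind{S_P(u) = S_P(v)}$, so $\EE[\ind{e}] = \Pr[S_P(u) = S_P(v)]$. By the definition of the sampling process, $S_P(u)$ and $S_P(v)$ are independent samples, each distributed according to $P$. Therefore I would condition on the value of $S_P(u)$: $\Pr[S_P(u) = S_P(v)] = \sum_{i=1}^n \Pr[S_P(u) = i] \cdot \Pr[S_P(v) = i] = \sum_{i=1}^n P_i \cdot P_i = \sum_{i=1}^n P_i^2 = \mu$, using the definition $\mu = \mu_P = \sum_{i=1}^n P_i^2$. Note that this holds regardless of the edge $e$, because every pair of distinct vertices carries a pair of i.i.d.\ samples.

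Combining the two steps: $\EE[Z] = \sum_{e \in E} \EE[\ind{e}] = \sum_{e \in E} \mu = \card{E} \cdot \mu$, which is exactly the claim.

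There is essentially no obstacle here; the statement is a one-line consequence of linearity of expectation plus the independence built into the definition of the sampling process. The only thing to be slightly careful about is that the edges of $G$ are between \emph{distinct} vertices (so $u \neq v$), which ensures the two samples $S_P(u), S_P(v)$ are genuinely independent draws rather than the same draw; this is guaranteed since $G$ is a simple undirected graph with no self-loops. The harder work — bounding the variance of $Z$ and invoking Chebyshev-type concentration to separate the YES and NO cases — comes in the subsequent propositions, where the quantities $c(G)$ and $\gamma_P$ enter; the present proposition is just the easy first ingredient.
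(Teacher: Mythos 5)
Your proof is correct and follows essentially the same route as the paper: linearity of expectation over the edge indicators, together with the observation that each indicator has expectation equal to the collision probability $\mu = \sum_i P_i^2$ because the two endpoint samples are independent draws from $P$. The only difference is that you spell out the computation $\Pr[S_P(u)=S_P(v)] = \sum_i P_i^2$, which the paper states without elaboration.
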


\begin{proof}
    For each comparison edge $(u,v) = e\in E$, we have $ \EE[\ind{e}] = \Pr_S[S(u) = S(v)] = \mu $, the probability of a collision when taking two samples. Thus:
    \[
        \EE[Z] = \EE[\sum_{e\in E} \ind{e}] = \sum_{e\in E} \EE[\ind{e}] = \sum_{e\in E} \mu = \card{E} \cdot \mu .
    \]
\end{proof}

As per usual, the next step is to bound the variance of the random variable $Z$ in order to later show concentration to some degree in both YES and NO cases.
The bound we get consists of two separate terms. The first term can be though of as an unavoidable, inherent variance that comes along with each and every comparison.
However, the second term is more tricky, and it is created and described by the dependencies between different comparisons.
For example, in the case where all comparisons are done on freshly drawn pairs of samples, all comparisons are completely independent and indeed the comparison graph -- a perfect matching -- has $c(G) = 0$ which cancels the second term completely.

\begin{lemma}\label{lem:var}
    Fix an algorithm $\alg{A} = (G,\tau)$, and an input distribution $P\in\Delta([n])$. then
    \[
        \Var[Z] = \card{E}\cdot (\mu - \mu^2) + c(G)\cdot (\gamma - \mu^2) .
    \]
\end{lemma}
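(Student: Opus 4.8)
The plan is to compute $\Var[Z]$ directly from the definition $Z = \sum_{e \in E} \ind{e}$ by expanding the variance of a sum: $\Var[Z] = \sum_{e \in E} \Var[\ind{e}] + \sum_{e \neq f} \Cov[\ind{e}, \ind{f}]$. Since each $\ind{e}$ is a Bernoulli random variable with expectation $\mu$ (by the argument in Proposition~\ref{prop:expect}), we immediately get $\Var[\ind{e}] = \mu - \mu^2$, and the diagonal contributes $\card{E}(\mu - \mu^2)$, which is the first term. So the whole task reduces to understanding the covariance terms.

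For the off-diagonal terms I would split the pairs of distinct edges $e = (u,v)$ and $f = (x,y)$ into two cases according to whether they share a vertex. If $e$ and $f$ are vertex-disjoint, then $\ind{e}$ and $\ind{f}$ depend on disjoint sets of the independent samples $S(\cdot)$, hence they are independent and $\Cov[\ind{e}, \ind{f}] = 0$. If $e$ and $f$ share exactly one vertex — say $v = x$, so we have a $2$-path $u - v - w$ with $w = y$ — then I would compute $\EE[\ind{e}\ind{f}] = \Pr[S(u) = S(v) = S(w)] = \sum_i P_i^3 = \gamma$, so $\Cov[\ind{e},\ind{f}] = \gamma - \mu^2$. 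The number of ordered pairs of distinct edges sharing a vertex in this way is exactly $c(G)$, by the definition $c(G) = \card{\set{(u,v,w) \st (u,v),(v,w) \in E}}$ (the paper already notes each $2$-path is counted twice, once per automorphism, which matches the ordered-pair count of $(e,f)$ and $(f,e)$). Two distinct edges cannot share \emph{two} vertices. Summing gives $c(G)(\gamma - \mu^2)$, the second term, and adding the diagonal yields the claimed formula.

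The one point that needs a little care — and is the mild obstacle — is the bookkeeping of the combinatorial count: making sure that the set of ordered pairs $(e,f)$ of distinct edges sharing exactly one vertex is in bijection with the triples counted by $c(G)$, including correctly handling degenerate configurations (e.g., ensuring $u \neq w$ so that $e \neq f$, which is guaranteed since a triple in $\binom{V}{3}$ has distinct entries, and noting that if $u = w$ we would have $e = f$ which is excluded from the off-diagonal sum). Everything else is a routine expansion, so once the counting is pinned down the identity follows immediately.

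\begin{proof}
Write $Z = \sum_{e\in E}\ind{e}$. By bilinearity of covariance,
\[
    \Var[Z] = \sum_{e\in E}\Var[\ind{e}] + \sum_{\substack{e,f\in E \\ e\neq f}} \Cov[\ind{e},\ind{f}].
\]
Each $\ind{e}$ is a Bernoulli variable with $\EE[\ind{e}] = \mu$, so $\Var[\ind{e}] = \mu - \mu^2$ and the first sum equals $\card{E}(\mu - \mu^2)$. For the cross terms, fix distinct edges $e,f\in E$. If $e$ and $f$ are vertex-disjoint, then $\ind{e}$ and $\ind{f}$ are functions of disjoint, independent blocks of samples, so they are independent and $\Cov[\ind{e},\ind{f}] = 0$. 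Since $e\neq f$ are edges (two-element sets), they cannot share both endpoints, so the only remaining case is that $e$ and $f$ share exactly one vertex; write $e = (u,v)$, $f = (v,w)$ with $u,v,w$ distinct, forming a $2$-path. Then
\[
    \EE[\ind{e}\ind{f}] = \Pr_S[S(u) = S(v) = S(w)] = \sum_{i=1}^{n} P_i^3 = \gamma,
\]
so $\Cov[\ind{e},\ind{f}] = \gamma - \mu^2$. The number of ordered pairs $(e,f)$ of distinct edges sharing exactly one vertex is exactly $c(G) = \card{\set{(u,v,w)\in \binom{V}{3} \st (u,v),(v,w)\in E}}$, since such a pair corresponds to the triple $(u,v,w)$ with middle vertex $v$ the shared one, and each unordered $2$-path is counted twice this way. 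Hence the cross terms contribute $c(G)(\gamma - \mu^2)$, and
\[
    \Var[Z] = \card{E}\cdot(\mu - \mu^2) + c(G)\cdot(\gamma - \mu^2). \qedhere
\]
\end{proof}
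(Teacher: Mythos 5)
Your proof is correct and is essentially the paper's argument in a mildly repackaged form: the paper expands $\EE[Z^2]$ as a double sum over ordered edge pairs, classifies pairs by $\card{e_1\cup e_2}\in\{2,3,4\}$ with contributions $\mu$, $\gamma$, $\mu^2$, and then subtracts $\EE[Z]^2$, which is exactly your variance-plus-covariance decomposition with the same three cases and the same $c(G)$ count of ordered vertex-sharing pairs. No gaps; the bookkeeping you flag (distinct edges share at most one vertex, and each unordered $2$-path corresponds to the two ordered pairs $(e,f),(f,e)$) matches the paper's footnote on directed $2$-paths.
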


\begin{proof}
    By definition, $Z = \sum_{e\in E} \ind{e}$, and so
    \[
        \EE[Z^2] = \EE\left[\left(\sum_{e\in E} \ind{e} \right)^2\right] = \EE\left[\sum_{\substack{e_1\in E \\ e_2\in E}} \ind{e_1}\cdot \ind{e_2} \right] = \sum_{\substack{e_1\in E \\ e_2\in E}} \EE[\ind{e_1} \cdot \ind{e_2}] .
    \]

    This sum consists of $\card{E}^2$ summands. We break these into 3 types of summands:
    \begin{itemize}
        \item $\card{e_1 \cup e_2} = 2$, which means $e_1 = e_2$. In this case the summand is simply
        \[
            \EE[\ind{e_1}^2] = \EE[\ind{e_1}] = \mu .
        \]

        \item $\card{e_1 \cup e_2} = 3$. i.e., we have a common vertex. We call the 3 vertices $u,v,w$, and calculate:
        \[
            \EE\left[\ind{e_1} \cdot \ind{e_2}\right] = \EE\left[\ind{S(u) = S(v) = S(w)} \right] = \gamma .
        \]

        \item $\card{e_1 \cup e_2} = 4$. In this case we have 4 distinct vertices, and so the two indicators $\ind{e_1}, \ind{e_2}$ are independent, and we have
        \[
            \EE[\ind{e_1} \cdot \ind{e_2}] = \EE[\ind{e_1}] \cdot \EE[\ind{e_2}] = \mu^2 .
        \]
    \end{itemize}
    There are exactly $\card{E}$ summands of the first type.
    Each summand of the second type corresponds uniquely to a directed $2$-paths in $G$, the one that consists of $e_1, e_2$. As our entire sum goes we over all (ordered) pairs of edges, we must include all directed $2$-paths, and so we have exactly $c(G)$ summands of this type.\footnote{As this might be confusing, we note that both pairs $e_1, e_2$ and $e_2,e_1$ appear in our sum, and indeed each one of them corresponds to a different directed $2$-path}
    The rest of the summands must be of the third type, and their number amounts to $\card{E}^2 - \card{E} - c(G)$.

    Putting it all together, we have
    \begin{align*}
    \EE[Z^2]
    &=
    \card{E}\cdot\mu + c(G)\cdot\gamma + (\card{E}^2 - \card{E} - c(G))\cdot \mu^2\\
    &=
    \card{E}^2\cdot\mu^2 + \card{E}\cdot(\mu-\mu^2) + c(G)\cdot(\gamma-\mu^2) .
    \end{align*}

    We now use Proposition\ref{prop:expect} to see that $\EE[Z]^2 = \card{E}^2\cdot \mu^2$. Plugging this in, we get:
    \[
        \Var[Z] = \EE[Z^2] - \EE[Z]^2 = \card{E}\cdot(\mu-\mu^2) + c(G)\cdot(\gamma-\mu^2) ,
    \]
    Which finishes the proof.
\end{proof}

It is interesting that to accurately express the expectation and variance of $Z$, it is enough to sum up the whole graph $G$ with only two quantified values, and even a bit surprising that $\card{V}$ is not even one of them (though, we later show in Section~\ref{subsec:graph_prop}, that there is an indirect connection).

We now wish to show that any collision-based tester that has certain properties can test uniformity well. After fixing the tester, we partition the proof into three cases according to the input distribution being tested, and show that in each case the tester indeed errs with probability at most $1/4$.

We first give a specific application of Chebyshev's inequality, one that will be used in \emph{all} three cases to follow:

\begin{lemma}\label{lem:chevyshev}
Fix an algorithm $\alg{A} = (G,\tau)$, and an input distribution $P\in\Delta([n])$. If
\begin{equation}\label{eq:chevyshev}
4\Var[Z] \leq \card{E}^2 \cdot \left(\frac{1+\tau\eps^2}{n} - \mu\right)^2
\end{equation}
then $\alg{A}$ errs on the distribution $P$ with probability at most $1/4$.
\end{lemma}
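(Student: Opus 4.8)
The plan is to apply Chebyshev's inequality to the random variable $Z$ around its mean $\EE[Z] = \card{E}\cdot\mu$, and show that the threshold $T = \card{E}\cdot\left(\frac{1+\tau\eps^2}{n}\right)$ falls on the correct side of $\EE[Z]$ with enough room to spare. Concretely, Chebyshev gives $\Pr\left[\left|Z - \EE[Z]\right| \geq \lambda\right] \leq \Var[Z]/\lambda^2$ for any $\lambda > 0$. I want this probability to be at most $1/4$, so I would like to take $\lambda$ such that $\lambda^2 \geq 4\Var[Z]$, and simultaneously ensure that $\left|T - \EE[Z]\right| \geq \lambda$, because then the event ``$Z$ is on the wrong side of $T$'' is contained in the event ``$\left|Z - \EE[Z]\right| \geq \lambda$''.

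First I would set $\lambda := \left|T - \EE[Z]\right| = \card{E}\cdot\left|\frac{1+\tau\eps^2}{n} - \mu\right|$, using Proposition~\ref{prop:expect}. Then the algorithm errs on $P$ only if $Z$ lands on the opposite side of $T$ from $\EE[Z]$ — that is, only if $Z$ differs from $\EE[Z]$ by at least $\lambda$ in the appropriate direction — so the error probability is at most $\Pr\left[\left|Z - \EE[Z]\right| \geq \lambda\right] \leq \Var[Z]/\lambda^2$. Substituting the value of $\lambda$, this bound is exactly $\dfrac{\Var[Z]}{\card{E}^2 \cdot \left(\frac{1+\tau\eps^2}{n} - \mu\right)^2}$, and hypothesis~\eqref{eq:chevyshev} says precisely that this is at most $1/4$. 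That finishes the argument.

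There is one subtlety worth being careful about: the tester outputs YES iff $Z < T$, so a ``wrong side'' event is either $\{Z \geq T\}$ when we wanted YES, or $\{Z < T\}$ when we wanted NO — in both situations $Z$ has moved away from $\EE[Z]$ past $T$, which (since $T$ is on one definite side of $\EE[Z]$) is contained in $\{\left|Z - \EE[Z]\right| \geq \lambda\}$ with $\lambda = \left|T - \EE[Z]\right|$. I should also handle the degenerate case $T = \EE[Z]$ (i.e.\ $\mu = \frac{1+\tau\eps^2}{n}$): then the right-hand side of~\eqref{eq:chevyshev} is $0$, forcing $\Var[Z] = 0$, so $Z = \EE[Z] = T$ deterministically; here one just notes the boundary behavior of the output rule is irrelevant to the later case analysis, or the hypothesis is never actually invoked in that edge case. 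I do not expect any real obstacle — the lemma is a clean packaging step, and the genuine work (choosing $\tau$, bounding $\mu$ and $\gamma$, and checking~\eqref{eq:chevyshev} in each of the three cases) is deferred to the subsequent case analysis.
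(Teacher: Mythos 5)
Your proof is correct and is essentially the paper's own argument: apply Chebyshev to $Z$ around $\EE[Z]=\card{E}\cdot\mu$ with deviation $\lambda=\card{T-\EE[Z]}$, observe that an error forces $\card{Z-\EE[Z]}\geq\card{T-\EE[Z]}$, and then hypothesis~\eqref{eq:chevyshev} gives the bound $1/4$. The only point the paper spells out that you leave implicit is why $T$ lies on the correct side of $\EE[Z]$ in each relevant case: this uses $\tau\in[0,1]$, so that $\EE[Z]\leq T$ when $P=U_n$ and $\EE[Z]=\card{E}\mu\geq\card{E}\frac{1+\eps^2}{n}\geq T$ when $P$ is $\eps$-far.
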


\begin{proof}
Using simpler notations, the r.h.s of \eqref{eq:chevyshev} equals to $\left(\card{E}\cdot \frac{1+\tau\eps^2}{n} - \card{E}\cdot\mu\right)^2 = (T - \EE[Z])^2$, which means \eqref{eq:chevyshev} can be written as:
\[
    4\Var[Z] \leq (T-\EE[Z])^2 .
\]
We also note that since $\tau\in[0,1]$, the threshold is right in between the value of $\EE[Z]$ for the YES case (which corresponds to $\tau = 0$), and the value of $\EE[Z]$ for \emph{any} NO case (which can be as low as the value corresponding to $\tau = 1$).

Now, looking at events over the probability space of the sampling process, one can see that every time the tester $\alg{A}$ errs, it means the value $Z$ ``fell'' on the wrong side of the threshold, which means $T$ lies in between $Z$ and $\EE[Z]$. Formally, it means
that whenever $\alg{A}$ fails, $\card{Z-\EE[Z]} \geq \card{T-\EE[Z]}$. So the last event's probability is larger then the former's:
\[
    \Pr[\alg{A} \text{ fails}] \ \ \leq \ \ \Pr\Big[\card{Z-\EE[Z]} \geq \card{T-\EE[Z]}\Big] \leq \frac{\Var[Z]}{(T-\EE[Z])^2} \leq \frac{1}{4} .
\]
Where for the second and third inequalities we used Chebyshev's inequality, and \eqref{eq:chevyshev}.
\end{proof}

Equipped with Lemma~\ref{lem:chevyshev}, we are ready to find a sufficient condition for the tester to succeed with high probability in each of the three cases. The first type will simply be the YES case where $P = U_n$.

\begin{lemma}[Uniform distribution]\label{lem:uni}
Fix an algorithm $\alg{A} = (G,\tau)$, and fix $P = U_n$. If
\begin{equation}\label{eq:cond_uni}
\card{E} \geq \frac{4(n-1)}{\tau^2\eps^4}
\end{equation}
then $\alg{A}$ fails on $P$ with probability at most $1/4$.
\end{lemma}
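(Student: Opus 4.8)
The plan is to apply Lemma~\ref{lem:chevyshev} to the uniform distribution $P = U_n$, so it suffices to verify that inequality \eqref{eq:chevyshev} holds under the hypothesis \eqref{eq:cond_uni}. First I would substitute the known value $\mu = \mu_{U_n} = 1/n$ into the right-hand side of \eqref{eq:chevyshev}, which becomes $\card{E}^2 \cdot \left(\frac{1+\tau\eps^2}{n} - \frac{1}{n}\right)^2 = \card{E}^2 \cdot \left(\frac{\tau\eps^2}{n}\right)^2 = \frac{\card{E}^2 \tau^2 \eps^4}{n^2}$.

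Next I would bound the left-hand side $4\Var[Z]$ from above using Lemma~\ref{lem:var}. For $P = U_n$ we have $\mu = 1/n$ and $\gamma = \gamma_{U_n} = 1/n^2$, so the second term $c(G)\cdot(\gamma - \mu^2) = c(G)\cdot(1/n^2 - 1/n^2) = 0$ vanishes entirely — this is the pleasant feature of the uniform case. Hence $\Var[Z] = \card{E}\cdot(\mu - \mu^2) = \card{E}\cdot\frac{n-1}{n^2} \leq \card{E}\cdot\frac{n-1}{n^2}$, which I would simply keep as an equality (or loosen to $\card{E}/n$ if convenient). So it remains to check
\[
4\card{E}\cdot\frac{n-1}{n^2} \leq \frac{\card{E}^2 \tau^2 \eps^4}{n^2},
\]
which after multiplying through by $n^2$ and dividing by $\card{E}$ is exactly $\card{E} \geq \frac{4(n-1)}{\tau^2\eps^4}$, i.e. the hypothesis \eqref{eq:cond_uni}. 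Then Lemma~\ref{lem:chevyshev} immediately gives that $\alg{A}$ errs on $U_n$ with probability at most $1/4$.

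There is essentially no obstacle here; the whole content is the cancellation of the $c(G)$ term in the uniform case plus one line of algebra matching the Chebyshev condition to the stated bound. The only thing to be a little careful about is the direction of the inequality and the fact that Lemma~\ref{lem:chevyshev} is stated for a generic $P$ with its $\mu$, so I would make explicit that plugging $\mu = 1/n$ is legitimate and that $\left(\frac{1+\tau\eps^2}{n}-\mu\right)^2$ is then a perfect square in $\tau\eps^2/n$. (One may also note $n - 1 \le n$ to get the very slightly weaker but cleaner sufficient condition $\card{E} \ge \frac{4n}{\tau^2\eps^4}$ appearing as condition~1 of Theorem~\ref{thm:structural}, explaining the discrepancy between the two statements.)
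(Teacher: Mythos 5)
Your proposal is correct and follows exactly the paper's argument: substitute $\mu = 1/n$, $\gamma = 1/n^2$ into Lemma~\ref{lem:var} so the $c(G)$ term cancels, and then rearrange the Chebyshev condition of Lemma~\ref{lem:chevyshev} to recover precisely \eqref{eq:cond_uni}. Nothing is missing.
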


\begin{proof}
For the uniform distribution we have $\mu = 1/n$, and $\gamma = 1/n^2$. And so using \ref{lem:var}:
\[
    \Var[Z] = \card{E}\cdot (1/n - 1/n^2) + c(G)\cdot (1/n^2 - 1/n^2) = \card{E}\cdot \left(\frac{n-1}{n^2}\right) .
\]
Using Lemma~\ref{lem:chevyshev}, we see that for the algorithm to work well on the uniform distribution, it is enough to require:
\[
    4\card{E}\cdot \left(\frac{n-1}{n^2}\right) \leq \card{E}^2 \cdot \left(\frac{\tau\eps^2}{n}\right)^2 ,
\]
which after re-arranging is equivalent to equation \eqref{eq:cond_uni}.
\end{proof}

We now turn to the two other cases. We split the set of $\eps$-far distributions into two, according to the dominant summand in the variance of $Z$ after fixing a tester $\alg{A} = (G, \tau)$. Formally, we define
\[
    \mathcal{P}_1(\alg{A}) := \set{P \in \Delta([n]) \st  \norm{P - U_n} \geq \eps \text{  and  } \card{E}\cdot(\mu-\mu^2) \geq c(G)\cdot(\gamma-\mu^2)} ,
\]
\[
    \mathcal{P}_2(\alg{A}) := \set{P \in \Delta([n]) \st  \norm{P - U_n} \geq \eps \text{  and  } \card{E}\cdot(\mu-\mu^2) < c(G)\cdot(\gamma-\mu^2)} .
\]

The two ``types'' together cover every $\eps$-far distribution (no matter what $\alg{A}$ was chosen).
We first show that both summands in the variance are non-negative, since the two terms $\mu-\mu^2$ and $\gamma-\mu^2$ are non-negative for any distribution $P$.

Indeed, for the first term we use the common fact that $\frac{1}{n} \leq \mu \leq 1$, and so $\mu - \mu^2 = \mu(1-\mu) \geq 0$.

For the second, we use Cauchy-Schwarz inequality:
\[
    \mu^2 = \left(\sum_{i=1}^{n} P_i^2\right)^2 = \left(\sum_{i=1}^{n} P_i^{0.5} \cdot P_i^{1.5}\right)^2 \leq \sum_{i=1}^{n} P_i \cdot \sum_{i=1}^{n} P_i^3 = \gamma .
\]

Now, once a tester $\alg{A}$ is fixed, we have the two lemmas that follow, each deals with one of the two distributions families induced by the algorithm.

\begin{lemma}[Distributions from $\mathcal{P}_1(\alg{A})$]\label{lem:type1}
Fix an algorithm $\alg{A} = (G,\tau)$, and an $\eps$-far distribution $P\in\mathcal{P}_1(\alg{A})$.
If
\[
    \card{E} \geq \frac{16n}{(1-\tau)^2\eps^4} ,
\]
then $\alg{A}$ fails on $P$ with probability at most $1/4$.
\end{lemma}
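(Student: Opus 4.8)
The plan is to combine Lemma~\ref{lem:var} with the Chebyshev-type criterion of Lemma~\ref{lem:chevyshev}, exactly as in the uniform case, but now controlling both summands of the variance at once. Fix $P \in \mathcal{P}_1(\alg{A})$, so by definition $\card{E}\cdot(\mu-\mu^2) \geq c(G)\cdot(\gamma-\mu^2)$, and hence
\[
    \Var[Z] = \card{E}\cdot(\mu-\mu^2) + c(G)\cdot(\gamma-\mu^2) \leq 2\card{E}\cdot(\mu-\mu^2) \leq 2\card{E}\cdot\mu .
\]
So it suffices to verify the hypothesis \eqref{eq:chevyshev} of Lemma~\ref{lem:chevyshev} with $\Var[Z]$ replaced by the upper bound $2\card{E}\mu$, i.e.\ to show
\[
    8\card{E}\cdot\mu \leq \card{E}^2 \cdot \left(\mu - \frac{1+\tau\eps^2}{n}\right)^2 ,
\]
where I have used that for a NO distribution $\mu \geq \frac{1+\eps^2}{n} > \frac{1+\tau\eps^2}{n}$, so the quantity inside the square is positive and equals $\mu - T/\card{E}$ in absolute value.

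The next step is to lower-bound the gap $\mu - \frac{1+\tau\eps^2}{n}$ in terms of $\mu$ itself. Since $P$ is $\eps$-far from uniform we have $\mu \geq \frac{1+\eps^2}{n}$, and a short computation gives
\[
    \mu - \frac{1+\tau\eps^2}{n} = \mu - \frac{1+\eps^2}{n} + \frac{(1-\tau)\eps^2}{n} \geq \frac{(1-\tau)\eps^2}{n} .
\]
I would also use the crude bound $\mu \leq \frac{n\cdot\mu}{(1-\tau)\eps^2}\cdot\frac{(1-\tau)\eps^2}{n}$ is not quite what is needed; instead, the cleaner route is to note $\left(\mu - \frac{1+\tau\eps^2}{n}\right)^2 \geq \frac{(1-\tau)\eps^2}{n}\cdot\left(\mu - \frac{1+\tau\eps^2}{n}\right)$ and combine with a bound of the form $\mu - \frac{1+\tau\eps^2}{n} \geq c\cdot\mu$ for a suitable constant. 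Concretely, since $\mu \geq \frac{1+\eps^2}{n} \geq \frac{1}{n}$ and $\frac{1+\tau\eps^2}{n} \leq \frac{2}{n}$ while also $\frac{1+\tau\eps^2}{n} \leq \frac{1+\eps^2}{n} \leq \mu$ with a multiplicative slack, one gets $\mu - \frac{1+\tau\eps^2}{n} = \Omega(\mu)$; being careful with constants here (using $\eps \leq 1$ and $\tau \in [0,1]$) lets one write $\left(\mu - \frac{1+\tau\eps^2}{n}\right)^2 \geq \frac{(1-\tau)\eps^2}{2n}\cdot\mu$ or similar.

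Putting these together, it is enough that
\[
    8\card{E}\cdot\mu \leq \card{E}^2 \cdot \frac{(1-\tau)\eps^2}{2n}\cdot\mu ,
\]
and cancelling $\card{E}\cdot\mu$ (positive) this reduces to $\card{E} \geq \frac{16n}{(1-\tau)\eps^2}$; comparing with the target $\card{E} \geq \frac{16n}{(1-\tau)^2\eps^4}$ shows the target is strictly stronger (since $(1-\tau)\eps^2 \leq 1$), so the hypothesis of the lemma suffices. Then Lemma~\ref{lem:chevyshev} immediately gives failure probability at most $1/4$ on $P$.

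The main obstacle I anticipate is bookkeeping the constants: the slack between $\frac{1+\tau\eps^2}{n}$ and $\mu$ must be split carefully so that one factor absorbs the ``$8$'' (or ``$4$'') from Chebyshev and the other factor is linear in $\mu$, while ensuring the resulting requirement on $\card{E}$ is genuinely implied by the stated condition $\card{E} \geq \frac{16n}{(1-\tau)^2\eps^4}$ for all $\eps \in (0,1]$ and $\tau \in [0,1]$. The conceptual content — drop the $c(G)$ term using the $\mathcal{P}_1$ defining inequality, then it is essentially the same Chebyshev argument as the uniform case — is straightforward; the care is entirely in matching constants so no $\eps$-dependent or $\tau$-dependent factor is lost.
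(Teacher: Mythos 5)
Your overall skeleton is the same as the paper's: use the defining inequality of $\mathcal{P}_1(\alg{A})$ to drop the $c(G)$ term, bound $\Var[Z]\le 2\card{E}\mu$, and feed this into the Chebyshev criterion of Lemma~\ref{lem:chevyshev}. But the quantitative step where you lower-bound the gap is wrong, and the error is not mere bookkeeping. Your claim that $\mu - \frac{1+\tau\eps^2}{n} = \Omega(\mu)$ (with an absolute constant) fails exactly in the critical regime: an $\eps$-far distribution can have $\mu$ arbitrarily close to $\frac{1+\eps^2}{n}$, in which case the gap equals $\frac{(1-\tau)\eps^2}{n}$, i.e.\ only a $\frac{(1-\tau)\eps^2}{1+\eps^2}$-fraction of $\mu$, not a constant fraction. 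Consequently your proposed inequality $\left(\mu - \frac{1+\tau\eps^2}{n}\right)^2 \geq \frac{(1-\tau)\eps^2}{2n}\,\mu$ is false there (at $\mu=\frac{1+\eps^2}{n}$ the left side is $\frac{(1-\tau)^2\eps^4}{n^2}$ while the right side is $\frac{(1-\tau)\eps^2(1+\eps^2)}{2n^2}$, so it would require $(1-\tau)\eps^2\ge\frac{1+\eps^2}{2}$). The correct factor is $\frac{(1-\tau)^2\eps^4}{2n}$, not $\frac{(1-\tau)\eps^2}{2n}$, and with it the cancellation gives exactly the stated condition $\card{E}\ge\frac{16n}{(1-\tau)^2\eps^4}$. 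A red flag you should have caught: your chain of inequalities ``proves'' that $\card{E}\ge\frac{16n}{(1-\tau)\eps^2}$ comparisons already suffice, which is impossible -- for a perfect-matching comparison graph this would mean $O(n/\eps^2)$ independent coin flips distinguish bias $\frac1n$ from $\frac{1+\eps^2}{n}$, whereas $\Omega(n/\eps^4)$ are needed (this is precisely why all conditions in Theorem~\ref{thm:structural} scale as $n/\eps^4$).

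To repair the argument you must handle the worst case over $\mu$ honestly, which is what the paper does: write $\mu=\frac{1+\alpha}{n}$ with $\alpha\ge\eps^2$, so the Chebyshev requirement becomes $\card{E}\ge\frac{8(1+\alpha)n}{(\alpha-\tau\eps^2)^2}$, and then show this right-hand side is maximized (over the admissible range $\alpha\ge\eps^2$) at $\alpha=\eps^2$ -- the paper verifies this by a monotonicity/derivative computation. At $\alpha=\eps^2$ the requirement reads $\card{E}\ge\frac{8(1+\eps^2)n}{(1-\tau)^2\eps^4}$, and since $1+\eps^2\le 2$ this is implied by the hypothesis $\card{E}\ge\frac{16n}{(1-\tau)^2\eps^4}$. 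Equivalently, you could prove the corrected pointwise inequality $\left(\mu-\frac{1+\tau\eps^2}{n}\right)^2\ge\frac{(1-\tau)^2\eps^4}{2n}\,\mu$ for all $\mu\ge\frac{1+\eps^2}{n}$, but that statement itself needs the same kind of argument over all $\alpha\ge\eps^2$, not just at the boundary; as written, your proposal never supplies it.
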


\begin{proof}
Since $P\in\mathcal{P}_1(\alg{A})$, we know that $4\Var[Z] \leq 8\card{E}\cdot(\mu-\mu^2) \leq 8\card{E}\cdot\mu$.

Therefore, to satisfy~\eqref{eq:chevyshev}, it suffice to require
\[
    8\card{E}\cdot\mu \leq  \card{E}^2 \cdot \left(\frac{1+\tau\eps^2}{n} - \mu\right)^2 .
\]

Let $\alpha$ be such that $\mu = \frac{1+\alpha}{n}$. Since $P$ is $\eps$-far from uniform, we know that $\alpha \geq \eps^2$. Using this notation, we have
\[
    8\card{E}\cdot\left(\frac{1+\alpha}{n}\right) \leq  \card{E}^2 \cdot \left(\frac{1+\tau\eps^2}{n} - \frac{1+\alpha}{n}\right)^2 ,
\]
and after re-arranging, we get
\[
    \card{E} \geq \frac{8(1+\alpha)n}{(\tau\eps^2-\alpha)^2} .
\]

As this must be true for \emph{any} value of $\alpha$, we show that it is enough to require that it holds for $\alpha = \eps^2$.

In order to show this, we think of $f(\alpha) = 8n(1+\alpha)$, $\ \ g(\alpha) = \tau\eps^2-\alpha$, and so
\[
    \left(\frac{f}{g^2}\right)' = \frac{8n(1+\alpha)'\cdot g^2(\alpha) - 8n\cdot 2g(\alpha)}{g^4(\alpha)} = \frac{8n}{g^4(\alpha)} \cdot (1+\alpha -2g(\alpha)) \cdot g(\alpha) .
\]

But the term $8n / g^4(\alpha)$ is positive, and whenever $\alpha \geq \eps^2$ (and since $\tau \leq 1$), we also have it that $g(\alpha) \leq 0$ as well as $(1+\alpha-2g(\alpha) \geq 0$. To conclude, for the range $\alpha \geq \eps^2$, we have that $(f/g^2)' \leq 0$, which means the maximum of the function $(f/g^2)$ in exactly when $\alpha = \eps^2$.

A sufficient condition, then, for \eqref{eq:chevyshev} to hold (and subsequently, for $\alg{A}$ to err on $P$ with probability at most $1/4$) is
\[
    \card{E} \geq \frac{16n}{(1-\tau)^2\eps^4} ,
\]
which ends the proof.

\end{proof}

\begin{lemma}[Distributions from $\mathcal{P}_2(\alg{A})$]\label{lem:type2}
Fix an algorithm $\alg{A} = (G,\tau)$, and fix an $\eps$-far distribution $P\in\mathcal{P}_2(\alg{A})$. If
\[
    c(G) \leq \frac{\card{E}^2(1-\tau)^2\eps^2}{16\sqrt{n}} ,
\]
then $\alg{A}$ fails on $P$ with probability at most $1/4$.
\end{lemma}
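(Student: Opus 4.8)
The plan is to follow the same template as Lemmas~\ref{lem:uni} and~\ref{lem:type1}: reduce the claim to the Chebyshev-type inequality~\eqref{eq:chevyshev} of Lemma~\ref{lem:chevyshev} and verify it using the exact variance formula of Lemma~\ref{lem:var}. Since $P\in\mathcal{P}_2(\alg{A})$, the \emph{second} summand of $\Var[Z]=\card{E}(\mu-\mu^2)+c(G)(\gamma-\mu^2)$ is the larger one, and both summands were already shown (right before the lemma) to be nonnegative; hence $\Var[Z]<2c(G)(\gamma-\mu^2)$. Consequently it is enough to establish
\[
    8\,c(G)(\gamma-\mu^2)\ \le\ \card{E}^2\left(\frac{1+\tau\eps^2}{n}-\mu\right)^2 ,
\]
and then substitute the hypothesis $c(G)\le \card{E}^2(1-\tau)^2\eps^2/(16\sqrt{n})$.

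The heart of the argument is a sufficiently sharp upper bound on $\gamma-\mu^2$. I would write $P_i=\tfrac1n+\delta_i$, so that $\sum_i\delta_i=0$ and $\mu=\tfrac1n+\sum_i\delta_i^2$; expanding $\gamma=\sum_i(\tfrac1n+\delta_i)^3$ and subtracting $\mu^2$ gives $\gamma-\mu^2=\tfrac1n\sum_i\delta_i^2+\sum_i\delta_i^3-(\sum_i\delta_i^2)^2$, and dropping the last (nonpositive) term and using $\sum_i\delta_i^3\le\sum_i|\delta_i|^3\le(\sum_i\delta_i^2)^{3/2}$ (monotonicity of the $\ell_p$-norms in $p$) yields
\[
    \gamma-\mu^2\ \le\ \frac{\alpha}{n^2}+\frac{\alpha^{3/2}}{n^{3/2}},
\]
where I set $\alpha:=n\mu-1$, so that $\sum_i\delta_i^2=\mu-\tfrac1n=\alpha/n$, and $\alpha\ge\eps^2$ because $P$ is $\eps$-far (this is the same $\alpha$ used in Lemma~\ref{lem:type1}).

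For the right-hand side, exactly as in Lemma~\ref{lem:type1} I write $\mu=\tfrac{1+\alpha}{n}$, so that $\tfrac{1+\tau\eps^2}{n}-\mu=-\tfrac{\alpha-\tau\eps^2}{n}$, and since $\alpha\ge\eps^2$ and $\tau\le1$ we have $\alpha-\tau\eps^2\ge(1-\tau)\alpha$, hence $(\tfrac{1+\tau\eps^2}{n}-\mu)^2\ge (1-\tau)^2\alpha^2/n^2$. Plugging in the bound on $c(G)$ and the bound on $\gamma-\mu^2$, cancelling the common factor $\card{E}^2(1-\tau)^2$ and rearranging, the target inequality reduces to $\tfrac1{\sqrt n}+\sqrt\alpha\le \tfrac{2\alpha}{\eps^2}$. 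This holds because $\alpha\ge\eps^2$ gives $\tfrac{\alpha}{\eps^2}\ge1\ge\tfrac1{\sqrt n}$, while $\sqrt\alpha\le\tfrac{\alpha}{\eps^2}$ is equivalent to $\eps^4\le\alpha$, which again follows from $\alpha\ge\eps^2\ge\eps^4$. Invoking Lemma~\ref{lem:chevyshev} then completes the proof.

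The main obstacle I anticipate is obtaining a bound on $\gamma-\mu^2$ that is tight enough in the \emph{near-uniform} regime $\alpha\approx\eps^2$: the crude estimate $\gamma\le\mu^{3/2}$ loses a factor of $\eps^2$ there and would force the much stronger hypothesis $c(G)=O(\card{E}^2\eps^4/\sqrt n)$, which is not what the lemma assumes. The ``deficit'' substitution $P_i=\tfrac1n+\delta_i$ is precisely what recovers the correct $\eps^2$-dependence, and the closing estimate needs a bit of care with the $\eps$- and $n$-bookkeeping so as not to give that factor back.
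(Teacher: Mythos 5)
Your proposal is correct and follows essentially the same route as the paper's proof: bound $4\Var[Z]\le 8c(G)(\gamma-\mu^2)$ using the definition of $\mathcal{P}_2(\alg{A})$, expand $\gamma-\mu^2$ via the deficit vector $P-U_n$ to get the bound $\alpha/n^2+\alpha^{3/2}/n^{3/2}$ with $\alpha=n\mu-1\ge\eps^2$, and compare against $\card{E}^2\left(\frac{1+\tau\eps^2}{n}-\mu\right)^2\ge\card{E}^2(1-\tau)^2\alpha^2/n^2$ before invoking Lemma~\ref{lem:chevyshev}. The only difference is cosmetic: you reduce the final comparison to $\frac{1}{\sqrt{n}}+\sqrt{\alpha}\le\frac{2\alpha}{\eps^2}$, whereas the paper rearranges the hypothesis into $8c(G)\left(\frac{1}{\alpha}+\frac{\sqrt{n}}{\sqrt{\alpha}}\right)\le\card{E}^2(1-\tau)^2$, which is the same algebra.
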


\begin{proof}
We again want to show that given our premise, inequality \eqref{eq:chevyshev} holds.

We start with a finer analysis of $\gamma - \mu^2$. For this purpose, we define the $n$ dimensional vector $a := P - U_n$, which means $a_i = P_i - 1/n$. We thus have $\sum_{i=1}^{n} a_i = 0$. Now, we compute:
\begin{align*}
    \gamma - \mu^2
    &=
    \sum_{i=1}^{n} (1/n + a_i)^3 - \left[\sum_{i=1}^{n} (1/n + a_i)^2\right]^2\\
    &=
    \left(\sum_{i=1}^{n} 1/n^3 + 3\sum_{i=1}^{n} a_i/n^2 + 3\sum_{i=1}^{n} a_i^2/n + \sum_{i=1}^{n} a_i^3\right)\\
    &- \left(\sum_{i=1}^{n} 1/n^2 + 2\sum_{i=1}^{n} a_i/n + \sum_{i=1}^{n} a_i^2\right)^2\\
    &=
    \left(1/n^2 + 0 + (3/n)\norm{a}_2^2 + \norm{a}_3^3\right) - \left(1/n + 0 + \norm{a}_2^2\right)^2\\
    &=
    \left(1/n^2 + (3/n)\norm{a}_2^2 + \norm{a}_3^3\right) - \left(1/n^2 + (2/n)\norm{a}_2^2 + \norm{a}_2^4\right)\\
    &=
    \norm{a}_2^2/n + \norm{a}_3^3 - \norm{a}_2^4\\
    &\leq
    \norm{a}_2^2/n + \norm{a}_2^3\\
    \end{align*}
In the last step we used the fact $\norm{\cdot}_3 \leq \norm{\cdot}_2$, and omitted the last term. We now again denote $\mu = \frac{1+\alpha}{n}$, so that $\norm{a}_2^2 = \alpha/n$. Combining the above with Lemma \ref{lem:var}, and the assumption on $P$, we can upper-bound the variance:
\begin{equation}\label{eq:var_bound}
4\Var[Z] \leq 8c(G)\cdot(\gamma-\mu^2) \leq 8c(G)\cdot\left( \frac{\alpha}{n^2} + \frac{\alpha^{1/5}}{n^{1.5}} \right) = 8c(G) \cdot \left(\frac{1}{\alpha} + \frac{\sqrt{n}}{\sqrt{\alpha}}\right)\frac{\alpha^2}{n^2}
\end{equation}
The premise we have is equivalent to:
\[
    \frac{\card{E}^2}{c(G)} \geq \frac{16\sqrt{n}}{(1-\tau)^2\eps^2} = \frac{8}{(1-\tau)^2}\cdot \left(\frac{\sqrt{n}}{\eps^2}+\frac{\sqrt{n}}{\eps^2}\right) ,
\]

and using $\sqrt{n} \geq 1$ and $\eps^2 \leq \alpha$, as well as $\eps^2 \leq \eps \leq \sqrt{\alpha}$, we deduce:
\[
    \frac{\card{E}^2}{c(G)} \geq \frac{8}{(1-\tau)^2} \cdot \left(\frac{1}{\alpha} + \frac{\sqrt{n}}{\sqrt{\alpha}}\right) ,
\]
which, after re-arranging, is equivalent to
\[
    8c(G) \cdot \left(\frac{1}{\alpha} + \frac{\sqrt{n}}{\sqrt{\alpha}}\right) \leq \card{E}^2 (1-\tau)^2 .
\]
Combined with inequality \eqref{eq:var_bound} which bounds the variance, we have:
\[
    4\Var[Z] \leq \card{E}^2 \frac{(1-\tau)^2\alpha^2}{n^2} .
\]
We again use the fact $\alpha \geq \eps^2$ to deduce that $(1-\tau)\alpha = \alpha - \tau\alpha \leq \alpha - \tau\eps^2$. Now we can conclude:
\[
    4\Var[Z] \leq \card{E}^2 \left(\frac{(1-\tau)\alpha}{n}\right)^2 \leq \card{E}^2 \cdot \left(\frac{\alpha - \tau\eps^2}{n}\right)^2 = \card{E}^2 \cdot \left(\frac{1+\tau\eps^2}{n} - \mu\right)^2 ,
\]
Which means that indeed \eqref{eq:chevyshev} holds, and thus by applying \ref{lem:chevyshev}, the tester $\alg{A}$ errs on samples from $P$ with probability at most $1/4$.
\end{proof}

The proof of Theorem~\ref{thm:structural} then follows from the last three lemmas. Once the tester $\alg{A} = (G,\tau)$ is fixed, each of the three conditions guarantees correctness (w.p at least $3/4$) on $P = U_n$, $P\in\mathcal{P}_1(\alg{A})$ and $P\in\mathcal{P}_2(\alg{A})$, respectively.

\subsection{Proof of Lemma~\ref{lem:disj_cliques}}
We re-formulate the lemma for the disjoint cliques graph, now followed by a formal proof.
\begin{lemma-repeat}{lem:disj_cliques}
    Fix $n, \eps$. Fix a comparison graph $G = \bigsqcup_{i=1}^{\ell} G_i$, where each $G_i$ is isomorphic to $K_{q}$, for some $q\geq 3$.
    If the following hold for an algorithm $\alg{A} := (G,\tau)$:
     \begin{enumerate}
            \item $q\sqrt{\ell} \geq \frac{\sqrt{12}\sqrt{n}}{\tau \cdot \eps^2}$

            \item $q\sqrt{\ell} \geq \frac{\sqrt{48}\sqrt{n}}{(1-\tau) \cdot \eps^2}$

            \item $q\ell \geq \frac{24\sqrt{n}}{(1-\tau)^2 \eps^2}$
     \end{enumerate}

    then $\alg{A}$ is an $\eps$-uniformity tester that uses $\card{V}$ samples.

\end{lemma-repeat}

\begin{proof}
      We wish to show the three conditions of Theorem~\ref{thm:structural} can be relaxed using the structure of the graph. It is enough to show a lower bound on $\card{E}$, and an upper bound on $c(G)$, which indeed can be done. Whenever $q\geq 3$, we have:
      \begin{align*}
            \card{E} &= \ell\cdot \frac{1^2-1}{2} \geq \frac{\ell\cdot q^2}{3} &
            c(G) &= \ell\cdot \binom{q}{3} \leq \frac{\ell\cdot q^3}{6}
      \end{align*}
      And so
      \[
          \frac{\card{E}^2}{c(G)} \geq \frac{\ell^2 q^4/9}{\ell q^3/6} = \frac{2\ell q}{3} .
      \]
      Which means that we can replace the premise in the main theorem with a the new form:
        \begin{enumerate}
            \item $q\sqrt{\ell} \geq \frac{\sqrt{12}\sqrt{n}}{\tau \cdot \eps^2}$

            \item $q\sqrt{\ell} \geq \frac{\sqrt{48}\sqrt{n}}{(1-\tau) \cdot \eps^2}$

            \item $q\ell \geq \frac{24\sqrt{n}}{(1-\tau)^2 \eps^2}$
        \end{enumerate}
\end{proof}

\section{Limitations of the Method}
\label{sec:lower}
In this section our goal is to better understand the possibilities (and impossibilities) of collision-based testing, in comparison to arbitrary testing methods (many of which were developed for specific testing task, as mentioned earlier in the text).

Here, we use Definition~\ref{def:cb_tester} to shift the discussion to graph terminology, focusing on the comparison graph.
We leverage simple graph properties to show some limitations that apply when generating testers, such as done in Section~\ref{sec:results}.

The results hereinafter apply to testers that can be proven to work using our structural theorem. However, we formulate a simple conjecture (regarding the minimal number of comparisons required for any tester) that implies similar results for any tester that answers Definition~\ref{def:cb_tester}.
In matter of fact, we will formally show that the comparison graphs we have chosen before are essentially the best that could have been chosen for their respective models. If the conjecture is true, this would imply all of these testers are optimal with respect to any tester from Definition~\ref{def:cb_tester}).

We emphasize that in some models, previous works already show tight (or near-tight) lower bounds for arbitrary testers.
Interestingly enough, our optimal collision-based testers achieve optimal (or near-optimal) results even if one is allowed to use an arbitrary method. Thus, the sample complexity we achieve is not only optimal with respect to collision-based testing, but also near-optimal with respect to any uniformity tester. This is evident from a variety of known lower bounds in some of the models considered in this paper (See \cite{Paninski08,DK16,MMO19,DGKR19,ACLST20} for these lower bounds\footnote{More lower bounds in the simultaneous model hold only for the case of a single sample per player, not considered in this paper. e.g.,~\cite{ACHST20,ACT20A}}). The full discussion of collision-based testing vs. arbitrary testers appears at the end of this section, in hope to give a clearer picture of how strong collision-based testing can be.

We first go on to show some basic inequalities that hold in any simple graph (and in our comparison graphs as well). These will serve us for the rest of this section. The goal is to establish the inherent connection between the $3$ sizes: $\card{V}, \card{E}, c(G)$ in any simple graph $G$.

\subsection{Basic Graph Properties}
\label{subsec:graph_prop}
We start by stating the following easy lemma that connects our quantities of interest in any simple graph.
\begin{lemma}\label{lem:graph}
  For any simple graph $G$, it holds that:
  \begin{enumerate}
    \item $\card{E} \leq \frac{\card{V}^2}{2}$

    \item $\card{V} \geq \frac{4\card{E}^2}{2\card{E}+c(G)}$

    \item if $\card{V} \leq \card{E}$ then $c(G) \geq 2\card{E}$

  \end{enumerate}

\end{lemma}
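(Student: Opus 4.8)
The plan is to verify each of the three inequalities by a direct degree-sequence computation, writing $d_1,\dots,d_{\card{V}}$ for the degrees of the vertices of $G$ and repeatedly using the identities $\sum_i d_i = 2\card{E}$ and $c(G) = \sum_i \binom{d_i}{2} = \tfrac12\sum_i d_i(d_i-1) = \tfrac12\left(\sum_i d_i^2 - 2\card{E}\right)$, so that $\sum_i d_i^2 = 2\card{E} + c(G)$. For part~(1) I would simply note that a simple graph on $\card{V}$ vertices has at most $\binom{\card{V}}{2} \le \card{V}^2/2$ edges. For part~(2) I would apply Cauchy--Schwarz (or power-mean) to the degree sequence: $(2\card{E})^2 = \left(\sum_i d_i\right)^2 \le \card{V}\sum_i d_i^2 = \card{V}\,(2\card{E}+c(G))$, which rearranges to $\card{V} \ge \frac{4\card{E}^2}{2\card{E}+c(G)}$.

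For part~(3), I would argue by contraposition through part~(2): if $c(G) < 2\card{E}$, then $2\card{E} + c(G) < 4\card{E}$, so part~(2) gives $\card{V} \ge \frac{4\card{E}^2}{2\card{E}+c(G)} > \frac{4\card{E}^2}{4\card{E}} = \card{E}$, contradicting the hypothesis $\card{V} \le \card{E}$. Hence $c(G) \ge 2\card{E}$ whenever $\card{V} \le \card{E}$. (Alternatively, one could observe that $c(G) = \sum_i\binom{d_i}{2}$ is a convex function of the degree sequence subject to $\sum_i d_i = 2\card{E}$ with $\card{V}$ terms, minimized when all $d_i$ are equal to $2\card{E}/\card{V} \ge 2$, giving $c(G) \ge \card{V}\binom{2\card{E}/\card{V}}{2} = \card{E}\left(\tfrac{2\card{E}}{\card{V}}-1\right) \ge \card{E}$ when $\card{V}\le 2\card{E}$; but routing through part~(2) is cleaner and gives the sharper constant stated.)

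There is no real obstacle here --- all three parts are elementary. The only point requiring a little care is making sure the combinatorial identity $c(G) = \sum_i \binom{d_i}{2}$ matches the paper's convention that $2$-paths are counted with both orientations: a directed $2$-path is a triple $(u,v,w)$ with $(u,v),(v,w)\in E$ and $u\ne w$, so it is determined by its center $v$ together with an ordered pair of distinct neighbors of $v$, giving $\sum_v d_v(d_v-1) = 2\sum_v\binom{d_v}{2}$ directed $2$-paths; I should double-check whether the intended count is $\sum_v d_v(d_v-1)$ or $\sum_v\binom{d_v}{2}$ and adjust the constant in the Cauchy--Schwarz step accordingly (the statements as written are consistent with $c(G) = \sum_v\binom{d_v}{2}$, i.e. each unordered $2$-path counted once, so I would use that). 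Everything else is a one-line rearrangement.
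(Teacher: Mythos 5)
Your overall route is exactly the paper's: part (1) from $\card{E}\le\binom{\card{V}}{2}$, part (2) from Cauchy--Schwarz applied to the degree sequence via $\left(\sum_v d_v\right)^2\le\card{V}\sum_v d_v^2$, and part (3) by feeding the hypothesis $\card{V}\le\card{E}$ back into part (2). With the identity $\sum_v d_v^2 = 2\card{E}+c(G)$ all three steps go through verbatim, so the mathematical content is fine.

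The one genuine problem is how you resolve the counting convention at the end. The paper defines $c(G)$ as the \emph{directed} count, $c(G)=\sum_v d_v(d_v-1)$ (each $2$-path counted twice, once per orientation), and that is the only reading under which the lemma as stated is true: with the unordered convention $c(G)=\sum_v\binom{d_v}{2}$ that you say you would adopt, Cauchy--Schwarz only yields $\card{V}\ge \frac{4\card{E}^2}{2\card{E}+2c(G)}$, and parts (2) and (3) as written actually fail --- e.g.\ for $K_q$ the unordered reading of (2) demands $q\ge 2(q-1)$, and for $K_3$ part (3) would require $3\ge 6$. Your own algebra already betrays the mismatch: from $c(G)=\sum_v\binom{d_v}{2}=\tfrac12\left(\sum_v d_v^2-2\card{E}\right)$ one gets $\sum_v d_v^2=2\card{E}+2c(G)$, not the $2\card{E}+c(G)$ you then use. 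So the fix is simply to adopt the paper's directed convention (stated explicitly in its preliminaries), under which $\sum_v d_v^2=2\card{E}+c(G)$ holds and your argument coincides with the paper's proof, constants and all.
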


Before proving the lemma, we solve a small mystery that we handled before. In the perfect matching graph, we have a minimal amount of dependencies. In fact, so little, that it overdoes it: it needs to take $\Theta(n/\eps^4)$ samples in order to make the $\Theta(n/\eps^4)$ comparisons needed to apply our structural theorem. This is far from optimal. However, we now understand this pathology, via the following corollary:
\begin{corollary}\label{cor:graph}
  Whenever $\card{V} \leq \card{E}$, we have:
  \[
    \card{V}\cdot c(G) \geq  2\card{E}^2
  \]
\end{corollary}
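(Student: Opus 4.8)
\textbf{Proof plan for Corollary~\ref{cor:graph}.}
The plan is to derive the corollary immediately from parts~2 and~3 of Lemma~\ref{lem:graph}, so the only real work is bookkeeping. Assume $\card{V} \le \card{E}$. If $\card{E} = 0$ the claim is trivial (both sides are $0$), so assume $\card{E} \ge 1$. By part~3 of Lemma~\ref{lem:graph}, the hypothesis $\card{V} \le \card{E}$ gives $c(G) \ge 2\card{E} > 0$, which in particular tells us $c(G)$ is strictly positive and that the denominator $2\card{E} + c(G)$ appearing in part~2 is well behaved.

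Next I would chain the two facts. From $c(G) \ge 2\card{E}$ we get $2\card{E} + c(G) \le c(G) + c(G) = 2c(G)$. Feeding this into part~2 of Lemma~\ref{lem:graph},
\[
    \card{V} \;\ge\; \frac{4\card{E}^2}{2\card{E} + c(G)} \;\ge\; \frac{4\card{E}^2}{2c(G)} \;=\; \frac{2\card{E}^2}{c(G)} .
\]
Multiplying through by $c(G) > 0$ yields $\card{V}\cdot c(G) \ge 2\card{E}^2$, which is exactly the statement.

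There is essentially no obstacle internal to the corollary: the content lives entirely in Lemma~\ref{lem:graph}, whose part~2 is the Cauchy--Schwarz / convexity estimate relating $\card{V}$, $\card{E}$ and the number of $2$-paths (counting $2$-paths through a fixed vertex of degree $d$ contributes $\binom{d}{2}$, and $\sum_i d_i = 2\card{E}$ with $\sum_i \binom{d_i}{2} = c(G)$ forces $\card{V}$ to be large when both $\card{E}$ is large and $c(G)$ is small), and part~3 records that a graph with at least as many edges as vertices must have average degree at least $2$, hence many $2$-paths. Once those are in hand, the corollary is the one-line consequence above; the only point requiring a word of care is checking $c(G) > 0$ before dividing, which is precisely what part~3 supplies under the hypothesis $\card{V} \le \card{E}$.
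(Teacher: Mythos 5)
Your proposal is correct and follows exactly the paper's argument: apply part~3 of Lemma~\ref{lem:graph} to get $c(G) \ge 2\card{E}$, bound $2\card{E}+c(G) \le 2c(G)$, and plug this into part~2 to obtain $\card{V} \ge 2\card{E}^2/c(G)$. Your extra remarks about the trivial case $\card{E}=0$ and the positivity of $c(G)$ are harmless bookkeeping on top of the same one-line derivation.
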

\begin{proof}[Proof of Corollary \ref{cor:graph}]
    Using items 2 and 3 of lemma \ref{lem:graph}, we see the following: whenever $\card{V} \leq \card{E}$, we know $c(G) \geq 2\card{E}$, and plugged in the second item:
    \[
        \card{V} \geq \frac{4\card{E}^2}{2\card{E}+c(G)} \geq \frac{4\card{E}^2}{2c(G)} = \frac{2\card{E}^2}{c(G)}
    \]
\end{proof}
It is now obvious form the corollary, that we have some tension: On the one hand, we need $c(G)$ to be small enough apply Theorem \ref{thm:structural}. On the other hand, we do not want $c(G)$ to be \emph{too} small, as this would force us to have a large amount of samples -- the value that we actually wish to minimize.

We end this section with the proof of the lemma
\begin{proof}
  The first item is true because the graph is simple (no loops or double-edges).

  For the second item, let us denote by $d_v$ the degree of vertex $v$. It is known that $2\card{E} = \sum_{v\in V} d_v$. It is also clear that the set of 2path subgraphs can be chosen by taking the middle vertex of the 2path, and then choosing the 2 edges of the path (the order matters). So each vertex $v$ is the middle of exactly $d_v(d_v-1)$ such 2paths, and overall we have $c(G) = \sum_{v\in V} d_v^2 - d_v$. We now use Cauchy-Schwarz inequality:
  \[
    (2\card{E})^2 = \left(\sum_{v\in V} d_v\right)^2 \leq\ \left(\sum_{v\in V} 1^2 \right) \left(\sum_{v\in V} d_i^2 \right) = \card{V} \cdot (2\card{E} + c(G))
  \]
  which proves the second item.

  For the third item, we plug the premise in the second item:
  \[
    \card{E} \geq \card{V} \geq \frac{4\card{E}^2}{2\card{E}+c(G)}
  \]
  which means $2\card{E}+c(G) \geq 4\card{E}$, or simply $c(G) \geq 2\card{E}$.
\end{proof}

\subsection{Conditional Impossibility Results}
In order to understand the scope of possible applications of Theorem~\ref{thm:structural}, we stick to graph notations, and combine the $3$ different types of assertions concerning our comparison graph:
\begin{enumerate}
  \item The requirements needed to apply Theorem~\ref{thm:structural}.
  \item The basic graph properties of Lemma~\ref{lem:graph} and Corollary~\ref{cor:graph}.
  \item Individual assertions that apply for the model at hand.
\end{enumerate}

We note that all the results below only use the first requirement of Theorem~\ref{thm:structural}, which asks for the total number of comparisons to be large enough. We believe this requirement to be inherent for \emph{any} collision-based tester to be able to test uniformity. We formulate this belief as the following conjecture, which would strengthen the impossibility results of this section to be independent of the analysis (instead of limitation of Theorem~\ref{thm:structural}, we get a lower bound for all collision-based testers as defined in Definition~\ref{def:cb_tester}).
\begin{conjecture}
\label{conj:edges}
    Any collision-based tester sa defined in Definition~\ref{def:cb_tester} that tests for uniformity with error $1/4$ must have
    \[
        \card{E} = \Omega\left(\frac{n}{\eps^4}\right)
    \]
\end{conjecture}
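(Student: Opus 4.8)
\textbf{Proof proposal for Conjecture~\ref{conj:edges}.}

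The plan is a distinguishing (Yao-type) argument carried out on the single statistic $Z$. Fix an arbitrary collision-based tester $\alg{A} = (G,\tau)$ and suppose for contradiction that $\card{E}$ is much smaller than $n/\eps^4$. Since $\alg{A}$'s verdict is the deterministic threshold rule $\ind{Z < T}$ with $T = \card{E}(1+\tau\eps^2)/n$, it suffices to exhibit one input on which $Z$ lands on the wrong side of $T$ with probability exceeding $1/4$. I would use two reference inputs: the uniform distribution $U_n$, and a single ``balanced perturbation'' $P^\star$ with $P^\star_i = (1+\eps)/n$ on one half of $[n]$ and $(1-\eps)/n$ on the other, which is $\eps$-far and has $\mu_{P^\star} = (1+\eps^2)/n$ and $\gamma_{P^\star} = (1+3\eps^2)/n^2$. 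By Proposition~\ref{prop:expect} and Lemma~\ref{lem:var}, under $U_n$ we have $\EE[Z] = \card{E}/n$ and $\Var[Z] = \card{E}(n-1)/n^2$, so $T$ sits a distance $\tau\eps^2\card{E}/n$ above the uniform mean; under $P^\star$ we have $\EE[Z] = \card{E}(1+\eps^2)/n$ and $\Var[Z] \geq \card{E}(\mu_{P^\star}-\mu_{P^\star}^2) = \Omega(\card{E}/n)$ (the $c(G)$-term is nonnegative), so $T$ sits a distance $(1-\tau)\eps^2\card{E}/n$ below the $P^\star$-mean.

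The core is then a two-sided anti-concentration step. On the uniform side, if $\tau^2\eps^4\card{E} = o(n)$ then the gap $\tau\eps^2\card{E}/n$ is $o\bigl(\sqrt{\Var[Z]}\bigr)$, so an anti-concentration bound for $Z$ forces $\Pr_{U_n}[Z \geq T] > 1/4$ and $\alg{A}$ rejects the uniform distribution too often; hence a valid tester needs $\card{E} = \Omega(n/(\tau^2\eps^4))$. Symmetrically, on the $P^\star$ side, if $(1-\tau)^2\eps^4\card{E} = o(n)$ then the gap $(1-\tau)\eps^2\card{E}/n$ is $o\bigl(\sqrt{\Var_{P^\star}[Z]}\bigr)$, and lower-tail anti-concentration forces $\Pr_{P^\star}[Z < T] > 1/4$; hence a valid tester needs $\card{E} = \Omega(n/((1-\tau)^2\eps^4))$. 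Since $\max\{\tau, 1-\tau\} \geq 1/2$ for every $\tau \in [0,1]$, at least one of the two lower bounds yields $\card{E} = \Omega(n/\eps^4)$, which is the claim. (One could instead phrase the whole step via the total-variation distance between the laws of $Z$ under $U_n$ and under a Paninski-type mixture of perturbations, but this seems no easier.)

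The step I expect to be the real obstacle — and the reason the statement is offered only as a conjecture — is anti-concentration of $Z$ for an \emph{arbitrary} comparison graph $G$. Here $Z = \sum_{e\in E}\ind{e}$ is a sum of dependent indicators whose dependency pattern (two interact exactly when the edges share a vertex) is adversarial, and the relevant range spans two qualitatively different regimes. When $\EE[Z] = \card{E}/n = \omega(1)$ one wants a Gaussian-type approximation: writing $\ind{(u,v)} = \sum_{j\in[n]}\ind{S(u)=j}\ind{S(v)=j}$ exhibits $Z$ as a degree-$2$ multilinear form in the independent color vectors $\bigl(\ind{S(v)=\cdot}\bigr)_{v\in V}$, inviting Stein's method or a Chen--Stein dependency-graph estimate — but the contribution of high-degree vertices (a star being the extreme case, where the corresponding piece of $Z$ is exactly binomial and has its own anti-concentration) must be peeled off and controlled separately, uniformly over all degree sequences. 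When $\EE[Z] = O(1)$ the variable $Z$ is a small nonnegative integer and the Gaussian picture is useless; here one argues directly by Bonferroni inequalities that no threshold can be simultaneously correct on $U_n$ and on $P^\star$ — and conveniently this regime satisfies $\card{E} = O(n) = o(n/\eps^4)$, so it only needs to be ruled out, not optimized. Stitching these pieces into one clean, $G$-uniform anti-concentration lemma is the technical heart of a full proof, and is where I would spend the most effort.
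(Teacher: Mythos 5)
First, a point of calibration: the statement you were asked to prove is stated in the paper as a \emph{conjecture}, and the paper offers no proof of it --- only heuristic support, namely the two extreme cases (for a perfect matching, where the indicators are independent, the bound follows from the $\Theta(n/\eps^4)$ lower bound for estimating the bias of a coin; for the clique, it follows from the Paninski $\Omega(\sqrt n/\eps^2)$ sample lower bound since there $\card{E}=\Theta(\card{V}^2)$). So there is no ``paper proof'' for your argument to match, and your write-up should not be read as closing an open question.

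As a proposal, your plan is sensible and in fact refines the paper's own intuition: you reduce everything to the single statistic $Z$, use only Proposition~\ref{prop:expect} and Lemma~\ref{lem:var} for the first two moments under $U_n$ and under one balanced perturbation $P^\star$, and observe that when $\card{E}=o(n/\eps^4)$ the threshold $T$ lies within $o(\sqrt{\Var[Z]})$ of the mean on at least one of the two sides (since $\max\{\tau,1-\tau\}\ge 1/2$). But the step you flag is a genuine gap, not a routine technicality, and it is exactly the reason the statement is a conjecture rather than a theorem: you need a \emph{graph-uniform} anti-concentration statement of the form ``$\Pr[Z\ge \EE Z + o(\sigma)]>1/4$'' (and its lower-tail analogue under $P^\star$), equivalently a bound of the Kolmogorov distance between the laws of $Z$ under $U_n$ and $P^\star$ strictly below $1/2$, holding for \emph{every} dependency pattern the edge set can impose. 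Standard local-dependency CLTs (Stein/Chen--Stein) carry error terms governed by quantities like $c(G)$ and counts of larger subgraphs, and for adversarial degree sequences (e.g., a huge star plus a sea of isolated edges) these error terms can be of the same order as the deviation scale you care about, so the Gaussian heuristic does not automatically yield the $1/4$ mass you need; your suggestion to peel off high-degree vertices and treat the star piece as binomial is exactly the unproven stitching. The lattice/small-mean regime ($\EE[Z]=O(1)$) also needs a real argument: the two means differ by $O(\eps^2)$, but showing that no integer threshold separates the two laws with error $\le 1/4$ uniformly over all graphs is not delivered by Bonferroni alone (a naive per-sample coupling loses a factor $\card{V}\eps$, which is useless). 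So the proposal is a reasonable roadmap whose heart --- the anti-concentration lemma uniform over comparison graphs --- remains exactly as open as the conjecture itself; note also that the two regimes where you can complete it (independent edges, and the clique) are precisely the cases the paper already cites as evidence.
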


An argument to support this conjecture is the following: on the one hand, in the case of no dependencies at all (meaning, a new pair of samples is used for each comparison), the conjecture holds. This case is essentially equivalent to another known problem: the problem of finding the bias of a coin (as mentioned in~\cite{ACT20B}). In this problem, a coin is known to have probability of either $1/n$ or at least $(1+\eps^2)/n$, and we need to minimize the number of tosses to differ the two cases. It is known that at least $\Theta\left(n/\eps^4\right)$ tosses are needed.

On the other hand, for the other extreme -- a clique graph, having maximal dependency -- the existing lower bound for uniformity testing (say, the one in~\cite{Paninski08}) shows that $\card{E} = \Theta(\card{V}^2) = \Omega(n/\eps^4)$. Thus, we believe that in every case in between the conjecture should hold as well, even though proving it appears to be a bit tricky.

\subsubsection{Standard Processors}
\label{subsec:lower_standard_processors}
\paragraph{Centralized model.}
As a warm-up, we turn to the classic model, with no restrictions on the graph.
We can easily see that Conjecture~\ref{conj:edges} helps. Using the the first of our basic graph properties, gives $q = \card{V} \geq \sqrt{2\card{E}} = \Omega(\sqrt{n}/\eps2)$, which is the lower bound known to hold for any uniformity tester.\footnote{An interesting fact, just for the classic model, is that the third condition of Theorem~\ref{thm:structural} independently implies the correct lower bound: using corollary \ref{cor:graph}, whenever $\card{V} \leq \card{E}$, we have
\[
    \card{V} \geq \frac{2\card{E}^2}{c(G)} = \Omega\left( \frac{\sqrt{n}}{\eps^2} \right)
\]
}

\paragraph{Simultaneous model.}
In the simultaneous case, the main restriction is that of each player has her own comparison graph, as no comparisons can be made between two different players. This leads us to the following:
\begin{corollary}\label{cor:lb_simultaneous}
    Assuming Conjecture~\ref{conj:edges} holds, the number of samples per player of any collision-based uniformity tester in the simultaneous model is
    \[
        q' = \Omega\left(\frac{\sqrt{n}}{\sqrt{k} \cdot \eps^2}\right) .
    \]
\end{corollary}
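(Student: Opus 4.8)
The plan is to translate the defining structural constraint of the simultaneous model into a bound on the number of edges $\card{E}$ of the comparison graph, and then invoke Conjecture~\ref{conj:edges} to lower-bound $q'$. First I would recall that in the simultaneous model no comparison can involve samples held by two distinct players: if player $i$ is assigned the vertex set $V_i$ with $\card{V_i} = q'$, then the comparison graph decomposes as $G = \bigsqcup_{i=1}^{k} G_i$ with $G_i = (V_i, E_i)$, and every edge lies inside a single $V_i$. Consequently, since each $G_i$ is a simple graph on $q'$ vertices, we have $\card{E_i} \leq \binom{q'}{2} \leq (q')^2/2$, and summing over the $k$ players gives $\card{E} = \sum_{i=1}^k \card{E_i} \leq k (q')^2 / 2$.

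Next I would apply Conjecture~\ref{conj:edges}: any collision-based uniformity tester (in particular one running in the simultaneous model) must have $\card{E} = \Omega(n/\eps^4)$. Combining this with the bound above, $k (q')^2/2 \geq \card{E} = \Omega(n/\eps^4)$, so $(q')^2 = \Omega\!\left(n/(k\eps^4)\right)$, and taking square roots yields $q' = \Omega\!\left(\sqrt{n}/(\sqrt{k}\,\eps^2)\right)$, as claimed. This matches the upper bound of Corollary~\ref{cor:simultaneous} up to constants, so the disjoint-cliques comparison graph used there is optimal for the model (conditional on the conjecture).

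There is essentially no technical obstacle here beyond the conjecture itself: the only ingredients are the edge-count inequality, which follows from the graph being simple and the partition of vertices among players, and the conjectured lower bound on $\card{E}$. I would remark, as the surrounding text does, that without the conjecture one still obtains an unconditional statement restricted to testers whose correctness is certified by Theorem~\ref{thm:structural}, since its first condition $\card{E} \geq 4n/(\tau^2\eps^4)$ already supplies $\card{E} = \Omega(n/\eps^4)$ directly; the same computation then goes through verbatim.
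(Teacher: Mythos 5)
Your proposal is correct and follows essentially the same route as the paper: decompose the comparison graph into the per-player components $G_i$, bound $\card{E} \leq k (q')^2/2$ via the simple-graph inequality on each component, and combine with Conjecture~\ref{conj:edges} to get $q' = \Omega\left(\sqrt{n}/(\sqrt{k}\,\eps^2)\right)$. The only cosmetic difference is that the paper works with $q' = \max_i \card{V_i}$ rather than assuming all players hold exactly $q'$ samples, which changes nothing in the argument.
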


\begin{proof}
      We start by noting that in this model, our sample complexity is the maximum amount of samples one player draws. Formally, we write the comparison graph as the union of $k$ disjoint parts: $G = \bigsqcup_{i\in[k]} G_i$, where $G_i = (V_i, E_i)$, and so the complexity measure is simply $q' = max_{i\in[k]} \card{V_i}$.

      We note, however, that $\card{E_i} \leq \card{V_i}^2 / 2$ for each component $G_i$, and therefore:
      \[
        \card{E} = \sum_{i\in[k]} \card{E_i} \leq \sum_{i\in[k]} \card{V_i}^2 / 2 \leq k\cdot max_{i\in[k]} \card{V_i}^2/2 = k\cdot q'^2/2 .
      \]
      Plugging in Conjecture~\ref{conj:edges}, we get
      \[
        q' \geq \sqrt{\frac{2\card{E}}{k}} = \Omega\left(\frac{\sqrt{n}}{\sqrt{k} \cdot \eps^2}\right) ,
      \]
      which ends the proof.
\end{proof}

\paragraph{Asymmetric cost model.}
The more elaborate version of the asymmetric-cost model also impose similar limitations, where the difference comes from the generalized definition of the complexity measure.
\begin{corollary}\label{cor:lb_simultaneous_asymmetric}
      We observe the asymmetric-cost simultaneous model, with sampling rate vector $(R_1,\dots,R_k)$. If Conjecture\ref{conj:edges} holds, then any collision-based uniformity tester in this model must use sampling time of
      \[
        t = \Omega\left(\frac{\sqrt{n}}{\eps^2\lnorm{2}{R}}\right) .
      \]
\end{corollary}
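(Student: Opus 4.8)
The plan is to mirror the proof of Corollary~\ref{cor:lb_simultaneous}, adjusting only for the asymmetric complexity measure. Recall that in the asymmetric-cost model, within sampling time $t$ player $i$ draws $q_i = \lfloor R_i t\rfloor \le R_i t$ samples, so the comparison graph decomposes as $G = \bigsqcup_{i\in[k]} G_i$ with $\card{V_i} = q_i$. The complexity measure is $t$ itself, and we want to show that no collision-based tester can succeed with $t = o(\sqrt{n}/(\eps^2\lnorm{2}{R}))$.

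First I would bound the total number of edges from above in terms of $t$ and $\lnorm{2}{R}$. Using item~1 of Lemma~\ref{lem:graph} componentwise, $\card{E_i} \le \card{V_i}^2/2 = q_i^2/2 \le R_i^2 t^2/2$. Summing over all players,
\[
  \card{E} = \sum_{i\in[k]} \card{E_i} \le \frac{t^2}{2}\sum_{i\in[k]} R_i^2 = \frac{t^2 \lnorm{2}{R}^2}{2}.
\]
Then I would invoke Conjecture~\ref{conj:edges}, which gives $\card{E} = \Omega(n/\eps^4)$ for any collision-based uniformity tester. Combining the two, $t^2 \lnorm{2}{R}^2/2 \ge \card{E} = \Omega(n/\eps^4)$, and rearranging yields
\[
  t = \Omega\!\left(\frac{\sqrt{n}}{\eps^2 \lnorm{2}{R}}\right),
\]
as claimed. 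This matches the upper bound of Corollary~\ref{cor:simultaneous_asymmetric} up to constants, so the disjoint-(weighted-)cliques graph is essentially optimal for this model.

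There is no real obstacle here beyond the conjecture itself: the argument is a one-line edge count plus Cauchy–Schwarz-free bookkeeping (the $\ell_2$ norm of $R$ appears simply because $\sum_i q_i^2 \le t^2\sum_i R_i^2$). The only subtlety worth a remark is the floor: $q_i = \lfloor R_i t\rfloor \le R_i t$, so the inequality goes in the right direction and costs nothing. As with the other conditional impossibility results in this section, everything rests on Conjecture~\ref{conj:edges}; were it proven, this corollary would become an unconditional lower bound for all collision-based testers in the asymmetric-cost model.
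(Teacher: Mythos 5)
Your proof is correct and follows essentially the same route as the paper's: bound each $\card{E_i}$ by $\card{V_i}^2/2 \le R_i^2 t^2/2$ via Lemma~\ref{lem:graph}, sum to get $\card{E} \le t^2\lnorm{2}{R}^2/2$, and combine with Conjecture~\ref{conj:edges} to conclude $t = \Omega\left(\sqrt{n}/(\eps^2\lnorm{2}{R})\right)$. The remark about the floor is a harmless refinement the paper does not bother with.
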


\begin{proof}
      We again use $G = \bigsqcup_{i\in[k]} G_i$, where $G_i = (V_i, E_i)$.
      However, the complexity measure is the time $t$ in which player $i$ with rate $R_i$ can obtain $\card{V_i} = q_i = t\cdot R_i$ samples.

      Again, using the trivial edges-vertices inequality over each component $G_i$, we get
      \[
        \forall i.\ \card{E_i} \leq \card{V_i}^2 / 2 = t^2\cdot R_i^2 / 2
      \]
      and summing all together, we get
      \[
        \card{E} = \sum_{i\in[k]} \card{E_i} \leq \sum_{i\in[k]} t^2\cdot R_i^2 / 2 \leq t^2 \lnorm{2}{R}^2 / 2 .
      \]
      Joined with Conjecture~\ref{conj:edges}, it concludes the proof
      \[
        t \geq \sqrt{\frac{2\card{E}}{\lnorm{2}{R}^2}} = \Omega\left(\frac{\sqrt{n}}{\eps^2\lnorm{2}{R}}\right) .
      \]
\end{proof}

\subsubsection{Memory-Constrained Processors}
\label{subsec:lower_streaming_processors}
Here we use a slightly more sophisticated argument, to show that the memory constraint can also be translated to graph notation. We emphasize that our desire is to show limitations of our framework, and so we relax the model and assume that comparisons are made on designated memory cells, in which we can only store $m'$ element names.
In order to count collisions \emph{accurately}, we cannot expect to compress this data further. For example, $m' = o(\sqrt{n})$ and samples are drawn from the uniform distribution $P = U_n$, with $99\%$ we need to write in our memory $m'$ different elements, and this information cannot be compressed.

We go on to show how the memory constraint translates well:
\begin{claim}\label{claim:streaming_avg_deg}
      Let us assume a constrained machine can only store $m'$ elements at a time, and it is able to accurately count collisions on a comparison graph $G = (V,E)$. Then it must be the case that
      \[
        \card{E} \leq m' \cdot \card{V}
      \]
\end{claim}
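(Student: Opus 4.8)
The plan is to reinterpret the stream as the arrival, one per time step, of the $\card{V}$ vertices of $G$ (each carrying its sample), and then to build an injection from $E$ into the set of pairs of the form (time step, occupied memory cell). Since there are $\card{V}$ time steps and only $m'$ cells, such a set has size at most $m'\cdot\card{V}$, which is exactly the bound we are after. Concretely, for each edge I will charge it to one specific cell occupied at one specific time step, in a way that no two edges receive the same charge.

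First I would fix notation: let $a(v)$ denote the arrival time of vertex $v$ (so arrival times are distinct, one vertex per step), and record the two facts the model gives us. In the relaxed streaming model the machine keeps $m'$ cells, each holding at most one element name and no further compression of stored names is allowed, so in order to evaluate the collision indicator $\ind{e}$ of an edge $e=(u,v)$ accurately the machine must hold both $S(u)$ and $S(v)$ in cells at some common time step $w(e)$, and necessarily $w(e)\ge\max\{a(u),a(v)\}$ since a sample cannot be compared before it has arrived. The second fact is the one-pass observation: because each sample is read exactly once and an evicted element name can never be recovered, the set of time steps during which a fixed vertex $u$ occupies a cell is a \emph{contiguous} interval beginning at $a(u)$.

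With these in hand I would orient each edge $e=(u,v)$ from its later-arriving endpoint to its earlier one; say $a(u)<a(v)$. At the comparison step we know $u$ sits in a cell at time $w(e)\ge a(v)>a(u)$, so by contiguity $u$ already occupies a cell at time $a(v)$. Now count by late endpoint: for a fixed $v$, the early endpoints of the edges oriented into $v$ are distinct vertices, each occupying a distinct cell at the single time step $a(v)$ (distinct since a cell holds at most one sample), so $v$ receives at most $m'$ edges. Summing over all vertices counts every edge exactly once at its late endpoint, giving
\[
    \card{E} \;=\; \sum_{v\in V}\#\{\,e=(u,v)\in E : a(u)<a(v)\,\} \;\le\; \sum_{v\in V} m' \;=\; m'\cdot\card{V}.
\]
The step I expect to be the crux is the one-pass / contiguity argument: justifying that the two endpoints of an edge must be held in memory \emph{simultaneously} and that a vertex's residency cannot be split into several intervals. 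Both follow from the model being one-pass together with the relaxation that forbids compressing stored element names, so this is precisely where the model assumptions enter and is worth stating with care; the counting itself is then immediate.
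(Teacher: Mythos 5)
Your proof is correct and follows essentially the same route as the paper: the paper also orders the vertices by arrival, observes that when $v_t$ arrives only $m'$ of the earlier samples can reside in memory, and charges each edge to its later endpoint to get $\card{E}\le m'\cdot\card{V}$. Your write-up simply makes explicit the contiguity-of-residency step that the paper leaves implicit, which is a fine (and slightly more careful) presentation of the identical argument.
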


\begin{proof}
    w.l.o.g let us name the vertices, or samples, by their order in the stream $V = \set{v_1,v_2,\dots,v_s}$, and w.l.o.g let us think of the edges in $E$ as ordered pairs (We only write $(i.j)\in E$ for pairs where $i < j$).

    Now, we note that at time $t$, upon processing the sample $v_t$, the memory can only store $m'$ samples from the set $s(v_1),\dots,S(v_{t-1})$. This means that the number of edges in $E$ of the form $(v_i,v_t)$ is at most $m'$. Now, we can count our (ordered) edges using the second item:
    \[
        \card{E} = \sum_{v_j\in V} \card{\set{(u,v_j) \in E \st u\in V}} \leq \sum_{v_t\in V} m' = \card{V}\cdot m'
    \]
    which completes the proof.
\end{proof}

\paragraph{Centralized model with memory constraints.}
We next apply this claim to give similar lower bounds in the following models.
\begin{corollary}\label{cor:lb_streaming}
  Assume Conjecture~\ref{conj:edges} holds. In order to test uniformity in a memory-constrained machine, using a collision-based tester and storage of up to $m' = \Theta\left(m/\log(n)\right)$ samples at any given time, one must have
  \[
    q \geq \Omega\left(\frac{n\log(n)}{m\eps^4}\right) ,
  \]
  where $q$ is the total amount of samples used by the algorithm.
\end{corollary}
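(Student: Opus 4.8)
The plan is to simply chain together the conditional edge lower bound with the memory-to-graph translation already established, with no new ideas needed. First I would invoke Conjecture~\ref{conj:edges}: since the algorithm in question is a collision-based uniformity tester with error $1/4$, its comparison graph $G = (V,E)$ must satisfy $\card{E} = \Omega\left(n/\eps^4\right)$. Next I would apply Claim~\ref{claim:streaming_avg_deg} to this very graph $G$: because the constrained machine stores at most $m'$ samples at any moment and counts collisions accurately over the stored element names, we have $\card{E} \le m' \cdot \card{V}$. Since $q = \card{V}$ is exactly the total number of samples, this rearranges to $q \ge \card{E}/m'$.

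Combining the two bounds gives $q \ge \card{E}/m' = \Omega\left(n/(\eps^4 m')\right)$. Finally I would substitute the relation $m' = \Theta\left(m/\log n\right)$ between the stored-sample budget and the memory (bit) budget, which yields $q = \Omega\left(n\log n/(m\eps^4)\right)$, as claimed.

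There is essentially no technical obstacle here beyond the two hypotheses that are already in force at this point in the paper: the corollary is conditional on Conjecture~\ref{conj:edges}, and Claim~\ref{claim:streaming_avg_deg} relies on the modelling relaxation that collisions are counted over explicitly stored element names (so that the $m'$-cell budget genuinely caps, for each stream position $v_t$, the number of earlier vertices it is compared against). Granting both, the corollary follows in a line.
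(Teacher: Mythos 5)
Your proposal is correct and follows exactly the paper's own argument: combine Claim~\ref{claim:streaming_avg_deg} (giving $\card{E} \leq m'\cdot\card{V}$) with Conjecture~\ref{conj:edges} (giving $\card{E} = \Omega(n/\eps^4)$) and substitute $m' = \Theta(m/\log n)$. Nothing is missing.
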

\begin{proof}
    As our measure complexity is the total amount of samples, $q = \card{V}$, we can combine the claim above with Conjecture~\ref{conj:edges} to get:
    \[
        q = \card{V} \geq \frac{\card{E}}{m'} = \Omega\left(\frac{n\log(n)}{m\eps^4}\right) ,
    \]
    which ends the proof.
\end{proof}

\paragraph{Simultaneous model with memory constraints.}
We observe the combined model of simultaneous model with memory-constrained machines. We again restrict ourselves to a specific framework: each player uses her own graph $G_i = (V_i,E_i)$, where memory is allocated for sampled elements, and then all players send a short message to the referee. The entire comparison graph the algorithm is based on is $G = \bigsqcup_i G_i$.
\begin{corollary}\label{cor:lb_simultaneous_streams}
  Any collision-based uniformity tester in a simultaneous model of $k$ machines that can store up to $m'$ samples each, must use
  \[
    q' = \Omega\left( \frac{n}{k \eps^4} \cdot \max \set{\frac{\log(n)}{m},\frac{\sqrt{k}\eps^2}{\sqrt{n}}} \right)
  \]
  samples per player, assuming Conjecture~\ref{conj:edges} holds .
\end{corollary}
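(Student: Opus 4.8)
The plan is to compose the three ingredients enumerated at the start of Section~\ref{subsec:lower_streaming_processors}: the edge lower bound of Conjecture~\ref{conj:edges}, the memory-to-graph translation of Claim~\ref{claim:streaming_avg_deg}, and the per-component decomposition already used in Corollaries~\ref{cor:lb_simultaneous} and~\ref{cor:lb_streaming}. First I would write the comparison graph underlying the tester as $G = \bigsqcup_{i\in[k]} G_i$ with $G_i = (V_i,E_i)$, and set $q' = \max_{i\in[k]}\card{V_i}$, the complexity measure of this model. Each $G_i$ is the comparison graph of a single memory-constrained processor able to store at most $m'$ sampled names, so Claim~\ref{claim:streaming_avg_deg} applies to it and gives $\card{E_i} \leq m'\card{V_i} \leq m'q'$; on the other hand the trivial simple-graph inequality (item~1 of Lemma~\ref{lem:graph}) gives $\card{E_i} \leq \card{V_i}^2/2 \leq q'^2/2$. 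Hence $\card{E_i} \leq \min\set{m'q',\ q'^2/2}$ for every $i$, and summing over the $k$ components, $\card{E} \leq k\cdot\min\set{m'q',\ q'^2/2}$.

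Next I would invoke Conjecture~\ref{conj:edges}, which forces $\card{E} = \Omega(n/\eps^4)$ for any collision-based uniformity tester. Combined with the bound above, this yields $k\cdot\min\set{m'q',\ q'^2/2} = \Omega(n/\eps^4)$; since $k>0$ this is $\min\set{km'q',\ kq'^2/2} = \Omega(n/\eps^4)$, which is equivalent to \emph{both} $km'q' = \Omega(n/\eps^4)$ \emph{and} $kq'^2/2 = \Omega(n/\eps^4)$. Solving the first for $q'$ gives $q' = \Omega\!\left(n/(km'\eps^4)\right)$, and using $m' = \Theta(m/\log n)$ this becomes $q' = \Omega\!\left(n\log(n)/(km\eps^4)\right)$; solving the second gives $q' = \Omega\!\left(\sqrt{n/(k\eps^4)}\right) = \Omega\!\left(\sqrt{n}/(\sqrt{k}\eps^2)\right)$.

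Since $q'$ must satisfy both lower bounds at once, it is $\Omega$ of their maximum, and factoring $n/(k\eps^4)$ out of both terms rewrites this maximum as $\frac{n}{k\eps^4}\cdot\max\set{\log(n)/m,\ \sqrt{k}\eps^2/\sqrt{n}}$, which is exactly the claimed bound. I do not anticipate a genuine obstacle here: the argument is a direct assembly of already-established facts, mirroring the two preceding corollaries. The only points needing a little care are verifying that Claim~\ref{claim:streaming_avg_deg} applies verbatim to each $G_i$ (each player being a standalone memory-constrained processor on its own sample stream), tracking the $m' \leftrightarrow m/\log n$ substitution, and observing that ``the $\min$ of two quantities is $\Omega(n/\eps^4)$'' is equivalent to ``each of the two quantities is $\Omega(n/\eps^4)$'', which is precisely what lets us extract both lower bounds on $q'$ rather than only one.
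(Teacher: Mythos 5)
Your proposal is correct and follows essentially the same route as the paper: decompose $G$ into the $k$ per-player components, bound each $\card{E_i}$ both by the memory constraint (Claim~\ref{claim:streaming_avg_deg}) and by the trivial $\card{V_i}^2/2$ inequality, and combine with Conjecture~\ref{conj:edges} to extract the two lower bounds whose maximum is the stated expression. The only cosmetic difference is that the paper obtains the $\sqrt{n}/(\sqrt{k}\eps^2)$ term by citing Corollary~\ref{cor:lb_simultaneous} and the other term by averaging $q' \geq \card{E}/(m'k)$, whereas you re-derive both inline; the content is identical.
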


\begin{proof}
  We start be re-writing the desired expression:
  \[
     q' = \Omega\left( \max \set{\frac{\sqrt{n}}{\sqrt{k} \eps^2}, \frac{n\log(n)}{m k \eps^4}} \right)
  \]

  We show the two lower bounds separately, resulting in a lower bound of the maximum term.

  Indeed, the first lower bound can be derived directly from Corollary~\ref{cor:lb_simultaneous}:
  \[
    q' = \Omega\left(\sqrt{n/(k\eps^2)}\right)
  \]

  For the second lower bound, we extend Corollary~\ref{cor:lb_streaming} instead. As each $G_i$ is done by a machine with memory constraints, we apply Claim~\ref{claim:streaming_avg_deg} to player $i$ and get $\card{V_i} \geq \card{E_i}/m'$. We recall that our measure complexity is in fact $q' := \max_i \card{V_i}$, and as the maximum is greater than the average, we get:
  \[
    q' \geq \frac{\sum_i \card{V_i}}{k} \geq \frac{\sum_i \card{E_i}}{m'\cdot k} = \frac{\card{E}}{m'\cdot k}
  \]
  And plugging in Conjecture~\ref{conj:edges} on the entire graph $G=(V,E)$, we get
  \[
    q' = \Omega\left( \frac{n}{m' \cdot k \eps^4} \right) = \Omega\left( \frac{n\log(n)}{m \cdot k \eps^4} \right),
  \]
  concluding the proof.
\end{proof}

\subsection{Discussion}
\label{subsec:discussion}
We point out to the fact that in all $4$ models (as well as the classic model), the limitation of this method coincide with the upper bounds we obtained in the previous section. This does not come as a surprise, since the choice of the ``right'' comparison graph is easily made once the constraints each model imposes are understood. The guiding rule is rather straightforward: compare all pairs that can be compared.

It is more interesting, though, to see how well the collision-based testers perform compared to known impossibility results (ones which apply for any uniformity tester, and not only collision-based). It turns out for the most part, these testers compete well with others.

For the classical model, as already established in~\cite{DGPP17}, collision-based testing is in fact optimal in all parameters.

For the two simultaneous models (with no memory constraints), the collision-based testers perform optimally in terms of $n,k$ (or $n, (R_1,\dots,R_k)$ in the assymetric case). To the best of our knowledge, the only testers for these models that consider multiple samples per processor are the ones of~\cite{FMO18}. The same asymptotic sample complexity is obtained in both papers (for both models), but there are two non-trivial differences. On the downside, the new testers use $\log\left(\Theta(1/\eps^4)\right)$ bits per communication, instead of a single one used in~\cite{FMO18}. On the upside, the new testers work for the full range of the parameter $k$ (the number of players), whereas the previous results excluded extreme values: e.g., in the symmetric case it only works for $c/\eps^4 \leq k \leq c'n\eps^4$.

Lower bounds for these two models are shown in~\cite{MMO19}, both for a single bit, and the general case of $\ell$-bit messages. While the tester of~\cite{FMO18} is an optimal one-bit protocol, ours is not known to be optimal $\ell$-bit protocol.
This is true as the aforementioned lower bound weakens by a factor $2^{\ell}$ for the longer $\ell$-bit messages. In our case, we have $2^{\ell} = \Theta(1/\eps^4)$, which means there is a gap of $\eps^4$ between the general lower bound, and the optimal collision-based tester we obtain.\footnote{Despite the gap, no better tester is known for this amount of bits and $q > 1$ samples per player. It still might be the case that collision-based testers are optimal in this regime}

In the streaming model our tester attain the same sample complexity as the best known tester (that of~\cite{DGKR19}), but for a wider range for the parameter $m$. A matching lower bound can also be found in~\cite{DGKR19}, but only for a more restricted range of the parameter $m$, whereas for the general case they give a weaker lower bound, which leaves an $\eps^2$ gap between the general case and the optimal collision-based tester.

To the best of our knowledge, the simultaneous model with memory constraints was never considered in the context of distribution testing, and therefore there are no prior results. However, we conjure that similarly to before, the collision-based tester achieves optimal sample complexity in parameters $n, k, m$, with a possible $\poly(\eps)$ gap that would pop, as before, due to the use of longer messages.


\begin{thebibliography}{10}

\bibitem{ABIS19}
Jayadev Acharya, Sourbh Bhadane, Piotr Indyk, and Ziteng Sun.
\newblock Estimating entropy of distributions in constant space.
\newblock In {\em Advances in Neural Information Processing Systems}, pages
  5162--5173, 2019.

\bibitem{ACFT19}
Jayadev Acharya, Cl{\'e}ment Canonne, Cody Freitag, and Himanshu Tyagi.
\newblock Test without trust: Optimal locally private distribution testing.
\newblock In {\em The 22nd International Conference on Artificial Intelligence
  and Statistics}, pages 2067--2076, 2019.

\bibitem{ACHST20}
Jayadev Acharya, Cl{\'e}ment~L Canonne, Yanjun Han, Ziteng Sun, and Himanshu
  Tyagi.
\newblock Domain compression and its application to randomness-optimal
  distributed goodness-of-fit.
\newblock In {\em Conference on Learning Theory}, pages 3--40, 2020.

\bibitem{ACLST20}
Jayadev Acharya, Cl{\'e}ment~L Canonne, Yuhan Liu, Ziteng Sun, and Himanshu
  Tyagi.
\newblock Interactive inference under information constraints.
\newblock {\em arXiv preprint arXiv:2007.10976}, 2020.

\bibitem{ACT20}
Jayadev Acharya, Cl{\'e}ment~L Canonne, and Himanshu Tyagi.
\newblock Distributed signal detection under communication constraints.
\newblock In {\em Conference on Learning Theory}, pages 41--63. PMLR, 2020.

\bibitem{ACT20A}
Jayadev Acharya, Cl{\'e}ment~L Canonney, and Himanshu Tyagiz.
\newblock Inference under information constraints i: Lower bounds from
  chi-square contraction.
\newblock {\em IEEE Transactions on Information Theory}, 2020.

\bibitem{ACT20B}
Jayadev Acharya, Cl{\'e}ment~L Canonney, and Himanshu Tyagiz.
\newblock Inference under information constraints ii: Communication constraints
  and shared randomness.
\newblock {\em IEEE Transactions on Information Theory}, 2020.

\bibitem{AJM20}
Kareem Amin, Matthew Joseph, and Jieming Mao.
\newblock Pan-private uniformity testing.
\newblock In {\em Conference on Learning Theory}, pages 183--218. PMLR, 2020.

\bibitem{AMN18}
Alexandr Andoni, Tal Malkin, and Negev~Shekel Nosatzki.
\newblock Two party distribution testing: Communication and security.
\newblock In {\em 46th International Colloquium on Automata, Languages, and
  Programming (ICALP 2019)}. Schloss Dagstuhl-Leibniz-Zentrum fuer Informatik,
  2019.

\bibitem{BFRSW00}
Tugkan Batu, Lance Fortnow, Ronitt Rubinfeld, Warren~D. Smith, and Patrick
  White.
\newblock Testing that distributions are close.
\newblock In {\em 41st Annual Symposium on Foundations of Computer Science,
  {FOCS} 2000, 12-14 November 2000, Redondo Beach, California, {USA}}, pages
  259--269, 2000.

\bibitem{BLR93}
Manuel Blum, Michael Luby, and Ronitt Rubinfeld.
\newblock Self-testing/correcting with applications to numerical problems.
\newblock {\em Journal of computer and system sciences}, 47(3):549--595, 1993.

\bibitem{Can15}
Cl{\'e}ment~L. Canonne.
\newblock A survey on distribution testing: Your data is big. but is it blue?
\newblock {\em Electronic Colloquium on Computational Complexity (ECCC)},
  22:63, 2015.

\bibitem{DGKR19}
Ilias Diakonikolas, Themis Gouleakis, Daniel~M Kane, and Sankeerth Rao.
\newblock Communication and memory efficient testing of discrete distributions.
\newblock {\em arXiv preprint arXiv:1906.04709}, 2019.

\bibitem{DGPP17}
Ilias Diakonikolas, Themis Gouleakis, John Peebles, and Eric Price.
\newblock Sample-optimal identity testing with high probability.
\newblock {\em CoRR}, abs/1708.02728, 2017.

\bibitem{DGPP16}
Ilias Diakonikolas, Themis Gouleakis, John Peebles, and Eric Price.
\newblock Collision-based testers are optimal for uniformity and closeness.
\newblock {\em Chicago Journal OF Theoretical Computer Science}, 1:1--21, 2019.

\bibitem{DK16}
Ilias Diakonikolas and Daniel~M. Kane.
\newblock A new approach for testing properties of discrete distributions.
\newblock In {\em {IEEE} 57th Annual Symposium on Foundations of Computer
  Science, {FOCS} 2016, 9-11 October 2016, Hyatt Regency, New Brunswick, New
  Jersey, {USA}}, pages 685--694, 2016.

\bibitem{FMO18}
Orr Fischer, Uri Meir, and Rotem Oshman.
\newblock Distributed uniformity testing.
\newblock In {\em Proceedings of the 2018 ACM Symposium on Principles of
  Distributed Computing}, PODC '18, New York, NY, USA, 2018. ACM.

\bibitem{GKR20}
Sumegha Garg, Pravesh~K. Kothari, and Ran Raz.
\newblock {Time-Space Tradeoffs for Distinguishing Distributions and
  Applications to Security of Goldreich’s PRG}.
\newblock In {\em Approximation, Randomization, and Combinatorial Optimization.
  Algorithms and Techniques (APPROX/RANDOM 2020)}, pages 21:1--21:18, 2020.

\bibitem{Goldreich17}
Oded Goldreich.
\newblock {\em Introduction to Property Testing}.
\newblock Cambridge University Press, 2017.

\bibitem{G16}
Oded Goldreich.
\newblock The uniform distribution is complete with respect to testing identity
  to a fixed distribution.
\newblock In {\em Computational Complexity and Property Testing}, pages
  152--172. Springer, 2020.

\bibitem{GGR98}
Oded Goldreich, Shari Goldwasser, and Dana Ron.
\newblock Property testing and its connection to learning and approximation.
\newblock {\em Journal of the ACM (JACM)}, 45(4):653--750, 1998.

\bibitem{GR00}
Oded Goldreich and Dana Ron.
\newblock On testing expansion in bounded-degree graphs.
\newblock {\em Electronic Colloquium on Computational Complexity {(ECCC)}},
  7(20), 2000.

\bibitem{MMO19}
Uri Meir, Dor Minzer, and Rotem Oshman.
\newblock Can distributed uniformity testing be local?
\newblock In {\em Proceedings of the 2019 ACM Symposium on Principles of
  Distributed Computing}, pages 228--237. ACM, 2019.

\bibitem{NMP20}
Varun Narayanan, Manoj Mishra, and Vinod~M Prabhakaran.
\newblock Private two-terminal hypothesis testing.
\newblock {\em arXiv preprint arXiv:2005.05961}, 2020.

\bibitem{Paninski08}
L.~Paninski.
\newblock A coincidence-based test for uniformity given very sparsely sampled
  discrete data.
\newblock {\em IEEE Transactions on Information Theory}, 54(10):4750--4755,
  2008.

\bibitem{RS96}
Ronitt Rubinfeld and Madhu Sudan.
\newblock Robust characterizations of polynomials with applications to program
  testing.
\newblock {\em SIAM Journal on Computing}, 25(2):252--271, 1996.

\bibitem{VV17}
Gregory Valiant and Paul Valiant.
\newblock An automatic inequality prover and instance optimal identity testing.
\newblock {\em SIAM Journal on Computing}, 46(1):429--455, 2017.

\end{thebibliography}

\appendix

\section{Is every edge a blessing?}
\label{sec:counter_example}
In our results the guiding rule in any model was ``compare every pair the model allows you to''. However, this intuitive rule was not formalized, and it could be for a good reason. Earlier, in Section~\ref{subsec:discussion}, we posed the question of whether adding another edge (another comparison) to the entire average, can hurt our assessment of the collision probability, and thus weaken our tester.

We leave it as an open question whether the actual reliability of the tester can only get better when adding edges (or rather, under which circumstances it is true). For now, we do point out a pathological yet surprising example where a comparison graph $H$ that fills the requirements of Theorem \ref{thm:structural} for some constant threshold $\tau$, is a subgraph of $G$ that is not good for any constant threshold $\tau'$.

Obviously, the number of edges only goes up when adding edges, so for such an example we should focus on breaking the third condition in Theorem \ref{thm:structural}, by making the supergraph $G$ one with a large amount of $2$-path. Indeed, this is possible:
Consider $H$ to be the simple cycle of size $b\cdot\frac{n}{\eps^4}$, for some constant $b$. It holds that $c(H) = 2\card{E_H} = 2\card{V_H} = 2b\cdot\frac{n}{\eps^4}$. Taking large enough $b$ (as a function of $\tau$) would give us $H$ that fills the three conditions of the theorem, as $c(H) / \card{E_H}^2 = 1 / (2\card{E_H}) = \frac{\eps^4}{2n}$. So indeed $(H,\tau)$ is a good tester.

The supergraph $G$ of $H$ would be this: we choose one the vertex $v$ in $H$, and connect it with all other edges. Now $G$'s edges are either part of a star or a cycle. $\card{E_G} = 2\card{V_G}-3 \leq 2\card{V_G}$, but the star gives us $c(G) \geq \card{V_G}^2/3$ (Solely by the 2-paths with $v$ as the middle vertex). This means that
\[
    \frac{c(G)}{\card{E_G}^2} \geq  \frac{\card{V_G}^2/3}{(2\card{V_G})^2} = \frac{1}{12}
\]
So for \emph{any} constant threshold $\tau'$ and asymptotic values of $n, \eps$, Theorem \ref{thm:structural} cannot be applied for $(G,\tau')$.

\end{document}